\documentclass[acmsmall]{acmart}

\usepackage{hyperref}
\usepackage{xspace}
\usepackage{graphicx}

\usepackage{amssymb, amsthm, amsfonts, latexsym, amsmath}
\usepackage{mdwlist}
\usepackage{algorithm}
\usepackage{booktabs}
\usepackage{framed}
\usepackage{url}
\usepackage[utf8]{inputenc} 

\fancyhead{}
\settopmatter{printacmref=false, printfolios=true}

\newif\ifappendix
\appendixtrue


\newcommand{\eat}[1]{}
\newcommand{\pages}[1]{} 
\newcommand{\histore}{Histor$\epsilon$\xspace}
\newcommand{\Sys}{\mbox{PSC}\xspace}
\newcommand{\sys}{PSC\xspace}
\newcommand{\Paragraph}[1]{\vspace{.1in}\noindent{\bf #1.}\quad}
\newcommand{\rgets}{ \overset{\lower.5em\hbox{$\text{\tiny R}$}}{\leftarrow}\xspace }
\newcommand{\f}[2][]{$\mathcal{F}_{\textsc{#2}}^{\textsc{#1}}$}

\newcommand{\hybrid}{(\f{BC}, \f{SKGD}, \f{ZKP-S}, \f{ZKP-RRD}, \f{ZKP-DL})-hybrid }
\newcommand{\real}{$\textsc{REAL}_{\pi_{\text{\sys}},A,Z}(k)$\xspace}
\newcommand{\ideal}{$\textsc{IDEAL}_{\mathcal{F}_{\text{\sys},S,Z}}(k)$\xspace}


\newcommand{\eg}{\emph{e.g.}\xspace}

\newcommand{\ie}{\emph{i.e.}\xspace}
\newcommand{\etc}{\emph{etc.}\xspace}

\newcommand{\numinput}{d\xspace}
\newcommand{\numcompute}{m\xspace}
\newcommand{\numcounters}{b\xspace}
\newcommand{\numnoise}{n\xspace}

\newcommand{\dataparties}{Data Parties\xspace}
\newcommand{\dataparty}{Data Party\xspace}

\newcommand{\dparty}{DP\xspace}
\newcommand{\dps}{DPs\xspace}
\newcommand{\dpmath}[1]{\textrm{DP}_{#1}}
\newcommand{\computationparties}{Computation Parties\xspace}
\newcommand{\computationparty}{Computation Party\xspace}
\newcommand{\cp}{CP\xspace}
\newcommand{\cps}{CPs\xspace}
\newcommand{\cpmath}[1]{\textrm{CP}_{#1}}


\hypersetup{
     bookmarks=true,
     colorlinks=true,
     linkcolor=black,
     citecolor=black,
     filecolor=black,
     urlcolor=black,
     pdfauthor = {Ellis Fenske, Akshaya Mani, Aaron Johnson, and Micah Sherr},
     pdftitle = {Accountable Private Set Cardinality for Distributed Measurement},
     pdfkeywords = {secure computation, privacy-preserving measurement}
}

\begin{document}
\title{Accountable Private Set Cardinality for Distributed Measurement}
\author{Ellis Fenske}
\affiliation{%
    \institution{U.S.~Naval Academy}
    \city{Annapolis}
    \country{USA}}
\email{fenske@usna.edu}
\author{Akshaya Mani}
\affiliation{%
    \institution{Georgetown University}
    \city{Washington}
    \country{USA}}
\affiliation{%
	\institution{University of Waterloo}
	\city{Waterloo}
	\country{Canada}}
\email{akshaya.mani@uwaterloo.ca}
\author{Aaron Johnson}
\affiliation{%
    \institution{U.S.~Naval Research Laboratory}
    \city{Washington}
    \country{USA}}
\email{aaron.m.johnson@nrl.navy.mil}
\author{Micah Sherr}
\affiliation{%
    \institution{Georgetown University}
    \city{Washington}
    \country{USA}}
\email{msherr@cs.georgetown.edu}
\keywords{secure computation; privacy-preserving measurement}

\begin{CCSXML}
<ccs2012>
<concept>
<concept_id>10002978.10002979</concept_id>
<concept_desc>Security and privacy~Cryptography</concept_desc>
<concept_significance>500</concept_significance>
</concept>
<concept>
<concept_id>10002978.10003014</concept_id>
<concept_desc>Security and privacy~Network security</concept_desc>
<concept_significance>300</concept_significance>
</concept>
<concept>
<concept_id>10002978.10003014.10003015</concept_id>
<concept_desc>Security and privacy~Security protocols</concept_desc>
<concept_significance>500</concept_significance>
</concept>
</ccs2012>
\end{CCSXML}

\ccsdesc[500]{Security and privacy~Cryptography}
\ccsdesc[300]{Security and privacy~Network security}
\ccsdesc[500]{Security and privacy~Security protocols}

\acmJournal{TOPS}
\setcopyright{licensedusgovmixed}
\copyrightyear{2022}
\acmYear{2022} \acmVolume{25} \acmNumber{4} \acmArticle{25} \acmMonth{5} \acmPrice{15.00}\acmDOI{10.1145/3477531}

\begin{abstract}
We introduce cryptographic protocols for securely and efficiently computing the cardinality of set
union and set intersection. Our private set-cardinality protocols (\Sys) are designed for the
setting in which a large set of parties in a distributed system makes observations, and a small
set of parties with more resources and higher reliability aggregates the observations.
\Sys allows for secure and useful statistics gathering in privacy-preserving distributed systems.
For example, it allows operators of anonymity networks such as Tor to securely answer the
questions: {\em How many unique users are using the network?} and {\em How many hidden services are
being accessed?}

We prove the correctness and security of \Sys in the Universal Composability framework against an
active adversary that compromises all but one of the aggregating parties. Although successful output
cannot be guaranteed in this setting, \Sys either succeeds or terminates with an abort, and we
furthermore make the adversary \emph{accountable} for causing an abort by blaming at least one
malicious party. We also show that \Sys prevents adaptive corruption of the data parties from
revealing past observations, which prevents them from being victims of targeted compromise, and we
ensure safe measurements by making outputs differentially private.

We present a proof-of-concept implementation of \Sys and use it to demonstrate that \Sys operates
with low computational overhead and reasonable bandwidth. It can count tens of thousands of
unique observations from tens to hundreds of data-collecting parties while completing within hours.
PSC is thus suitable for daily measurements in a distributed system.
\end{abstract}

\maketitle

\section{Introduction}

Measurements are essential to understanding the use of distributed systems and monitoring them for abuse.  In a privacy-preserving distributed system, such as an anonymity network, measurement is complicated by the system's privacy requirements.  Storing records of system activity can pose significant risks to the system's users, and consequently the ethics of such techniques~\cite{tor-usage} have been widely debated~\cite{tor-research,partridge2016ethical}. Ideally, the measurements produced should satisfy strong privacy definitions, and during the measurement process the system should protect sensitive intermediate data.

For example, in an anonymous-communication system such as Tor~\cite{tor}, it
is very helpful to measure statistics such as the number of users, the popular
applications used over Tor, and the number of errors that occur. With such
information, users can better understand their anonymity, policymakers can
consider the value of anonymity, and network designers can identify its
problems. However, the privacy goal of the network would be undermined if
detailed records were collected of who uses the system, when it is used,
and for what purposes. As other example applications, structured overlay networks such as Chord~\cite{chord} may wish to measure network activity without
exposing individual actions, and online services may want to know how many users
they share without revealing their identities.

Two general privacy technologies can help solve this problem: secure multiparty
computation protocols~\cite{lindell2005secure} (MPC) and differentially private
mechanisms~\cite{dwork2014algorithmic}. Generic MPC protocols can be used by a
set of parties to compute any function of their private inputs without revealing
those inputs. Differentially private mechanisms guarantee that the value they
compute depends little on any specific input, thereby only revealing information
implied by the collection of inputs. Thus to safely measure a distributed
system, the members could execute a generic MPC protocol, using their
observations as input, in order to compute a differentially private measurement
function~\cite{pettai2015combining}. However, generic protocols have relatively
high computational and communication costs. Moreover, this approach does not
protect the observations being recorded by the parties to serve as inputs to the
secure computation, which could be revealed if such parties can be corrupted or
otherwise compelled to reveal their internal state during the measurement
period.

Some specific protocols have been developed to more efficiently and securely
compute aggregate measurements in anonymity networks, discussed in detail in the following section. We extend this line of work by presenting a protocol which is capable of privately measuring unique counts, calculating private set cardinality operations for the set union and
set intersection operations. In the scenario we consider, a set of \dataparties
(\dps) each observes some set, and we wish to compute the size of the union or
intersection of those sets. More formally, if there are $\numinput$ \dps with
$\dpmath{k}$ observing set $\mathcal{I}_k$, we want to be able to privately
compute $\lvert \bigcup_{k=1}^\numinput \mathcal{I}_k \rvert$ for set-union
cardinality and $\lvert \bigcap_{k=1}^\numinput \mathcal{I}_k \rvert$ for
set-intersection cardinality. We assume that there exists a smaller set of
\computationparties (\cps) that can be used to aggregate the observations of
the $\dps$, which will improve efficiency and allow us to tolerate failures of
the $\dps$.

We seek to satisfy stringent privacy and security goals to make the protocol
suitable even in very adversarial settings, appropriate for a distributed system
in which many of the parties may be malicious. Anonymity networks are a key
desired application of our protocols, and several active attacks have been
observed on the Tor network. To this end, we require that the protocol is secure
against an \emph{active} adversary that can deviate from the prescribed
protocol. Moreover, we desire security against a \emph{dishonest majority} of
the \computationparties, to ensure the security of user data as long as at least
one \cp is honest. We seek \emph{composable} security, so that measurement can
be repeatedly and concurrently run while maintaining security. We further target
\emph{adaptive security} among the \dataparties such that corruptions during the
lengthy measurement periods do not reveal past observations. Adaptive security is
particularly important for \dps as they may otherwise be susceptible to a legal
compulsion attack targeted at data stored about the observations they
make~\cite{privex}. We also require that the protocol output satisfies
\emph{differential privacy} to ensure that a securely computed function poses no
privacy risk. Finally, we require that no \dataparty can prevent the protocol
from completing successfully, and, while for efficiency we allow a
\computationparty to cause a failure, we require \emph{accountability} of a
responsible party so that the party can be excluded and the protocol
restarted.

We present the \Sys protocol for private set-cardinality operations. \Sys
satisfies all our functionality and security goals, and it does so in a way
that is efficient and practical to be used for distributed-system measurements,
including especially measurements of anonymity systems such as Tor. \Sys has
two set-cardinality variants: one for set union and the other for set
intersection. We present the set-union variant and describe the changes needed
to it for set intersection in Section ~\ref{sec:intersection}.

We prove the correctness and security of \Sys in the Universally Composable (UC) framework~\cite{uc-focs2001}. This proof methodology allows us to easily
express the main security and privacy goals via a single ideal functionality.
It also guarantees that security is maintained even if the protocol is run
concurrently with itself or as part of a larger cryptographic system.

We additionally introduce an implementation of \Sys, released as open-source
software written in memory-safe Go.  To achieve greater efficiency, our
implementation uses some subprotocols that are not proven UC-secure, such as a
verifiable shuffle~\cite{neff-shuffle} that is more practical than verifiable
shuffles proven secure in the UC model.  With these subprotocols, our
implementation can still be proven secure in the classical (\ie standalone)
model and, as we demonstrate via at-scale evaluation experiments, incurs only
moderate bandwidth and computation costs and can be practically deployed.

This paper expands upon an earlier version of this work~\cite{psc-ccs2017} in
many ways, including in particular modifying the protocol to provide accountability, adding measures to prevent input modification, and detailing
how to accomplish set-intersection cardinality.

\section{Related Work}
\label{sec:related}
We consider three main parallel lines of work related to our central problem: protocols designed to measure anonymity networks, protocols designed to compute secure distributed set operations, and fully generic multiparty computation protocols. Finally, we discuss related work surrounding accountability properties in cryptographic protocols.

\paragraph{Privacy-Preserving Measurements of Distributed Systems}
The PrivEx system of
\mbox{\citet{privex}} uses partially-homomorphic aggregation and differential
privacy~\cite{dwork2014algorithmic} to privately collect statistics about the
Tor network~\cite{tor}. PrivCount~\cite{privcount}, which we extend in this
work, improves upon PrivEx by offering multi-phase measurements and also an
optimal allocation of the $\epsilon$ privacy budget. \histore~\cite{histore} is
also inspired by PrivEx and uses histogram-based queries to provide integrity
guarantees by bounding the influence of malicious data contributors.
\citet{succinct-sketches} present a system to perform distributed item-frequency
counting, with one application being computing median statistics from Tor
relays. It uses homomorphic encryption to securely aggregate count-min sketches.

While these systems provide significant efficiency improvements over generic
protocols, they lack the ability to perform set-cardinality operations. For
example, while PrivEx, PrivCount, and \histore can answer the question {\em ``How many
clients were observed entering the Tor network?''}, they cannot determine the
number of unique clients (i.e. set-union cardinality). Similarly, while
count-min sketches can help determine the median number of clients across all
entry relays, they cannot determine the how many clients are seen at all relays
(i.e. set-intersection cardinality).

\paragraph{Private Set Operations Protocols}
Brandt suggests a protocol~\cite{brandt2005efficient} very similar to the
aggregate-shuffle-rerandomize-decrypt scheme our \cps execute. However, our
construction differs in a few crucial parts: we need to include separate \dparty
parties to provide the input that must be adaptively secure and limit as much as
possible their computational work, and in our protocol the parties jointly
generate noise to satisfy a differential privacy guarantee.  Beyond these
modifications, the bulk of our contribution is a thorough proof of security of
the protocol in the UC model (to make the proof go through, we needed to add
re-encryption during the re-randomization phase), a specific application for the
general theoretical protocol (making measurements in privacy-preserving systems
like Tor), and a functional implementation of the protocol alongside empirical
data measuring computation and communication costs gathered through experiments.

Several protocols to securely compute set-union cardinality (or the related problem of
set-intersection cardinality) have been
proposed~\cite{setint-jcs2005,setops-crypto2005,setunion-cans2012,setunion-acisp2015,distributed-cardinality-with-dp}
using a variety of techniques including bloom filters, homomorphic encryption, sketches, and
polynomial evaluation. However, none of these provides all of the security properties that we desire
for distributed measurement in a highly-adversarial setting: malicious security against a dishonest
majority, adaptive security for Data Parties, and differentially-private output. Similar protocols
have been designed to securely compute set operations~\cite{setops-joc2012,
private-set-intersection,privaterecordInan}, but these protocols output a set rather than its
cardinality.

\paragraph{Secure Multiparty Computation}
General secure multiparty computation (MPC) protocols can realize any
functionality including set cardinality, even in the UC
model~\cite{canetti2002universally,uc-bgw-asharov}. Moreover, advances in
efficient MPC have been made in the multi-party, dishonest-majority, active
security setting that we
require~\cite{spdz-crypto2012,mpc-bin-crypto2014,lindell2015efficient,wang2017global}.
However, these works assume that the same parties that have the inputs perform
the computation and thus do not describe how to securely transfer inputs if
these parties are different. Moreover, such protocols make use of a relatively
expensive ``offline'' phase, while we intend to allow for measurements that are
run on a continuous basis.

Wails et al.~\cite{stormy-ccs2019} have explored how to apply these generic MPC protocols in the
context of measuring the Tor network. They present new protocols to transfer inputs from a large
number of observing parties to a smaller number of computation parties. They show how to securely
compute approximate set-union cardinality using a sketch (viz. LogLog~\cite{durand2003loglog}).
While their construction does not provide differential privacy, Choi et
al.~\cite{diffPrivateSketches} show that sketch-based counts can be made differentially private with high
accuracy. Compared to \Sys, this approach of using sketches and generic MPC protocols has the
advantage of scaling well to large counts (e.g. billions), but it has the disadvantages of poor
accuracy when the count is small and error from noise that is proportional to the count rather than
constant. These disadvantages arise from the use of sketch-based counting. Furthermore, the generic
MPC protocols allow an adversary to abort the protocol without accountability, which
enables a continuous denial of service.

\paragraph{Accountability}
To handle denial of service attacks on secure multiparty computation protocols, there has been recent work developing protocols that can identify to all honest parties a malicious party who has prevented the protocol from terminating successfully. This type of termination is called \textit{identifiable abort}, and there are generic multiparty computation protocols with this property~\cite{gmw-mpc,mpc-identifiable-abort,ishai2014secure,constant-ia-mpc-public}. Taking another approach to accountability, some protocols provide a stronger public-accountability property by producing a proof of malicious behavior on some trusted publicly readable trusted ledger or bulletin board~\cite{fair-mpc-global-ledger,constant-ia-mpc-public}. While all of these contributions using either approach give accountable methods for general private measurements of any kind, including those we wish to collect, these contributions do not provide implementations or experiments, and require that the parties providing inputs are those that perform the computations, meaning using them for the applications we intend would incur heavy computation and communication costs to our Data Parties. Additionally, these approaches do not consider the problem of adaptive corruptions intended to extract measurement data during a long collection period, a property we consider to be crucial for many typical use cases such as measuring anonymous communication systems like Tor.

\section{Background}
Before describing the problem and our solution, we briefly review some concepts and background that are necessary for understanding the protocol, Private Set-Union Cardinality (\sys).

\paragraph{Differential Privacy}
Differential privacy~\cite{dwork2014algorithmic} is a privacy definition that
offers provable and quantifiable privacy of database queries.  Differential privacy
guarantees that the query responses look nearly the same regardless of whether or not the input of
any one user is included in the database. Thus anything that is learned from the queries is learned
independently of the inclusion of any single user's data~\cite{dp-semantics-jp08}. Several
mechanisms have been designed to provide differential privacy while maximizing
accuracy~\cite{calibrating-noise,exponential-mech-focs07,smooth-sensitivity-stoc07,blum2013learning}.

More formally, an \mbox{($\epsilon$, $\delta$)-differentially-private} mechanism is an algorithm
$\mathcal{M}$ such that, for all datasets $D_1$ and $D_2$ that differ only on the input of one
user, and all $S \subseteq \textrm{Range}(\mathcal{M})$, the following holds:
\begin{equation} \label{eq:diffpriv}
Pr[\mathcal{M}(D_1) \in S] \leq e^{\epsilon} \times Pr[\mathcal{M}(D_2) \in S] + \delta.
\end{equation}
$\epsilon$ and $\delta$ quantify the amount of privacy provided by the mechanism, where smaller
values of each indicate more privacy.
\citet{ourdata} proves that binomial noise, that is, the sum of $n$ uniformly random
binary values, provides $(\epsilon,\delta)$-differential privacy for queries that each user can
affect by at most one when
 \begin{equation}
n \geq \bigg(\dfrac{64\ln(2/\delta)}{\epsilon^2}\bigg) \label{eq:noise}
\end{equation}
Eq.~\ref{eq:noise} presents a trade-off between privacy and utility, an issue inherent to
differential privacy.  Put alternatively, for the privacy level given by $\epsilon$ and
$\delta$, Eq.~\ref{eq:noise} yields the amount of binomial noise that is required to be added to the
output of a query that each user can change by at most one.
In this paper, we use this binomial noise technique to achieve differential privacy.

\paragraph{Termination and Accountability}
In any multiparty computation protocol, weak trust assumptions like a dishonest majority preclude properties like guaranteed output delivery~\cite{ishai2014secure}. This means even if a protocol guarantees its output is accurate, malicious parties may prevent the output from being produced at all by causing the protocol to hang indefinitely or to abort. Exacerbating this problem, dishonest majority protocols cannot guarantee \textit{fairness}, meaning malicious parties could learn the result before choosing whether to allow a protocol to successfully conclude.

As a solution to this, many protocols introduce an accountability guarantee called \textit{identifiable abort}~\cite{mpc-identifiable-abort}, which guarantees that either the correct output or the identity of some number of malicious parties is given to the honest parties at protocol termination. This allows honest parties to identify a source of failure and take appropriate steps to remedy the problem: identify network failures, rerun the protocol excluding certain parties, or publicly levy accusations of malicious behavior.

This guarantee only mitigates denial of service attacks in practice if a protocol comes with a guarantee of termination within finite time, or malicious parties could indefinitely delay the accountability messages as well as the intended protocol output. This is handled often in literature by defining protocols in the \textit{synchronous communication model}, where messages are sent in a sequence of explicit synchronized communication rounds. In every round, each party may send messages to any set of other parties, and these messages are guaranteed to be delivered to their recipient at the end of that round. Synchronous, round-based communications can be constructed explicitly using bounded delay and synchronized clocks~\cite{uc-sync}.

\paragraph{Zero-Knowledge Proofs}
\Sys uses a few types of zero-knowledge proofs demonstrating knowledge and relationships between elements in some group $G$ of order $q$ with respect to some generator $g$. For example, we require a proof of knowledge of the discrete log of $y = g^x$, $x \in \mathbb{Z}_q$. In general, a zero-knowledge proof
system~\cite{goldreich-foc1-2004} is a
protocol between a prover $\mathcal{P}$ and verifier $\mathcal{V}$ in which the prover demonstrates
the truth of some statement without revealing more than that truth, where the statement may be, for
example, the existence or knowledge of a witness to membership in a language.

\Sys uses many $\Sigma$-protocols~\cite{damgaard2002sigma}, which are proofs of knowledge that are three-round interactive protocols starting with a commitment by the prover, followed by a random challenge from the verifier, and ended by a response from the prover. $\Sigma$-protocols are honest-verifier zero knowledge (HVZK), that is, their transcript is simulatable assuming the verifier behaves honestly.

\sys also makes use of {\em verifiable shuffles}~\cite{nguyen2004verifiable}.  Informally, a
verifiable re-encryption shuffle takes as input ciphertexts, outputs a permutation of a
re-encryption of those ciphertexts, and proves that the output is a re-encryption and permutation of
the input. There are two security requirements for verifiable shuffles: privacy and verifiability. Privacy requires an honest shuffle to protect its secret permutation. Verifiability requires that any attempt by a malicious shuffle to produce an incorrect output must be detectable.
Several protocols for verifiable shuffling have been
proposed~\cite{neff-shuffle,shuffle-furukawa-fc2003,shuffle-groth-pkc2003,shuffle-bayer-eurocrypt2012}.

The $\Sigma$-protocols we use in \sys and the verifiable shuffles with public-coin interactive proofs can be made non-interactive using the Fiat-Shamir heuristic~\cite{fiat-shamir-crypto1986}, in which the
random challenges are generated by the prover by applying a cryptographic hash function. Non-interactive proofs (both shuffles and $\Sigma$-protocols) can be sent to many verifiers through a broadcast, but \Sys additionally uses interactive multi-verifier proofs, where a single proof is sent to multiple verifiers such that all honest verifiers agree on the statement to be proved and the correctness of the proof to provide accountability.

\paragraph{Broadcasts}
\Sys uses an accountable broadcast communication functionality. The security property that we require for this broadcast is that honest parties agree on the message from the broadcaster, whether it is an \textit{equivocation}, where a broadcaster sends distinct messages to different parties, a single consistent message to all parties, or no message at all. Dolev and Strong~\cite{dolev-strong} give a protocol that satisfies this property assuming synchronous point-to-point communication channels in $t+1$ rounds, where $t$ is the number of malicious parties. For protocols with many parties the round- and communication-complexity of this primitive make it too inefficient to be practical if deployed na\"ively.

\paragraph{Universal Composability and Setup Assumptions}
\Sys, like many complex cryptographic protocols is constructed from smaller cryptographic primitives and subprotocols. The sequential composition theorem~\cite{goldreich-foc1-2004} justifies constructing protocols modularly in this way, but for protocols to remain secure under general concurrent composition (i.e. simultaneously with arbitrary other protocols, even those designed specifically to interact with them) we require a stronger property. In 2002, Canetti gave a composition theorem for this case under a framework called Universal Composable (UC) security~\cite{canetti2002universally}. This composition theorem guarantees that compositions of UC-secure subprotocols are UC-secure.

Given impossibility results for UC secure protocols without setup assumptions, UC protocols are often defined in the Common Reference String (CRS) model, under which any function of a given collection of parties' private inputs can be securely computed~\cite{canetti2002universally}. In particular, there are natural constructions of UC-secure zero-knowledge proofs from $\Sigma$-protocols~\cite{setops-crypto2005} and UC-secure verifiable shuffles in this model~\cite{wikstrom-shuffle}. Protocols that meet this standard of security can be inefficient, so in practice many protocols rely on stronger setup assumptions and provide weaker composability guarantees. We highlight the Random Oracle Model, where all parties have access to a random oracle that returns random strings to all queries, but must return the same result given two identical queries. In the Random Oracle Model, the Fiat-Shamir heuristic constructs auxiliary-input zero-knowledge proofs~\cite{rom}, which are secure under sequential composition~\cite{seqcom}. Random Oracles are usually instantiated with hash functions, a leap which cannot be rigorously justified for any real hash function~\cite{rom-imposs} but which is widely deployed in practice.

\section{Problem and Setting}


We consider the problem of distributed private measurement. In particular, we define a finite set of \textit{observations}, which could be explicit identifiable events or abstract counters intended to estimate a count of these events (\eg, in a hash table), to be collected across a large distributed set of $\numinput$ Data Parties (\dps). Each \dparty collects its own set of observations, but we wish to learn how many distinct events has occured over a given time period across all \dps.  We wish to calculate the cardinality of the set-union and set-intersection of these observations.

We collect data on these observations during a \textit{collection period}, during which \dps record observations. After the collection period, the \dps participate in a distributed {\em protocol phase} that produces the cardinality of the set-union or set-intersection of the \dps' observations.  The protocol is privacy-preserving, which informally means that (i)~no information collected by an honest DP is ever exposed and (ii) an adversary (defined below) cannot identify any individual data from the aggregated result (i.e. the cardinality).

Additionally, we desire {\em accountability}, meaning that the honest parties can identify some dishonest party if the execution fails.  (We provide a more formal definition of accountability later in the paper.)  Importantly, our accountability property does not cover the veracity of \dps' reported observations---a malicious \dparty can fabricate or omit an observation.  Rather, our aim is to detect parties that interfere with the correct computation of the aggregated result.





\subsection{System Model}

To solve this problem, we introduce a set of $\numcompute$ Computation Parties (\cps) whose purpose is to offload a majority of the computational- and bandwidth-work from the \dps, and to perform a multi-party computation to aggregate, process, and clean the data. We assume that $\numinput \gg \numcompute$ so that the number of \dps is much larger than the number of \cps.

We further assume that \dps are limited in terms of computational resources and bandwidth, so we measure efficiency in this setting by aggregate computational and communication resources consumed but also by the extent to which resources consumed by the \dps for measurement purposes is minimized.

We consider the problem in the synchronous communication model and do not specifically prove results about or implement precise synchronized rounds, leaving them abstract for the purpose of exposition, reasoning about security properties, and prototype implementation.

\subsection{Threat Model}

We consider in our threat model arbitrary adversaries who corrupt any number of \dps and all but one of the \cps, and consider primarily adversaries who wish to learn private data: which observations have been recorded, and by which \dps. Therefore, it is crucial that an adversary who adaptively attacks \dps during the collection period not does learn more private data by attacking \dps who execute our measurement protocol than that same adversary could learn by attacking \dps who do not, so any satisfactory measurement protocol must resist adaptive attacks against \dps during the collection period.

Further, we expect the protocol to be run repeatedly (\eg, to be run continuously with collection periods beginning and ending simultaneously). This means even simple count data can reveal a large amount of private information over time, especially against active adversaries who can insert inputs through malicious \dps and observe differences in the protocol output. This means the measurements released by the protocol must be differentially private.

Finally, \dps do not always exclusively take measurements, and often these measurements are associated with a large, complex cryptographic protocol whose execution is the primary purpose of the \dps. Further, the protocol is intended to be run concurrently with other copies of itself (measuring different events, or over different time periods). This means the protocol should be privacy-preserving even when run concurrently with other arbitrary protocols, meaning a measurement protocol should be secure in the UC model.

\cps should be taken from a distributed set of parties not likely to collude in order to recover the sensitive private data they process, and we formalize this intuition by assuming at least one of the \cps is honest. We do not consider the problem of adaptively corrupted \cps.

While we accept that malicious \dps may report false data, we may perform repeated sampling of \dps among those possible in order to provide robust measurements over time, or exclude obviously malicious \dps (those who report having observed every event in order to produce a trivial result from a Set-Union Cardinality measurement, for example). However, since \cps participate in every round, it is important that \cps may not adjust the results or prevent the protocol from completing without consequences. This takes the form of a formal \textit{accountability property} in the dishonest-majority setting, which guarantees honest parties agree on a set of malicious parties to blame in the case the protocol fails to terminate successfully.

\subsection{Problem Statement}

We define the problem we wish to solve as follows:

\begin{definition}
Fix $\numcompute$ \cps, $\numinput$ \dps, $\numcounters$ possible observations, and a time period $T$.  Represent the observations made by each $\dpmath{j}$ during $T$ as $\vec{O^j}$. We define the \textit{Private Set-Union Cardinality Problem} (with Set-Intersection defined similarly) as calculating $\big| \bigcup \vec{O^j} \big|$
and distributing this result to all \cps, subject to the following formal and informal constraints:
\begin{itemize}
\item The result should be \textit{accurate}, meaning that malicious \cps may not alter the data submitted by honest \dps or selectively exclude data submissions.
\item The result should be \textit{private}, meaning that no adversary may learn information about individual observations of the \dps exceeding that allowed by the Differential Privacy guarantee. In particular, we require that compromise of \dps during the collection period reveal no information about the observations recorded pre-compromise.
\item The solution should be \textit{efficient}, meaning that \dps involved in the measurement process do not consume excessive resources and that \cps can output a measurement within hours of the termination of the collection period in a practical, real-world setting.
\item The solution should be \textit{accountable}, so every time the protocol is invoked, it terminates successfully or all honest \cps come to consensus on the identity of the \cp or \cps who prevented successful termination.
\item The solution should be \textit{secure}, meaning it satisfies all above requirements so long as there is one honest \cp, and be accompanied by a thorough proof of security.
\item The solution should be \textit{practical} enough to implement and deploy in practice today, and rely only on well-tested and well-understood cryptographic assumptions.
\end{itemize}
\end{definition}
In the following, we present, implement, evaluate, and prove secure a protocol that solves the above \sys problem.
\subsection{Solution Overview}

We provide the solution in three parts. First, we describe the protocol in a hybrid model using abstract ideal functionalities. Then we proceed along two parallel paths:
\begin{enumerate}
\item A proof of security for the required primitives in the Universal Composability (UC) setting, followed by a proof of security for the hybrid-model protocol, both in Section \ref{sec:security}.
\item An implementation of the protocol using heuristically secure primitives (namely, the Fiat-Shamir Heuristic) for the purpose of practicality and efficiency, open-sourced and evaluated for efficiency in Section \ref{sec:eval}.
\end{enumerate}
The hybrid-model proof presented in Section \ref{sec:psc-proof} then applies to both the implemented protocol and as a proof of the existence and security of an unimplemented theoretical UC-secure protocol with the same structure as our implementation.

In particular, making replacements as described below, we can convert a proof of security in the UC framework in the Common Reference String (CRS) model to a proof of security in the random oracle model under sequential composition. The replacements primarily rely on the fact that the Fiat-Shamir heuristic provides non-interactive Zero-Knowledge Proofs in the Random Oracle model~\cite{rom}, and we may then apply the sequential composition theorem~\cite{goldreich-foc1-2004} in place of the UC composition theorem. Explicitly:
\begin{itemize}
	\item The broadcast functionality we use is realized by the Dolev-Strong protocol, which can be implemented with UC signatures, but in the implementation we use Schnorr signatures and apply optimistic optimizations to the protocol as described in Section \ref{sec:consensus}.
	\item We require one-to-many $\Sigma$-protocols in our proof. These are constructed by compiling $\Sigma$-protocols to UC secure accountable one-to-many Zero-Knowledge Proofs as described in Section \ref{sec:accountable-sigma}, but for implementation purposes we simply apply the Fiat-Shamir heuristic and send non-interactive zero-knowledge proofs over our broadcast channel.
	\item In the UC setting, we use the shuffle provided by Wikstr\"om and outline how it can be made accountable. For implementation purposes, we use an implementation of the shuffle given by Neff~\cite{neff-shuffle} and again apply the Fiat-Shamir heuristic. We use this shuffle because it is simple and the code is available and open source, and because additional protocols are required to construct public parameters in Wikstr\"om's shuffle that cannot be generated in a natural way by public coins. This is discussed further in Section \ref{sec:accountable-shuffles}.
\end{itemize}

\section{Protocol Description} \label{sec:protocol}

At a high level, the protocol proceeds in two main sections, which we separate into seven \textit{phases}. The goal is to construct a vector of noisy shuffled counters, containing one component for each possible event and additional components for noise. The counters are ciphertexts that are zero or nonzero, representing a binary value. These counters are split into two shares, an initial encrypted blind and a corresponding plaintext counter which \dps modify, recording observations obliviously. Once submissions are complete, the \cps insert encrypted noise counters, shuffle the counters, and clean them, removing non-binary information contained within the plaintext counters in a process called rerandomization. Finally, the counters can be decrypted, with the number of nonzero counters in the final tally giving the final result. We visually present the protocol in phases in Figure \ref{fig:overview}, and continue our high-level description in more detail as follows, noting that while we give all phases in sequence in our discussion, implementation, and proof, the data collection period provides a natural ``offline phase'' where \cps remain idle, meaning that Noise Generation can be trivially executed in parallel with data collection.

\begin{figure*}[t!]
	\centering
	\includegraphics[width=0.8\linewidth, height=0.45\textwidth]{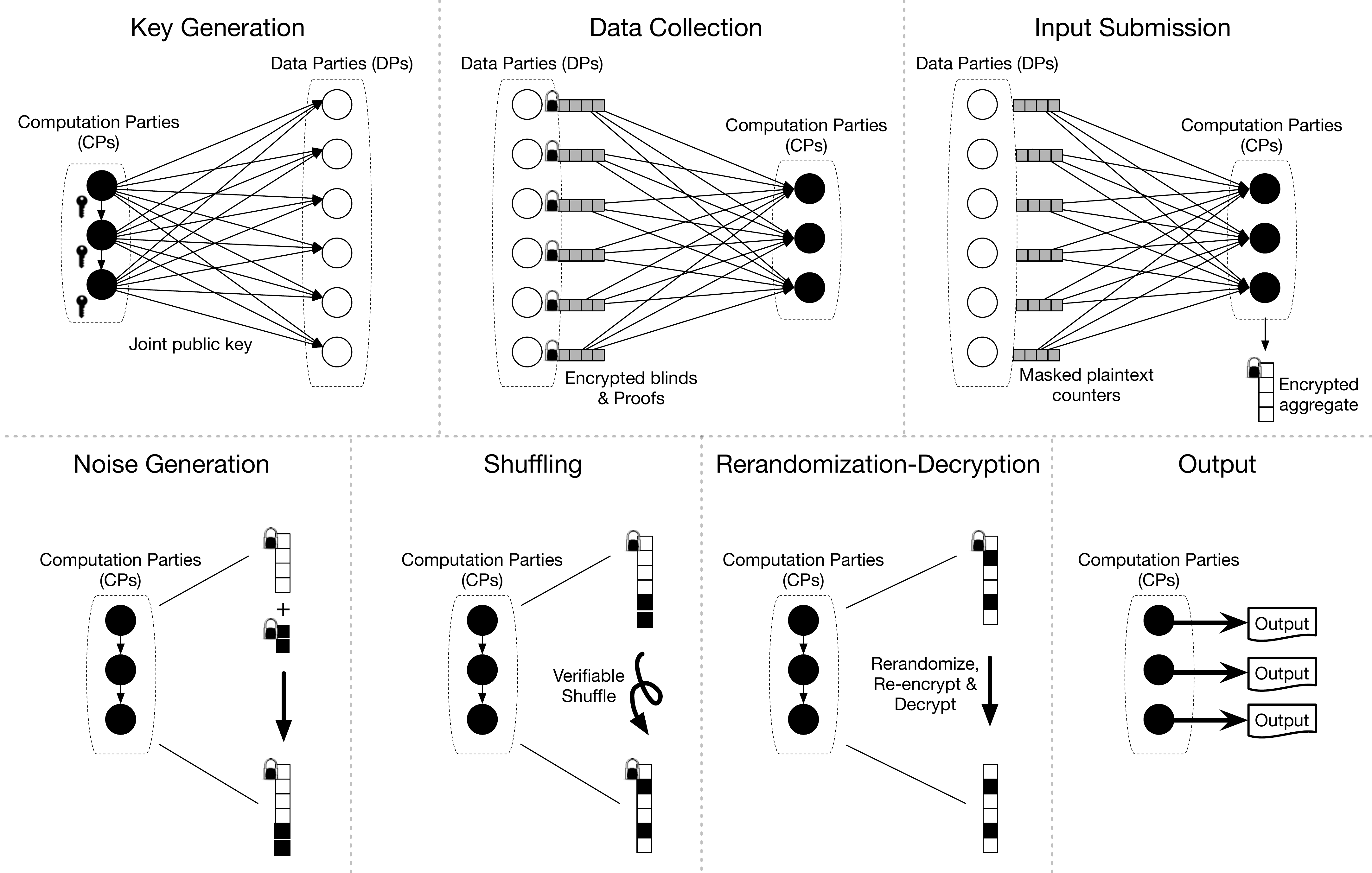}
		\vspace{-2ex}
	\caption{An overview of the major steps of the \Sys
		protocol --- the white, gray, and black square represent plaintext, blind (or blinded data), and noise respectively. The lock represents that the underlying data (\ie, the blinds, noise, or plaintext) is encrypted.}
	\label{fig:overview}
\end{figure*}

\begin{enumerate}
\item \textbf{Key Generation}. In the Key Generation phase, the \cps generate a joint public session key for each run of the protocol and distribute it to all \dps.
\item \textbf{Data Collection}. \dps construct and distribute to \cps an encrypted blind and corresponding proof of knowledge of cleartext for each counter, which represent the set of observable events to be measured, storing the negation of each value in plaintext. During the data collection period, \dps record an observation by replacing its corresponding counter with a random value.
\item \textbf{Input Submission}. When the collection period is over, the \dps broadcast their counters to the \cps, who jointly encrypt and sum these counters with the blinds submitted previously. If a \dparty has not modified its counter, the recorded counter and plaintext inside the corresponding blind will cancel. Otherwise, their sum will be some random nonzero value.
\item \textbf{Noise Generation}. \cps perform a verifiable shuffle of encryptions of the pair $(0, 1)$ $\numnoise$ times and extract the first component of the resulting shuffled vector to add $\numnoise$ elements of binomial noise to the aggregate vector of ciphertext counters.
\item \textbf{Shuffling}. \cps perform a verifiable shuffle of the noise-and-data vector, mixing the generated noise with the true counter values.
\item \textbf{Rerandomization-Decryption}. \cps rerandomize, re-encrypt, and decrypt the shuffled vector of data and noise to ensure the plaintexts carry no information beyond the parity property they represent in the protocol.
\item \textbf{Output}. The last \cp to perform this decryption broadcasts the plaintexts and the number of nonzero elements is the noisy cardinality.
\end{enumerate}

\subsection{Primitives and Assumptions}
We define two groups of parties, Data Parties or \dps, who make observations from some set of possible observations during a \textit{collection period}, and Computation Parties or \cps who receive, aggregate, clean, and publicize the privacy-safe observation statistic. We make distinct adversarial assumptions with respect to the two groups: for \dps, we work in the erasure model with adaptive corruptions, assuming that an adversary can compromise a \dparty during the collection period but requiring that it learns nothing about measurements recorded and erased before the corruption. For \cps, we work in the static corruption model and make no assumptions about secure erasures.
We denote a vector by $\vec{v}$, and write $\vec{v}_i$ to mean the $i$th component of $\vec{v}$, with superscript notation $\vec{v}^i$ indicating a sequence of vectors.
We separate the protocol into phases, each containing some number of synchronous communication rounds. Often, these phases will be split into subphases.

Throughout, we assume $g$ is a generator of a large group $G$ of prime order $q$ for which the Decisional Diffie Hellman assumption holds. We use as our primary encryption mechanism the exponential version of ElGamal, where rather than a group element $g^m$ as the plaintext, we consider the plaintext to be the integer $m$ itself. This scheme is additively homomorphic, but now requires calculating a discrete log for decryption as in the encryption scheme outlined by Benaloh~\cite{benaloh1994dense}. For our application, plaintexts encode binary values so we are interested only in whether or not a plaintext $m=0$, which is trivial to determine given $g^m$ so decryption poses no problem for us. We by convention write an ElGamal encryption of a message $m$ as
$E_y(r;m) = (g^r, y^rg^m)$
and freely refer to $r$ as its randomization factor and to $g^r$ as its ``first component''. Throughout, we refer to three homomorphic manipulations of ElGamal ciphertexts, which can be constructed only using knowledge of the public parameter $y$:
\begin{itemize}
\item \textit{Re-Encryption}, where a ciphertext has its randomization factor shifted by a constant additive factor $s$ without modifying the plaintext, converting a ciphertext $(g^r, y^rg^m) \to (g^{r+s}, y^{r+s}g^m)$.
\item \textit{Rerandomization}, where a ciphertext has its randomization factor shifted by a scalar multiplicative factor $s$, applying the same multiplication to the plaintext component, so that a ciphertext $(g^r, y^rg^m) \to (g^{rs}, y^{rs}g^{ms})$.
\item \textit{Aggregation}, where an arbitrary vector of ciphertexts encrypted to the same key are homomorphically added together by taking the componentwise product of each of the two components.
\end{itemize}

We make use of the following primitives, all in the synchronous communication model:
\vspace{-0.5mm}
\begin{itemize}
\item \f{BC}, an accountable consensus broadcast protocol, given in Figure \ref{fig:broadcast-func}.
\item \f{SKGD}, a functionality encapsulating the sub-protocol for Session Key Generation and Distribution, producing and distributing a joint ElGamal public session key shared by the \cps to all parties. This is given in Figure \ref{fig:skgd}.
\item \f{ZKP-DL}, \f{ZKP-S}, \f{ZKP-RRD}, one-to-many zero-knowledge proofs for knowledge of a discrete log, a shuffle, and for a combined re-encryption, rerandomization, and decryption operation. These are all implemented with a general zero-knowledge proof functionality given in Figure \ref{fig:psc-zkp} and each of these specific proofs is realized by a different protocol, given in detail in Section \ref{sec:security}.
\end{itemize}
We provide the security proofs for each of these functionalities including their accountability properties in Section \ref{sec:security} and provide implementation details in Section \ref{sec:eval}.

Beyond cryptographic assumptions, we assume all parties have permanent signing keys through a Public Key Infrastructure and use these to construct signatures in order to send authenticated messages. Keys for encryption are generated on a per-session basis for each protocol run so we describe their construction explicitly. The zero-knowledge proofs are given in a hybrid model using ideal functionalities for commitments that may be realized in the Common Reference String (CRS) model.
\vspace{-2mm}
\subsection{Protocol}
At a high level, the protocol proceeds in a sequence of phases, where the session keys are established and blinds are set by the \dps to guard against adaptive corruptions. The data is then collected over some period, and then transmitted to the \cps. The \cps proceed in a three-step process to prepare the data for release: adding random noise to provide differential privacy, shuffling the noise and data together, and ``rerandomizing'' the nonzero elements that encode $1$ so they do not carry any information about the exact representation selected by the \dps.

More precisely, we define the protocol $\pi_{\sys}$ in the \hybrid model by describing each phase.
\begin{enumerate}
	\item \textbf{Key Generation}.
		\begin{itemize}
			\item (Round 1). \cps send (\textsc{GenerateKeys}) to \f{SKGD}. \cps and \dps wait to receive the public session key $y$ from \f{SKGD}.
			\item (Blame). \cps receive either a set of session keys from \f{SKGD} or a blame message. If a \cp receives a blame message, they output it and terminate.
		\end{itemize}

	\item \textbf{Data Collection}.
		\begin{itemize}
			\item (Round 1). Each $\dpmath{j}$ generates random $\beta_1^j\dots \beta_\numcounters^j \in \mathbb{Z}_q$, encrypts each, and collects these ciphertexts into a vector $\vec{\beta}^j$. Each $\dpmath{j}$ broadcasts $\vec{\beta}^j$ to all \cps using \f{BC}.
			\item (Round 2). Each $\dpmath{j}$ proves knowledge of the cleartext for each component of $\vec{\beta}^j$ by proving knowledge of the discrete log of the first component of each ciphertext, $r_i^j$ using \f{ZKP-DL} to all \cps. Each $\dpmath{j}$ saves a vector of plaintexts $\vec{c}^j$ by $\vec{c}^j_i = -\beta_i^j$ and erases all information about $\vec{\beta}^j$ and the proofs from memory.
			\item (Collection Period). While the collection period is active, $\dpmath{j}$ records observing event $i$ upon receiving a message from the environment of the form (\textsc{Observation}, $i$) by setting
				$\vec{c}_i^j \gets r$
				for $r$ random in $\mathbb{Z}_q$. This phase ends when the collection period is complete.
			\item (Blame). If a broadcast or one-to-many proof from a \dparty fails, the \cps exclude the input from that \dparty during this phase and for the remainder of the protocol. If a \cp is blamed, the protocol terminates.
		\end{itemize}
	\item \textbf{Noise Generation}.
	For every component $i$ of $\vec{n}$, all \cps set $\vec{N}^{i0}$ to be the pair of deterministic encryptions to $y$ of $0$ and $1$ using first component $g^1$. \cps calculate $\vec{N}^{i\numcompute}$ in $\numcompute$ rounds:
		\begin{itemize}

			\item (Round $ij$, $i \in 1..\numnoise, j \in 1..\numcompute$).
				\begin{itemize}
					\item (Subround $ij_1$). $\cpmath{j}$ shuffles $\vec{N}^{i(j-1)} = \textsc{Shuffle}(\vec{N}^{i(j-1)}, \vec{r}^{ij}, \pi^{ij})$ and sends $\vec{N}^{ij}$ to \f{BC} with $\cpmath{j}$ generating two random integers $\vec{r}^{ij} \in \mathbb{Z}_q^2$ and a random permutation $\pi^{ij}$ on two elements.
					\item (Subround $ij_2$). $\cpmath{j}$ sends (\textsc(ZKP), ($\vec{N}^{i(j-1)}$, $\vec{N}^{ij}$), ($\vec{r}^{ij}$, $\pi^{ij}$)) to \f{ZKP-S}.
					\item (Subround $ij_3$). \cps continue iff they receive (\textsc{Proof}, $\cpmath{j}$, ($\vec{N}^{i(j-1)}$, $\vec{N}^{ij}$)).

				\end{itemize}
				\item \cps take the first ciphertext in the resulting vector, setting
		$\vec{n}_i = \vec{N}^{i\numcompute}_1$
				\item (Blame). In each round $j$ all \cps expect to begin with a known ciphertext, receive a message, and a proof that verifies. If a $\cpmath{j}$ uses the incorrect ciphertext, is blamed through \f{BC} sending the shuffled ciphertext, or the proof fails, \cps blame $\cpmath{j}$ and exit.
			\end{itemize}
	$\vec{n}$ is then a vector of length $\numnoise$.
	\item \textbf{Input Submission}.
		\begin{enumerate}
			\item (Round 1). Each $\dpmath{j}$ sends its plaintext vector $\vec{c}^j$ to all \cps using \f{BC}.
			\item (Round 2). \cps verify they received vectors of ciphertexts of the correct length and that the proofs from the Data Collection phase verify from each \dparty. For each remaining $\dpmath{j}$, \cps encrypt each component of the vector of counters $\vec{c}^j$ with the session key $y$, and componentwise homomorphically add the ciphertexts to construct a new aggregate data vector $\vec{d}$:
				$\vec{d}_i = \bigoplus_j \left( b_i^j \oplus (g, y\cdot g^{c_i^j}) \right)$
				where $\oplus$ denotes homomorphic addition of plaintexts.
			\item Blame cannot arise in this phase since \dps cannot be blamed, only excluded from the tally.

		\end{enumerate}
	\item \textbf{Shuffling}.
		All \cps set $\vec{s}^0$ as the concatenation of $\vec{n}$ and $\vec{d}$.
		\begin{enumerate}
			\item (Round $j \in 1..\numcompute$).
				\begin{enumerate}
					\item (Subround $j_1$). $\cpmath{j}$ shuffles $\vec{s}^j = \textsc{Shuffle}(\vec{s}^{(j-1)}, \vec{r}^j, \pi^j)$
						and sends $\vec{s}^j$  to \f{BC} using random integers $\vec{r}^j \in \mathbb{Z}_q^{(\numcounters + \numnoise)}$ and a random permutation $\pi^j$ on $b+n$ elements.
					\item (Subround $j_2$). $\cpmath{j}$ sends (\textsc{ZKP}, ($\vec{s}^j$, $\vec{s}^{(j-1)}$), ($\vec{r}^j$, $\pi^j$)) to \f{ZKP-S}.
					\item (Subround $j_3$). \cps continue iff they receive (\textsc{Proof}, $\cpmath{j}$, ($\vec{s}^j$, $\vec{s}^{(j-1)}$)), otherwise they blame $\cpmath{j}$.
				\end{enumerate}
				\item (Blame). The phase is structured identically to Noise Generation so blame arises in exactly the same way.
		\end{enumerate}

	\item \textbf{Rerandomization-Decryption}.
		\cps set $\vec{p}^0 \gets \vec{s}^\numcompute$.
		\begin{enumerate}
			\item (Round $j \in 1..\numcompute$).
			\begin{enumerate}
				\item (Subround $j_1$). $\cpmath{j}$ selects random $\vec{\sigma}^j$, $\vec{r}^j \in \mathbb{Z}_q^{(\numcounters + \numnoise)}$. For each $i$, denote the ciphertext $\vec{p}^{(j-1)}_{i}$ as $(a_i, b_i)$. Then $\cpmath{j}$ calculates for each $i$,
				$\alpha_i \gets \left(a_ig^{\sigma_i}\right)^{r_i}$
				If $\alpha_i$ is the identity element, $\cpmath{j}$ restarts subround $j_1$ with fresh random values. Otherwise, $\cpmath{j}$ calculates
				$\beta_i = \left(b_iy^{\sigma_i}\right)^{r_i}\alpha_i^{-x_j}$ and sets $\vec{p}^j_i \gets (\alpha_i, \beta_i)$
				 and sends $\vec{p}^j$ to \f{BC}.
			\item (Subround $j_2$). $\cpmath{j}$ sends (\textsc{ZKP}, ($y_j$, $\vec{p}^j, \vec{p}^{(j-1)}$), ($\vec{\sigma}^j$, $\vec{r}^j$, $x_j$)) to \f{ZKP-RRD}.
			\item (Subround $j_3$). \cps continue iff they receive (\textsc{Proof}, ($y_j$, $\vec{p}^j$, $\vec{p}^{(j-1)}$)), the first component of $\vec{p}_i^j$ is not the identity element, and the $y_j$ in this message is the same as the recorded $y_j$ from key exchange, otherwise they blame $\cpmath{j}$.
		\end{enumerate}
		\item (Blame). $\cpmath{j}$ is blamed in this round if $\cpmath{j}$ is blamed by \f{BC} during subround $j_1$, if the proof fails to verify or appear in subround $j_2$, or if the first element of $\vec{p}^j_i$ for any $i$ is the identity.
\end{enumerate}
	\item \textbf{Output}. \cps count the number of nonzero plaintexts in the vector $\vec{p}^\numcompute$ and subtract the expected number of noise counters $\numnoise/2$ to output the final measurement.
\end{enumerate}

\section{Security Proof}
\label{sec:security}
\subsection{Consensus Broadcast}
While certain signed messages may be used as proof of malicious behavior, in order to provide a robust accountability property against denial of service attacks we require a method by which parties can agree on whether not a given message has or has not been sent, which means we require consensus broadcast, encapsulated in a broadcast functionality \f{BC} defined in Figure \ref{fig:broadcast-func}. To achieve a consensus broadcast, we use the Dolev-Strong protocol.
\vspace{-3mm}
\subsubsection{Dolev-Strong}
\label{sec:consensus}
\begin{figure}[h]
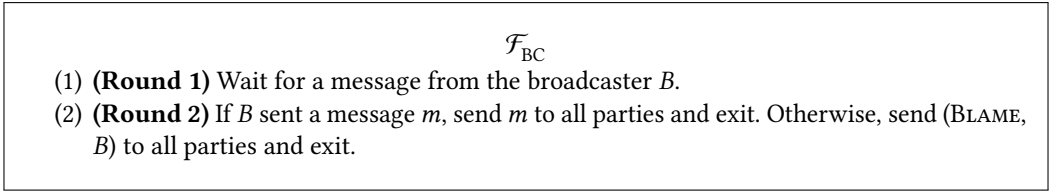

\begin{framed}

\center{\f{BC}}\\
\begin{enumerate}
	\item \textbf{(Round 1)} Wait for a message from the broadcaster $B$.
	\item \textbf{(Round 2)} If $B$ sent a message $m$, send $m$ to all parties and exit. Otherwise, send (\textsc{Blame}, $B$) to all parties and exit.
\end{enumerate}
\end{framed}
\caption{\f{BC}, the ideal broadcast functionality.}
\label{fig:broadcast-func}
\end{figure}
The Dolev-Strong protocol is thoroughly described and proved correct by the authors~\cite{dolev-strong}, in that all parties come to consensus on the (possibly empty) set of messages sent by $B$. For our specific application, when the size of this set exceeds $1$ for any honest party it may simply send these two messages to all parties and terminate,  blaming the broadcaster for equivocating. We briefly sketch the protocol below for completeness and add modifications required to include the blame messages, and finally describe optimizations that will increase performance in the case that malicious behavior does not occur.

\begin{enumerate}
	\item The broadcaster $B$ sends ($m$, \textsc{Sign}($M$)) to all parties, outputs $m$ and terminates. All other parties $P_j$ initialize a set $\textsc{accept}_j$.
	\item For rounds $j \in 1..\numcompute$,
		\begin{itemize}
			\item Upon receiving a message $M$ with valid signatures from $j$ distinct parties (including $B$) of $M$, add $M$ to the set \textsc{accept}.
			\item If the cardinality of \textsc{accept} increased this round, sign up to two new additions to \textsc{accept} chosen arbitrarily and send these messages with all of their signatures to every party in the next round.
		\end{itemize}
	\item In round $\numcompute+1$, if there is only one message in \textsc{accept}, output it. Otherwise, output (\textsc{Blame}, $B$).
\end{enumerate}

\paragraph{Optimizations}

We observe that in the scenarios we intend to deploy the protocol, the number of \cps $\numcompute$ is small, and since the protocol has an accountability property, there is a cost to malicious behavior (e.g., being removed as a participant in the protocol in an out-of-band process) so this behavior should be uncommon. In this case, we can optimize the broadcast protocol by having each party explicitly indicate to each other party in a given round when there is no message to send with a ``heartbeat'' message. This optimization allows for an all-honest group of \cps to quickly complete a broadcast without waiting for \numcompute rounds to complete, at the cost of increasing the communication cost of the protocol with these heartbeat messages proportional to $\numcompute^2$ per-\cp.

In addition, we can optimistically hope that the broadcaster sends the data: when a non-broadcaster party echoes the message to another party, they can simply echo the hash, and let the party respond as to whether or not it has already received the message. If so, nothing needs to be done with this message so we are done. If not, the sender sends the full message as well.
\subsection{Rerandomization-Decryption $\Sigma$-Protocol}
We outline and prove correct a $\Sigma$-protocol that proves a \cp has performed a rerandomization, a re-encryption and a partial decryption.
\begin{theorem}
Suppose we have a ciphertext $(A,B)$, with $g,y_i,y$ publicly known with $y_i = g^{x_i}$, $y = \prod y_i$ and we wish to present $(\alpha,\beta)$ as a re-encryption rerandomization and partial decryption of $(A,B)$ so that $\alpha = (Ag^{\sigma})^r$, $\beta = (By^{\sigma})^r\alpha^{-x_i}$ for some random shift value $\sigma$ and rerandomization value $r$ and the private key $x_i$ of party $i$. Then $A,B,\alpha,\beta,g,y$ are known to both the prover and the verifier. We describe the proof:
\begin{enumerate}
\item The Prover $P$ selects $t_1,t_2,t_3$ at random and sends $T_1 = A^{t_1}g^{t_2}$, $T_2 = B^{t_1}y^{t_2}\alpha^{t_3}$, $T_3 = g^{t_3}$
\item The Verifier $V$ sends a random challenge $c$ to $P$.
\item The Prover sends
	\begin{align*}
		r_1 = rc + t_1 & & r_2 = \sigma rc + t_2 & & r_3 = -x_ic + t_3\\
	\end{align*}
	\vspace{-6mm}
\item The Verifier accepts the proof if and only if the following three equations hold:
\begin{align*}
A^{r_1}g^{r_2} = \alpha^cT_1 & & B^{r_1}y^{r_2}\alpha^{r_3} = \beta^cT_2 & & g^{r_3} = y_i^{-c}T_3\\
\end{align*}
\vspace{-3mm}
\end{enumerate}
The above interactive proof is an HVZK proof that proves knowledge of $r,\sigma,x_i$ such that the equations above for $\alpha,\beta$ hold.
\end{theorem}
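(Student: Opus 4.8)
The plan is to establish the three standard properties of a three-move protocol that together give the claimed conclusion: \textbf{completeness}, \textbf{special soundness} (which, since the protocol is public-coin and three-round, upgrades to knowledge soundness via the usual rewinding extractor and so justifies ``proves knowledge of $r,\sigma,x_i$''), and \textbf{honest-verifier zero knowledge}.

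\textbf{Completeness} is a direct substitution: plug the honest responses $r_1 = rc+t_1$, $r_2 = \sigma r c + t_2$, $r_3 = -x_i c + t_3$ into the three verification equations and check each collapses to an identity, using $\alpha = (Ag^{\sigma})^r$, $\beta = (By^{\sigma})^r\alpha^{-x_i}$ and $y_i = g^{x_i}$. For instance $A^{r_1}g^{r_2} = (A^{r}g^{\sigma r})^c A^{t_1}g^{t_2} = \alpha^c T_1$, and the other two unwind identically; this is routine exponent bookkeeping and I would not belabor it. \textbf{Special soundness} is the core step. Given two accepting transcripts $(T_1,T_2,T_3,c,r_1,r_2,r_3)$ and $(T_1,T_2,T_3,c',r_1',r_2',r_3')$ that share the first message with $c\neq c'$, I divide each pair of verification equations to cancel $T_1,T_2,T_3$. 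The third equation gives $g^{r_3-r_3'} = y_i^{-(c-c')}$, so $\hat x_i := (r_3-r_3')/(c'-c)$ satisfies $y_i = g^{\hat x_i}$. The first gives $A^{r_1-r_1'}g^{r_2-r_2'} = \alpha^{c-c'}$; setting $\hat r := (r_1-r_1')/(c-c')$ and $\hat\sigma := (r_2-r_2')/(r_1-r_1')$ yields $\alpha = A^{\hat r}g^{\hat r\hat\sigma} = (Ag^{\hat\sigma})^{\hat r}$. The second gives $B^{r_1-r_1'}y^{r_2-r_2'}\alpha^{r_3-r_3'} = \beta^{c-c'}$, and since $(r_3-r_3')/(c-c') = -\hat x_i$ this rearranges to $\beta = (By^{\hat\sigma})^{\hat r}\alpha^{-\hat x_i}$. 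Hence $(\hat r,\hat\sigma,\hat x_i)$ is a witness of exactly the claimed shape, computed in polynomial time from the two transcripts.

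\textbf{HVZK} follows from the standard reverse-sampling simulator: since the verifier is honest the challenge $c$ may be taken as fixed in advance, the simulator draws $r_1,r_2,r_3$ uniformly from $\mathbb{Z}_q$ and sets $T_1 := A^{r_1}g^{r_2}\alpha^{-c}$, $T_2 := B^{r_1}y^{r_2}\alpha^{r_3}\beta^{-c}$, $T_3 := g^{r_3}y_i^{c}$, so that the three verification equations hold by construction. I would then argue the output has the right distribution: in a real run $t_1,t_2,t_3$ are uniform and independent, so for a fixed $c$ the honest $(r_1,r_2,r_3)$ are uniform and independent as well, and the $T_i$ are thereafter pinned down by the same equations — hence the simulated transcript is distributed identically to a real one.

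The \textbf{main obstacle} — really the only non-mechanical point — is hidden in special soundness: $\hat\sigma$ is only well-defined when $r_1\neq r_1'$, i.e. when the extracted $\hat r\neq 0$. If $r_1 = r_1'$ the first equation forces $\alpha$ to be an explicit, publicly computable power of $g$, and I would handle this subcase separately, arguing that such transcripts are incompatible with the statements \f{ZKP-RRD} is actually asked to prove, using the invariant the protocol maintains that the first components of the ciphertexts passed through rerandomization-decryption (including $\alpha$ itself) are never the identity. Once that degenerate branch is dispatched, the rest of the argument is bookkeeping with discrete logarithms.
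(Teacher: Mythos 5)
Your proof is correct and follows the paper's own decomposition into completeness, special soundness, and honest-verifier zero knowledge with the standard reverse-sampling simulator, but your treatment of special soundness takes a genuinely better route. The paper extracts the witness by writing down the honest prover's response equations $r_1 = rc_1+t_1$, $r_2=\sigma r c_1+t_2$, $r_3=-x_ic_1+t_3$ for both transcripts and solving for $r,\sigma,x_i$ --- an argument that presupposes the responses were computed from a witness, which is exactly what special soundness must not assume. You instead divide the two sets of \emph{verification} equations to eliminate $T_1,T_2,T_3$ and recover $(\hat r,\hat\sigma,\hat x_i)$ from the accepting transcripts alone, then check that the extracted triple satisfies the stated relations for $\alpha$ and $\beta$; this is the argument the theorem actually needs. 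You also surface the degenerate branch $r_1=r_1'$ (i.e.\ $\hat r=0$), in which $\hat\sigma$ is undefined; the paper's formula $\sigma=(r_2-r_2')/(r(c_1-c_2))$ divides by $r$ and so has the same gap, silently. Your proposed dispatch of that branch (such transcripts force $\alpha$ to be a publicly computable power of $g$, in tension with the non-identity invariant maintained on first components during rerandomization-decryption) is the right instinct, though to make it airtight one should either restrict the relation to $r\neq 0$ or let the extractor output $\log_g\alpha$ in that branch. A last small point in your favor: your simulator's $T_3 := g^{r_3}y_i^{c}$ is the value that actually satisfies the check $g^{r_3}=y_i^{-c}T_3$, whereas the paper writes $T_3=g^{r_3}/y_i^{c}$, which has the sign of the exponent of $y_i$ flipped.
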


\begin{proof}
\begin{enumerate}
\item \textbf{Completeness.} The proof scheme is clearly complete: $P$ knows or generates $r,\sigma,t_1,t_2,t_3,-x_i$ so that it can properly generate $r_1,r_2,r_3$. Given these values, it is easy to see the three equations hold.
\item \textbf{Special Soundness.} Suppose the prover provides two proofs with the same commitment values $t_1,t_2,t_3$, with challenges $c_1$ and $c_2$.  Then we have:
\begin{align*}
r_1& = rc_1 + t_1&
 r_1'& = rc_2 + t_1\\
r_2& = \sigma rc_1 + t_2&
 r_2'& = \sigma rc_2 + t_2\\
r_3& = -x_ic_1 + t_3&
 r_3'& = -x_ic_2 + t_3
 \end{align*}
Then it is easy to see that $r = \frac{r_1 - r_1'}{c_1 - c_2}$ and then $\sigma = \frac{r_2 - r_2'}{r(c_1 - c_2)}$ and that $-x_i = \frac{r_3 - r_3'}{c_1 - c_2}$ so that special soundness is satisfied.
\item \textbf{Honest Verifier Zero Knowledge.} We define a simulator as follows:
The simulator behaves as an honest prover does until it receives $c$, then rewinds $V$, selects random $r_1,r_2,r_3$ and sets \begin{align*} T_1 = \frac{A^{r_1}g^{r_2}}{\alpha^c} & &
T_2 = \frac{B^{r_1}y^{r_2}\alpha^{r_3}}{\beta^c} & &
T_3 = \frac{g^{r_3}}{y_i^c}
\end{align*}
Since $V$ is an honest verifier, it, given the same randomness, provides the same challenge $c$ and the equations required for $V$ to verify both hold and the simulation is successful.
\end{enumerate}
\end{proof}
Finally, having provided the $\Sigma$-protocol and proof, we apply the compiler from Section \ref{sec:accountable-sigma} to convert it to an accountable UC secure protocol which UC-realizes the functionality defined in Figure \ref{fig:psc-zkp}, parameterized by the group element $g$, joint public key $Y$, and public keys $y_i$.

\subsection{Accountable $\Sigma$-Protocols}
\label{sec:accountable-sigma}
In order to satisfy the desired accountability property for the \cps we provide accountable zero-knowledge proofs. A natural way to solve this problem is with a non-interactive zero-knowledge proof. While there are general non-interactive zero-knowledge proof constructions for any relation secure in the UC model~\cite{UC-NIZKP} without a random oracle, they are theoretical. Instead, we provide one-to-many zero-knowledge proofs for the computations in the protocol which provide an accountability property with respect to the other participants in the protocol, but not public verifiability. Hazay and Nissim present a compiler which converts any $\Sigma$-protocol into a UC-secure zero-knowledge proof. Given an authenticated broadcast channel, we follow this approach and provide a similar compiler which converts any $\Sigma$-protocol into an accountable UC-secure one-to-many zero-knowledge proof. The two pieces required for this are a coin flipping protocol and a commitment protocol, both of which must be made accountable against faulty messages from \cps as well as instances where \cps prevent the protocol from completing by simply not sending messages required for the protocol to continue.

We obtain termination through a broadcast channel realized by the Dolev-Strong protocol~\cite{dolev-strong}, so that in every round all honest parties agree on the message sent by a broadcaster. This message may be empty or the broadcaster may equivocate, but in either case all honest parties agree that the broadcaster is faulty and the honest parties blame the broadcaster and exit.

The one-to-many commitments outlined in Canetti~\cite{canetti2002universally}, \f[1:M]{MCOM}, are non-interactive and so can be sent over the broadcast channel. We use these commitments over the broadcast channel to construct an accountable multi-party coin flipping protocol, \f{COIN}. The construction, functionality, and proof are natural and straightforward, and so we present these details in \ifappendix Appendix \ref{sec:appendix}. \else
 the appendix of an extended version of the paper, available online. \fi

 Finally, we outline how the UC-secure zero-knowledge proof of a shuffle given by Wikstr\"om~\cite{wikstrom-shuffle} can be constructed with these two accountable primitives as well, providing an accountable UC-secure verifiable shuffle.

We realize the following zero-knowledge proof functionality given in Figure 3 with the protocol provided in Figure 4, all parameterized by a relation $R$. We assume knowledge of the statement $x$ to be proved by all parties when the functionality is invoked, and that the identity of the prover is known and fixed.
\begin{figure}[h]
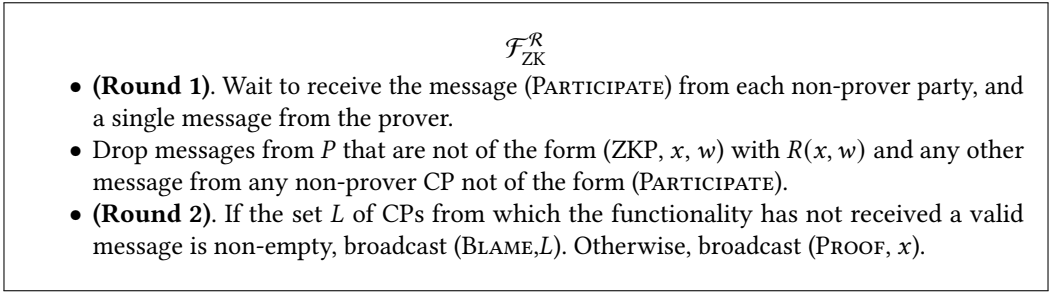

\begin{framed}

	\center{\f[$\mathcal{R}$]{ZK}}\\
\begin{itemize}
	\item \textbf{(Round 1)}. Wait to receive the message (\textsc{Participate}) from each non-prover party, and a single message from the prover.
	\item Drop messages from $P$ that are not of the form (\textsc{ZKP}, $x$, $w$) with $R(x,w)$ and any other message from any non-prover \cp not of the form (\textsc{Participate}).
	\item \textbf{(Round 2)}. If the set $L$ of \cps from which the functionality has not received a valid message is non-empty, broadcast (\textsc{Blame},$L$). Otherwise, broadcast (\textsc{Proof}, $x$).
\end{itemize}
\end{framed}
\caption{{\f[$\mathcal{R}$]{ZK}}, the ideal functionality for an interactive one-to-many zero-knowledge proof}
\label{fig:psc-zkp}
\end{figure}
\begin{figure}[h]
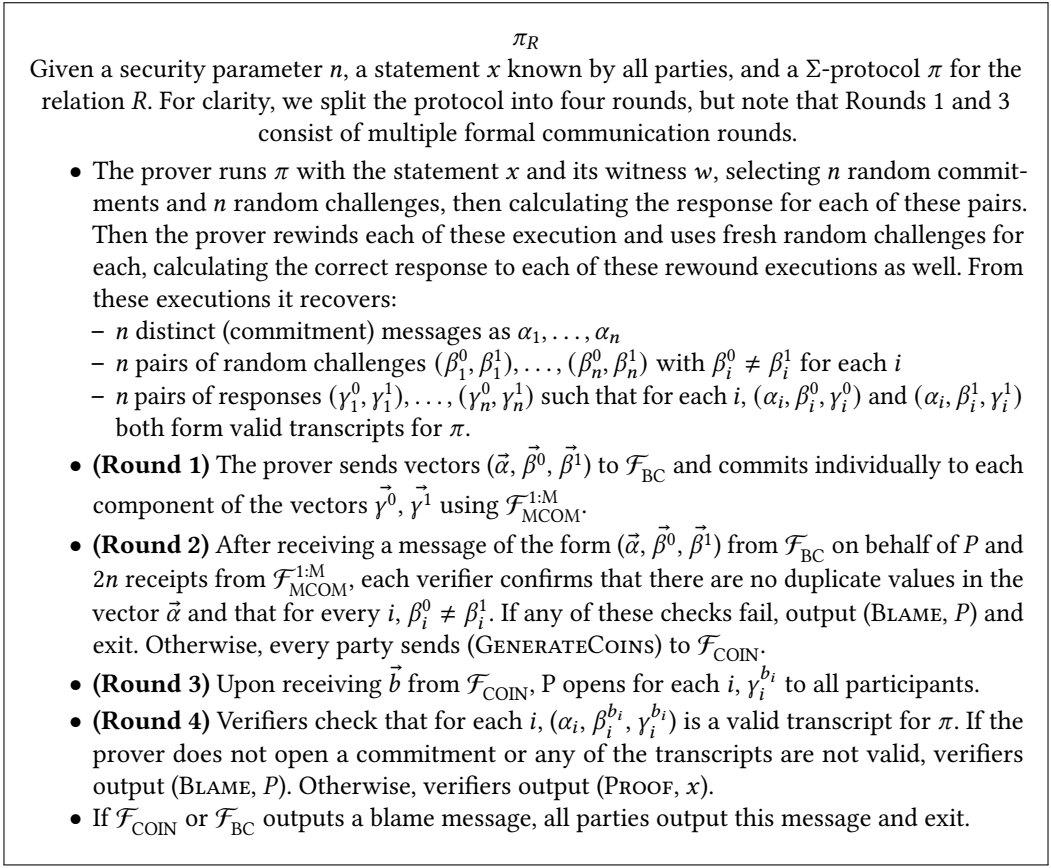

\begin{framed}

	\center{$\pi_R$}\\
	Given a security parameter $n$, a statement $x$ known by all parties, and a $\Sigma$-protocol $\pi$ for the relation $R$. For clarity, we split the protocol into four rounds, but note that Rounds 1 and 3 consist of multiple formal communication rounds.
\begin{itemize}
	\item The prover runs $\pi$ with the statement $x$ and its witness $w$, selecting $n$ random commitments and $n$ random challenges, then calculating the response for each of these pairs. Then the prover rewinds each of these execution and uses fresh random challenges for each, calculating the correct response to each of these rewound executions as well. From these executions it recovers: \begin{itemize}
			\item $n$ distinct (commitment) messages as $\alpha_1,\dots,\alpha_n$
			\item $n$ pairs of random challenges $(\beta_1^0,\beta_1^1),\dots,(\beta_n^0,\beta_n^1)$ with $\beta_i^0 \neq \beta_i^1$ for each $i$
			\item $n$ pairs of responses $(\gamma_1^0,\gamma_1^1),\dots,(\gamma_n^0,\gamma_n^1)$ such that for each $i$, $(\alpha_i, \beta_i^0, \gamma_i^0)$ and $(\alpha_i, \beta_i^1, \gamma_i^1)$ both form valid transcripts for $\pi$.
		\end{itemize}
	\item \textbf{(Round 1)} The prover sends vectors ($\vec{\alpha}$, $\vec{\beta^0}$, $\vec{\beta^1}$) to \f{BC} and commits individually to each component of the vectors $\vec{\gamma^0}$, $\vec{\gamma^1}$ using \f[1:M]{MCOM}.
\item \textbf{(Round 2)} After receiving a message of the form ($\vec{\alpha}$, $\vec{\beta^0}$, $\vec{\beta^1}$) from \f{BC} on behalf of $P$ and $2n$ receipts from \f[1:M]{MCOM}, each verifier confirms that there are no duplicate values in the vector $\vec{\alpha}$ and that for every $i$, $\beta_i^0 \neq \beta_i^1$. If any of these checks fail, output (\textsc{Blame}, $P$) and exit. Otherwise, every party sends (\textsc{GenerateCoins}) to \f{COIN}.
\item \textbf{(Round 3)} Upon receiving $\vec{b}$ from \f{COIN}, P opens for each $i$, $\gamma_i^{b_i}$ to all participants.
\item \textbf{(Round 4)} Verifiers check that for each $i$, ($\alpha_i$, $\beta_i^{b_i}$, $\gamma_i^{b_i}$) is a valid transcript for $\pi$.  If the prover does not open a commitment or any of the transcripts are not valid, verifiers output (\textsc{Blame}, $P$). Otherwise, verifiers output (\textsc{Proof}, $x$).
	\item If \f{COIN} or \f{BC} outputs a blame message, all parties output this message and exit.
\end{itemize}
\end{framed}
\caption{$\pi_R$, the hybrid-model protocol for an interactive one-to-many zero-knowledge proof}
\label{fig:uc-sigmaprotocol}
\end{figure}
\begin{theorem} If $\pi$ is a $\Sigma$-protocol for $R$, $\pi_R$ UC-realizes \f[$\mathcal{R}$]{ZK} in the static-corruption, synchronous, (\f[1:M]{MCOM}, \f{BC}, \f{COIN})-hybrid model if there is at least one honest \cp.
\end{theorem}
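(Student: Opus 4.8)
The plan is the standard UC simulation argument: for every static-corruption PPT adversary $A$ against $\pi_R$ in the (\f[1:M]{MCOM}, \f{BC}, \f{COIN})-hybrid model, construct a simulator $S$ so that no PPT environment distinguishes the hybrid execution from the ideal execution with \f[R]{ZK}. Because every message of $\pi_R$ is routed through one of the three ideal hybrid functionalities, all of which are consensus/broadcast primitives on which the honest parties necessarily agree, the honest verifiers' views are determined and the argument reduces to two cases according to whether the designated prover $P$ is honest or corrupt (the sub-cases ``all verifiers corrupt'' and ``$P$ is the only corrupt party'' being handled along the way). Throughout, $S$ emulates \f{BC}, \f[1:M]{MCOM}, and \f{COIN} toward $A$ via their adversary interfaces: it extracts the values a corrupt party commits to \f[1:M]{MCOM}, it forwards any blame sets those functionalities produce (which are subsets of the corrupted parties), and it programs the coin output $\vec b$ of \f{COIN} while sampling it uniformly, exactly as an accountable commit-and-reveal coin flip permits a simulator to do by equivocating the honest contribution.

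\textbf{Honest prover.} Here $S$ learns only $(\textsc{Proof},x)$ from \f[R]{ZK} and must produce a transcript without the witness. It first samples the value $\vec b$ that \f{COIN} will later output; for each index $i$ it runs the HVZK simulator of the $\Sigma$-protocol $\pi$ on statement $x$ with a uniformly random challenge, designates that challenge as $\beta_i^{b_i}$, and obtains a matching commitment $\alpha_i$ and response $\gamma_i^{b_i}$; it picks $\beta_i^{1-b_i}$ at random and distinct, broadcasts $(\vec\alpha,\vec\beta^0,\vec\beta^1)$ through \f{BC}, and commits to $\gamma_i^{b_i}$ via \f[1:M]{MCOM} together with an arbitrary value for $\gamma_i^{1-b_i}$ (which is never opened, and which \f[1:M]{MCOM} perfectly hides). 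In Round 3 it has \f{COIN} output $\vec b$ and opens $\gamma_i^{b_i}$. Indistinguishability then follows from HVZK of $\pi$ (each opened triple has the correct distribution), the hiding of \f[1:M]{MCOM} (the unopened commitments carry no information), the uniformity of $\vec b$, and the fact that the distinctness checks on $\vec\alpha$ and on $\beta_i^0\neq\beta_i^1$ hold except with negligible probability; whenever $A$ aborts \f{COIN} or \f{BC}, $S$ forwards the resulting $(\textsc{Blame},L')$ and reproduces the same blame set on the ideal side by having exactly the parties in $L'$ withhold $(\textsc{Participate})$ from \f[R]{ZK}.

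\textbf{Corrupt prover.} Now $S$ reads $(\vec\alpha,\vec\beta^0,\vec\beta^1)$ off \f{BC} in Round~1 and extracts the committed $(\vec\gamma^0,\vec\gamma^1)$ from \f[1:M]{MCOM}, and it performs the same Round-2 checks the honest verifiers would. If there is an index $i^\ast$ for which both $(\alpha_{i^\ast},\beta_{i^\ast}^0,\gamma_{i^\ast}^0)$ and $(\alpha_{i^\ast},\beta_{i^\ast}^1,\gamma_{i^\ast}^1)$ are accepting $\pi$-transcripts, then, since $\beta_{i^\ast}^0\neq\beta_{i^\ast}^1$, the special-soundness extractor of $\pi$ yields a witness $w$ with $R(x,w)$ that $S$ holds in reserve. $S$ then samples $\vec b$ uniformly, has \f{COIN} output it, simulates the honest verifiers' Round-4 check, submits $(\textsc{ZKP},x,w)$ to \f[R]{ZK} on $P$'s behalf exactly when that check passes and a witness was extracted, and otherwise submits a malformed message so that \f[R]{ZK} places $P$ in its blame set $L$ (merging with any \f{COIN}/\f{BC} misbehavior of corrupt verifiers as above). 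The two executions agree except on the event that the Round-4 check passes yet \emph{no} index is doubly valid --- i.e.\ at every index only the revealed transcript accepts --- which is precisely the $2^{-n}$ cut-and-choose soundness error and hence negligible.

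\textbf{Main obstacle.} I expect the delicate part to be the corrupt-prover case: obtaining a \emph{straight-line} (rewinding-free) extractor, which is exactly what the cut-and-choose design buys us by forcing the prover to commit to responses for both challenges, and then verifying that $S$'s uniform choice of $\vec b$ can simultaneously (a) match the real coin distribution and (b) make the ideal output --- \textsc{Proof} versus \textsc{Blame}, \emph{and the exact blame set} --- coincide with the hybrid output up to negligible probability, including the compound scenarios where a cheating prover and a coin-aborting verifier misbehave at once. The honest-prover direction is comparatively routine once one notes that programmability of \f{COIN} eliminates any need to rewind and collapses everything to HVZK of $\pi$ and the hiding of \f[1:M]{MCOM}; the $\geq 1$ honest \cp assumption is used only to ensure the blame sets arising from the hybrid functionalities never name an honest party, preserving accountability.
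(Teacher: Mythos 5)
Your proposal is correct and follows essentially the same route as the paper's proof: an honest-prover simulation that pre-samples the coin vector, uses the $\Sigma$-protocol's HVZK simulator for the to-be-opened branch and dummy commitments for the other, and programs \f{COIN}; and a corrupt-prover simulation that straight-line extracts a witness via special soundness from any index with two accepting transcripts, with the $2^{-n}$ all-coins-guessed event as the only point of divergence. The paper merely makes the honest-prover indistinguishability explicit as a per-index hybrid sequence reducing to HVZK, and handles blame identically by tying (\textsc{Participate}) messages to the corrupt verifiers' behavior toward \f{COIN}.
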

\begin{proof}
	We follow the approach in \cite{setops-joc2012}, but we must consider modifications required to construct the proof in a a one-to-many setting, and we must account for blame messages required for accountability. We follow the definition of $\Sigma$-protocol for $\pi$ given by Damg\aa rd in \cite{damgaard2002sigma} which defines two algorithms: a simulator $\pi_s$ for $\pi$ which accepts a uniformly random challenge value (i.e. one provided by an honest verifier) and outputs a transcript-triple indistinguishable from a real execution of $\pi$, and a witness extractor $\pi_e$ which accepts two transcripts for $\pi$ with identical first (commitment) values but distinct second (challenge) values, and outputs a valid witness $w$ for $\pi$.
	We define the simulator $S$ as follows. $S$ runs a copy of $A$ and runs every party honestly in the simulated real-model for the benefit of the environment $Z$ except where specified below:

	\textbf{Honest Prover}.
	\begin{itemize}
		\item Simulating Round 1. $S$ selects $2n$ challenge values, $\vec{\beta^0}$, $\vec{\beta^1}$ and flips $n$ coins to generate a coin vector $\vec{b}$. $S$ invokes the simulator $\pi_s$ with ($x$, $\beta_i^{b_i}$) as statement and challenge. $S$ recovers the commitment, $\alpha_i$, and response, $\gamma_i^{b_i}$ from each transcript produced by $\pi_s$, generates $n$ random values $\gamma_i^{\neg b_i}$ and collects these into vectors $\vec{\alpha}$, $\vec{\beta^0}$, $\vec{\beta^1}$, $\vec{\gamma^0}$, $\vec{\gamma^1}$, and broadcasts and commits to these values as appropriate in the protocol $\pi_R$.
		\item Simulating Round 2. $S$ sends (\textsc{GenerateCoins}) on behalf of the honest parties, and sends (\textsc{Participate}) to \f[$\mathcal{R}$]{ZK} for dishonest verifiers who sent (\textsc{GenerateCoins}) to \f{COIN} in this round.
		\item Simulating Round 3. This round only takes place if every dishonest verifier has sent \textsc{GenerateCoins} and $S$ has subsequently sent (\textsc{Participate}) on their behalfs. $S$ sends the vector $\vec{b}$ from Round 1 on behalf of \f{COIN} as a response to the verifiers. The prover is simulated honestly.
		\item Simulating Round 4. Honest verifiers are simulated honestly.
	\end{itemize}

	\textbf{Dishonest Prover}
	\begin{itemize}
		\item Simulating Round 1. $S$ recovers the vectors $\vec{\alpha}$, $\vec{\beta^0}$, $\vec{\beta^1}$, $\vec{\gamma^0}$, $\vec{\gamma^1}$ triples from the initial messages and commitments sent by $Z$ on behalf of $P$.
		\item Simulating Round 2. $S$ sends (\textsc{Participate}) to \f[$\mathcal{R}$]{ZK} on behalf of the honest parties, and on behalf of the dishonest parties that send (\textsc{GenerateCoins}) to \f{COIN}.
		\item Simulating Round 3. $S$ simulates \f{COIN} honestly, and if \f{COIN} sends a blame message instead of a bit vector, the protocol execution terminates. Otherwise, $S$ waits for $P$ to open its commitments. If $P$ opens the $n$ commitments and each of the $n$ transcripts is valid, $S$ checks the remaining $n$ unopened commitments. If any of these also forms a valid transcript, $S$ runs $\pi_e$ on one of these valid transcript pairs, recovering a witness $w$ and sends (\textsc{ZKP}, $x$, $w$) to \f[$\mathcal{R}$]{ZK}. If all of the remaining $n$ unopened commitments are invalid, $S$ halts and the simulation fails. If $P$ fails to open the $n$ commitments as required or any opened commitment generates an invalid transcript for $\pi$, $S$ sends $\perp$ to \f[$\mathcal{R}$]{ZK} on behalf of $P$.
		\item Simulating Round 4. Honest verifiers are simulated honestly.
	\end{itemize}
	We claim that the view of $Z$ in the ideal model as it interacts with $S$ and \f[$\mathcal{R}$]{ZK} is computationally indistinguishable from the view of $Z$ as it interacts with an adversary $A$ in the execution of $\pi_R$.

	\textbf{Honest Prover}. We define a sequence of hybrid executions, beginning with the real execution of $\pi_R$ (which we note is formally an execution in a hybrid model, but which we denote ``real execution'' here to avoid ambiguity) which we label $H_0$. In this case, $P$ has a witness $w$, constructs $n$ valid transcript pairs, opens one set of commitments, and the honest verifiers accept the proof. We note that any partial execution terminated by malicious verifiers is a prefix of a full execution of the protocol, with a blame component. In the case the protocol is aborted early, the set of \cps to blame is identical in all cases: $S$ simply sends participation messages on behalf of dishonest verifiers exactly when these verifiers participate in \f{COIN}. We define, for each $i>0$, $H_i$ to be a hybrid where for each $j \leq i$, the execution is modified as follows:
	\begin{itemize}
		\item $P$ constructs transcripts and behaves in round $1$ for each value with index $j > i$ as in the real execution $H_0$. For $\alpha_j$, $\beta^0_j$, $\beta^1_j$, $\gamma^0_j$, $\gamma^1_j$ with $j \leq i$, $P$ runs the protocol as $S$ does, generating these values using $\pi_s$ and without $w$.
		\item \f{COIN} behaves honestly except for $j \leq i$. For these values, \f{COIN} learns the value $b_j$ from $P$ determined in the simulation of Round 1 and if all parties send (\textsc{GenerateCoins}), \f{COIN} sends this mixed vector of $n$ values.
	\end{itemize}
	\begin{lemma}
		No PPT environment $Z$ can distinguish between $H_0$ and $H_n$ with non-negligible probability.
	\end{lemma}
	\begin{proof}
	It suffices to show that for $1 \leq i \leq n$, there is no PPT environment $Z$ that can distinguish with non-negligible probability between $H_i$ and $H_{i-1}$. Every message  of the transcripts from any execution of these two adjacent hybrids are identical except for $\alpha_i$, $\beta^0_i$, $\beta^1_i$, $\gamma^0_i$, $\gamma^1_i$, and $b_i$. $b_i$, $\beta^1_i$, and $\beta^0_i$ are identically distributed in each hybrid, as they are generated according to the same distribution (uniform in each case) by a party not controlled by $Z$. $\gamma^{\neg b_i}_i$ is never observed by $Z$ in any execution of any hybrid. Then if $Z$ can with non-negligible probability distinguish between $H_i$, $H_{i-1}$, then $Z$ can distinguish between the transcript ($\alpha_i$, $\beta_i^{b_i}$, $\gamma^{b_i}_i$) constructed by the simulator $\pi_s$ (in $H_{i}$) and a real execution of $\pi$ (in $H_{i-1}$). Since $\pi$ is computational zero knowledge, these transcripts are computationally indistinguishable so no such $Z$ can exist.
	\end{proof}
	Then it suffices to show that the difference between $H_n$ and the ideal model $I$ is a view change. We begin with $H_n$ and introduce the simulator $S$, noting that $P$ and \f{COIN} in $H_n$ and $S$ in $I$ behave identically by construction. In $H_n$, $P$ does not use its witness $w$ so all of its messages can be constructed by the simulator, which also simulates \f{COIN}, so that the coordination of the bits sent by \f{COIN} and the vector $\vec{b}$ generated by $P$ happens inside the execution of $S$. Honest \cps have no inputs and are simulated honestly so that their outputs always match the output of \f[$\mathcal{R}$]{ZK}. Finally, blame is determined based on the output of \f[$\mathcal{R}$]{ZK}, but in the ideal functionality makes this determination by simply determining whether the appropriate participation messages were sent by \cps and if the statement and witness ($x$, $w$) from the prover are valid. We note that we assume the protocol with an honest prover is run with valid input, meaning that an honest prover always sends a valid witness to \f[$\mathcal{R}$]{ZK}.

	\textbf{Dishonest Prover}
	\begin{itemize}
		\item In Round 1, 2, and 4 the simulator behaves exactly as instructed by $Z$. In round 3, the we observe that the probability that the simulation fails is the probability that the uniformly generated $n$ coins all land in the prover's favor, which is $\frac{1}{2^n} \in \textsc{Negl}(n)$.
	\end{itemize}
\end{proof}
\vspace{-3mm}
\subsection{Accountable Shuffles}
\label{sec:accountable-shuffles}
We sketch an accountable construction of Wikstr\"om's proof of a shuffle in the UC model, arguing that the abov techniques can be applied to construct a multi-verifier shuffle proof.

The proof consists of a sequence of six messages. Two are commitments, two are simple values that can be broadcast, one consists of a set of primes that may be generated by public coins, outlined in detail in~\cite{wikstrom-shuffle}. The final value that is missing is an RSA modulus which the prover cannot know. In the one-to-one case the verifier may trivially generate this value, but for our application it must be jointly generated by a group of verifiers, with no verifier knowing the factorization. We note that accountable efficient distributed protocols~\cite{efficient-rsa-keygen} to construct a public RSA modulus have been proved secure in the UC model since this proof of a shuffle was originally presented, and may be used in our setting to generate the required parameters in a distributed verifier setting with a dishonest majority and with an accountability property. Finally, we note that extraction of a witness in the proof of the shuffle is done by extracting the witness directly from the ideal zero-knowledge proof of knowledge of the cleartext functionality \f[$\mathcal{R}_C$]{ZK}. While this functionality is realized in the original proof by a version of verifiable secret sharing, we note this can also be realized by our accountable proof \f{ZKP-DL}, which we have realized above.
\subsection{Session Key Generation and Distribution}
While we assume a fixed PKI that distributes and certifies signing keys for every party, a new session key must be constructed for each invocation of the protocol, and it must be jointly constructed by all \cps in an accountable way and then distributed to the \dps so that all \dps agree on the joint key. We sketch how this may be done in the UC setting and outline a natural functionality to accomplish this, \f{SKGD}.
\begin{figure}[h!]
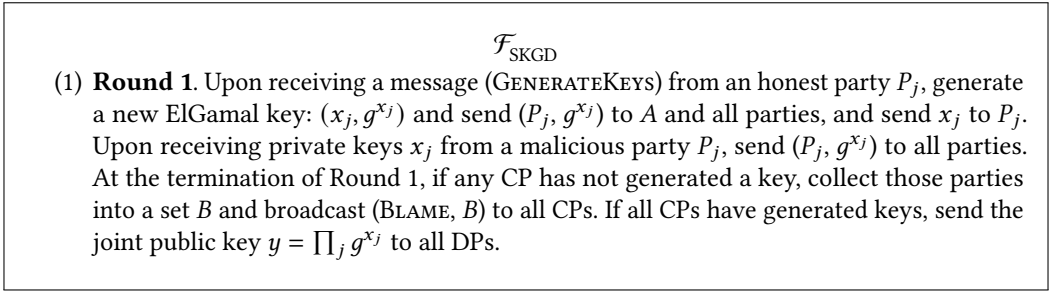

\begin{framed}
	\center{\f{SKGD}}
	\begin{enumerate}
		\item \textbf{Round 1}. Upon receiving a message (\textsc{GenerateKeys}) from an honest party $P_j$, generate a new ElGamal key: $(x_j, g^{x_j})$ and send ($P_j$, $g^{x_j}$) to $A$ and all parties, and send $x_j$ to $P_j$. Upon receiving private keys $x_j$ from a malicious party $P_j$, send ($P_j$, $g^{x_j}$) to all parties. At the termination of Round 1, if any \cp has not generated a key, collect those parties into a set $B$ and broadcast (\textsc{Blame}, $B$) to all \cps. If all \cps have generated keys, send the joint public key $y = \prod_j g^{x_j}$
		to all \dps.
	\end{enumerate}
\end{framed}
\caption{\f{SKGD}, an ideal functionality for ElGamal Key Generation and Distribution}
\label{fig:skgd}
\end{figure}
This functionality may be realized by letting each \cp broadcast its public keys and accountably prove knowledge of the corresponding private keys (including the \dps) but this is prohibitively inefficient even in theory with the expected thousands of \dps. Rather, we sketch a more efficient method that mirrors our implementation but may be realized in the UC setting:
\begin{enumerate}
	\item \cps generate a private key $x_j$.
	\item \cps broadcast the corresponding public key $g^{x_j}$ to all other \cps using \f{BC}.
	\item \cps prove knowledge of the discrete log $x_j$ accountably \textit{to each other only} using \f{ZKP-DL}.
	\item \cps each broadcast a signature of the joint public key $y = \prod_j g^{x_j}$, the product of the public keys of all \cps with signature $\sigma_j = \textsc{Sign}(\textsc{Session-Key} | \textit{sid} | j | y)$ using a UC signature functionality.
	\item \cps each send ($y$, $\vec{\sigma}$) to every \dparty via a point-to-point authenticated channel \f{auth}.
	\item \dps wait to receive a public key signed by all \cps. If they receive zero or more than one distinct signed key, the \dparty halts. Otherwise, they accept this single key and move forward with the protocol.
	\item Blame is assigned by all \cps if a \cp does not send a message in one of the above phases as expected, or if they send an invalid message: the ZKP is not valid, the signature is not valid, or the signature is on the wrong value.
\end{enumerate}
In this construction, each \cp sends each \dparty a single message. We sketch the security proof: all \cps agree on the keys of each \cp or one is blamed in the first phase. If one is blamed, the protocol terminates. If not, the honest \cp sends a valid key pair and vector of signatures to each \dparty, ensuring each \dparty has at least one valid message. Since one \cp is honest, let it be $CP_h$, the UC signature functionality constructs at most one $\sigma_h$, so no \dparty receives more than one valid public key pair.

\subsection{\Sys Ideal Functionality}
\label{proof-sys}
We give the ideal functionality for $\Sys$ in Figure \ref{f-psc-appendix}.
\begin{figure}[h!]
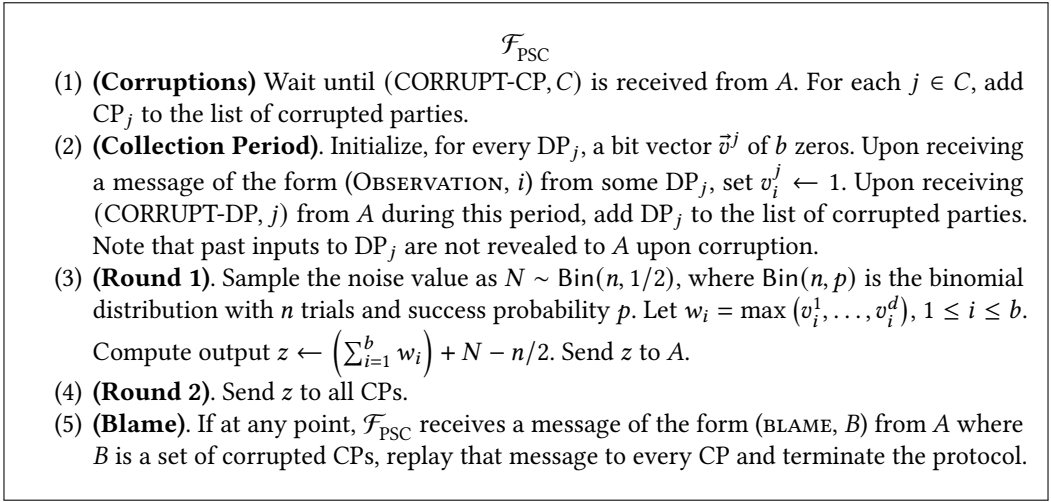

\begin{framed}
\center{\f{\sys}}
\begin{enumerate}
	\item \textbf{(Corruptions)} Wait until $(\textrm{CORRUPT-\cp}, C)$ is received from $A$. For each $j \in C$, add $\cpmath{j}$ to the list of corrupted parties.
	\item \textbf{(Collection Period)}. Initialize, for every $\dpmath{j}$, a bit vector $\vec{v}^j$ of $\numcounters$ zeros. Upon receiving a message of the form (\textsc{Observation}, $i$) from some $\dpmath{j}$, set $v^j_i \gets 1$. Upon receiving $(\textrm{CORRUPT-\dparty}, j)$ from $A$ during this period, add $\dpmath{j}$ to the list of corrupted parties. Note that past inputs to $\dpmath{j}$ are not revealed to $A$ upon corruption.
\item \textbf{(Round 1)}. Sample the noise value as $N\sim \mathsf{Bin}(n, 1/2)$, where $\mathsf{Bin}(n,p)$ is the binomial distribution with $n$ trials and success probability $p$. Let $w_i = \max\left(v^1_i,\ldots, v^{\numinput}_i\right)$, $1\le i\le \numcounters$. Compute output $z\gets \left(\sum_{i=1}^{\numcounters} w_i\right) + N - n/2$. Send $z$ to $A$.
\item \textbf{(Round 2)}. Send $z$ to all \cps.
\item \textbf{(Blame)}. If at any point, \f{\sys} receives a message of the form (\textsc{blame}, $B$) from $A$ where $B$ is a set of corrupted \cps, replay that message to every \cp and terminate the protocol.
\end{enumerate}
\end{framed}
\caption{\f{\sys}, the Private Set-Union Cardinality ideal functionality}
\label{f-psc-appendix}
\end{figure}
\subsection{Simulator Definition}
We define $S$, the ideal model adversary parametrized by a real-world adversary $A$, which interacts with \f{\sys} and an arbitrary environment machine $Z$. We assume a dummy adversary $A$ that simply relays messages to and from $Z$ since this generalizes all adversaries~\cite{canetti2002universally}. The primary challenges here are that $S$ must extract inputs of the corrupted parties before the result is calculated by \f{\sys} since these values are its inputs, and must fool $Z$ into believing the honest parties are executing the protocol faithfully on input given by $Z$. This means the behavior of honest parties with the ``dummy'' inputs from $S$ must be indistinguishable from honest parties executing the protocol with the true input from $Z$, and also that the final output of \f{\sys} is distributed as dictated by $Z$. Finally, $S$ must note observable malicious behavior that generates a blame message and forward these onto \f{\sys} accordingly. We define $\cp_h$ as the last honest \cp in the \cp ordering.
\begin{figure}
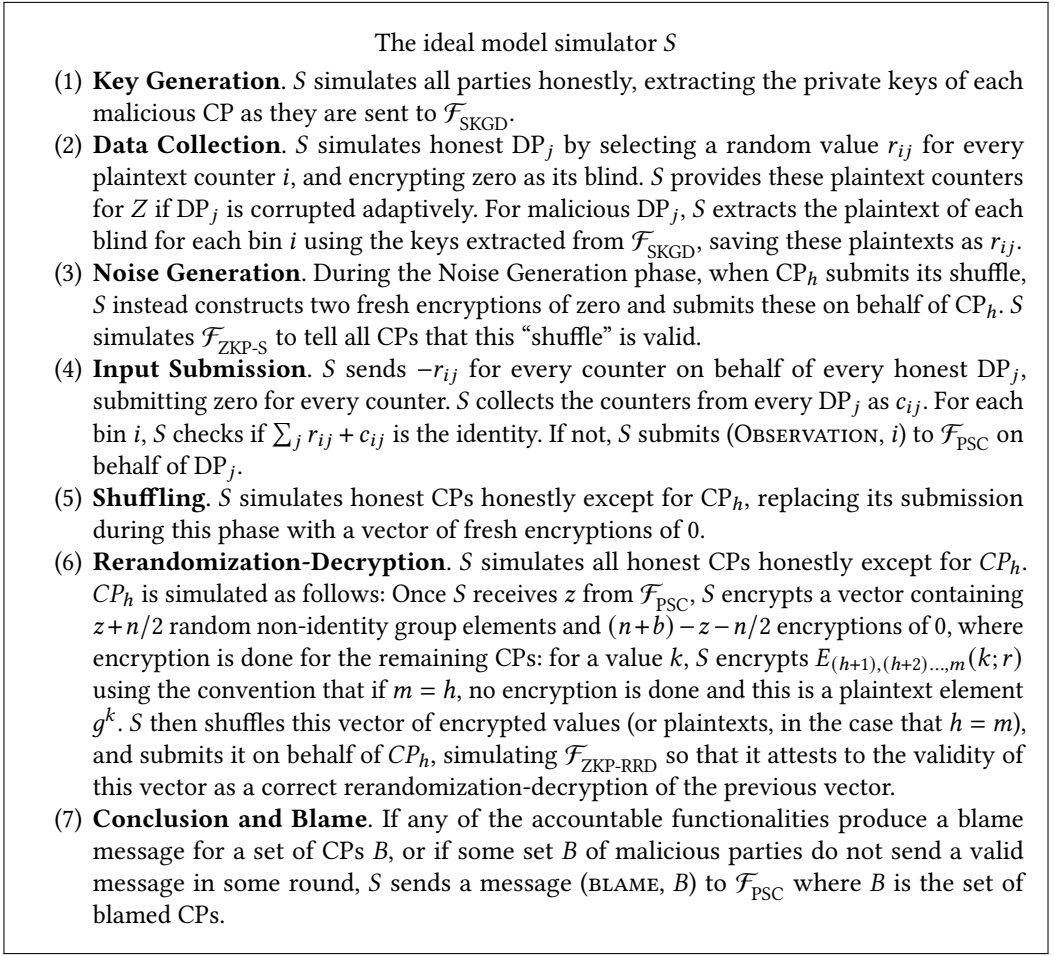

\begin{framed}
\center{The ideal model simulator $S$}

\begin{enumerate}
	\item \textbf{Key Generation}. $S$ simulates all parties honestly, extracting the private keys of each malicious \cp as they are sent to \f{SKGD}.
	\item \textbf{Data Collection}. $S$ simulates honest $\dpmath{j}$ by selecting a random value $r_{ij}$ for every plaintext counter $i$, and encrypting zero as its blind. $S$ provides these plaintext counters for $Z$ if $\dpmath{j}$ is corrupted adaptively. For malicious $\dpmath{j}$, $S$ extracts the plaintext of each blind for each bin $i$ using the keys extracted from \f{SKGD}, saving these plaintexts as $r_{ij}$.
	\item \textbf{Noise Generation}. During the Noise Generation phase, when $\cpmath{h}$ submits its shuffle, $S$ instead constructs two fresh encryptions of zero and submits these on behalf of $\cpmath{h}$. $S$ simulates \f{ZKP-S} to tell all \cps that this ``shuffle'' is valid.
	\item \textbf{Input Submission}. $S$ sends $-r_{ij}$ for every counter on behalf of every honest $\dpmath{j}$, submitting zero for every counter. $S$ collects the counters from every $\dpmath{j}$ as $c_{ij}$. For each bin $i$, $S$ checks if $\sum_j r_{ij} + c_{ij}$ is the identity. If not, $S$ submits \textsc{(Observation, $i$)} to \f{\sys} on behalf of $\dpmath{j}$.
	\item \textbf{Shuffling}. $S$ simulates honest \cps honestly except for $\cpmath{h}$, replacing its submission during this phase with a vector of fresh encryptions of $0$.
	\item \textbf{Rerandomization-Decryption}. $S$ simulates all honest \cps honestly except for $\cp_h$. $\cp_h$ is simulated as follows: Once $S$ receives $z$ from \f{\sys}, $S$ encrypts a vector containing $z +\numnoise/2$ random non-identity group elements and $(\numnoise + \numcounters) - z - \numnoise/2$ encryptions of $0$, where encryption is done for the remaining \cps: for a value $k$, $S$ encrypts $E_{(h+1),(h+2)\dots,\numcompute}(k;r)$ using the convention that if $\numcompute = h$, no encryption is done and this is a plaintext element $g^k$. $S$ then shuffles this vector of encrypted values (or plaintexts, in the case that $h=\numcompute$), and submits it on behalf of $\cp_h$, simulating \f{ZKP-RRD} so that it attests to the validity of this vector as a correct rerandomization-decryption of the previous vector.
	\item \textbf{Conclusion and Blame}. If any of the accountable functionalities produce a blame message for a set of \cps $B$, or if some set $B$ of malicious parties do not send a valid message in some round, $S$ sends a message (\textsc{blame}, $B$) to \f{\sys} where $B$ is the set of blamed \cps.
\end{enumerate}
\end{framed}
\caption{The ideal model simulator $S$}
\label{fig:psc-sim}
\end{figure}
\vspace{-3mm}

\subsection{Proof}
\label{sec:psc-proof}

Fix a security parameter $k \in \mathbb{N}$ and consider any PPT environment machine $Z$. $Z$ interacts with a given protocol $\pi$ by submitting environmental inputs to honest parties and reading their outputs, communicating with an adversary $A$, and corrupting parties at will according to our assumptions: \dps any time during the collection period, and \cps statically at the beginning of the protocol. We define the output of $Z \in \{0,1\}$ in the real execution of $\pi_{\textsc{\sys}}$ as $\textsc{REAL}_{\pi_{\textsc{\sys}},A,Z}(k)$ and the output of $Z$ in the ideal execution as $\textsc{IDEAL}_{\textrm{\f{\sys}},S,Z}(k)$, with the ideal functionality \f{\sys}, simulator $S$, defined above. For each $k$ each of the above outputs is a random variable, taken over the random input tapes provided to the parties and functionalities in each execution. If we fix the random input to $Z$, and then observe that the output of $Z$ is a function of the execution trace it observes, so if the messages sent by other parties in two distinct protocol runs with the same $Z$ are identically distributed, the output of $Z$ in each case is identically distributed as well.

We define the following sequence of (sequences of) hybrid executions which are protocols that interact with $Z$ defined by incremental modifications from $\textsc{REAL}_{\pi_{\textsc{\sys}},A,Z}(k,z)$, where each time the behavior of honest parties is adjusted, using information collected by the hybrid itself from its interaction with $Z$. These hybrids ``simulate'' the execution for the benefit of $Z$ as the simulator does, but generally can recover inputs that $S$ does not have access to like the direct inputs from $Z$ to the honest parties, the goal being that the behavior of the last hybrid in this sequence is the behavior of $S$. Each incremental transition is a transition between two executions that are identical (meaning the distribution of messages sent by honest parties is constructed identically before and after the change in hybrid executions) or are computationally indistinguishable assuming the Decisional Diffie-Helman assumption for our group $G$. We denote a hybrid execution with a single letter (e.g. $H$) with the fixed $k$, $Z$ implicit.

Since the ideal-model simulator $S$ does not know the inputs of honest parties, we transition from $\textsc{REAL}_{\pi_{\textsc{\sys}},A,Z}(k,z)$ by replacing the messages sent by honest parties with ``dummy'' values, which will be produced by $S$ in the ideal model. We reduce the environment $Z$'s ability to distinguish between two hybrid executions to winning the IND-CPA game for ElGamal, but in order to do this the hybrid must correctly ``decrypt'' a value encrypted to the public key of the IND-CPA challenger, which means we must recover these submitted secrets by other means to faithfully reproduce and swap-in the final result expected by $Z$ based on the inputs it has submitted.

The hybrid executions record and reproduce these final results in order to provide messages sent by honest parties that replace decryption of ciphertexts for which they do not have the keys. We formalize this process in a ``Ghost Execution''.

\subsubsection{The Ghost Execution and ``Fixing'' the Simulation}
\label{sec:proof-fix}
In the last phase of the protocol, Rerandomization-Decryption, we observe that the last honest $\cpmath{h}$ produces a vector of ciphertexts (or plaintexts, if $m=h$) as its output, but if we fix the plaintexts in the vector, then the output of $\cpmath{h}$ is a vector where each component is a uniformly random fresh encryption of its corresponding plaintext multiplied by a uniformly random nonzero Rerandomization factor. This is because $\cpmath{h}$ re-encrypts the ciphertext alongside the rerandomization-decryption. This means if the plaintexts are known to $\cpmath{h}$ in this phase, the output it constructs by honestly decrypting the vector it received to Rerandomize-Decrypt is identical in distribution to the output it constructs by constructing a fresh encryption of the decryption of that plaintext times a uniformly random nonzero value. We call this replacement process in the final phase ``fixing the execution''.

Any hybrid execution can extract parameters used for each operation in every phase by all parties, and use them to reconstruct the plaintext in the last phase as required, since each component that goes into the calculation of these plaintexts is constructed by an honest party (and therefore run by the hybrid directly) or for malicious parties, extracted as follows:
\begin{enumerate}
	\item Blinds from malicious parties are extracted from \f{ZKP-DL}.
	\item Malicious parties' shuffles in noise generation and shuffling are extracted from \f{ZKP-S}.
	\item The Rerandomization-Decryption factors are extracted from \f{ZKP-RRD}.
\end{enumerate}

Hybrid executions keep track of these values, extracting the values above selected by $Z$, and recording the values selected by honest parties. The values are stored in a parallel execution called the ghost execution, and in the last phase of the protocol, Rerandomization-Decryption, $\cpmath{h}$ uses these plaintexts constructed through direct extraction rather than its input. \f{ZKP-RRD} indicates to all other \cps that these values are a correct Rerandomization-Decryption as expected, and in this way we ``fix the execution''.

\subsubsection{Blind Submission}
We set $B_{10} = \textsc{REAL}_{\pi_{\textsc{\sys}},A,Z}(k)$. We define a sequence of $\numinput\numcounters$ executions:

\[B_{\textsc{seq}} = \langle B_{10},B_{11},B_{12},\dots,B_{1\numinput},\dots,B_{2\numinput},\dots,B_{\numcounters\numinput}\rangle\]
where each $B_{ij}$ with $i \in 1..\numcounters$, $j \in 1..\numinput$. Define $B_{ij}$ by:
\begin{itemize}
\item Run $B_{10}$ honestly, except that for counter $t$ and dp $D$, during blind submission, if $D$ is honest and either $t < i$ or $t = i$ and $D \leq j$, submit a fresh encryption of zero on behalf of $D$ for its blind in bin $t$. Record a random blind $b$ in the Ghost Execution for each of these honest \dps, and save as their corresponding plaintext counter $-b$.
\item $\cpmath{h}$ fixes the simulation as described in \ref{sec:proof-fix}, submitting values from the Ghost Execution as its message during the Rerandomization-Decryption phase.

\end{itemize}
\begin{lemma}
	Every adjacent pair of hybrid executions in $B_{\textsc{seq}}$ is computationally indistinguishable.
\end{lemma}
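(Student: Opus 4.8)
The plan is to reduce indistinguishability of each adjacent pair $B_{i(j-1)}, B_{ij}$ (equivalently $B_{ij}, B_{i(j+1)}$ in the enumeration of the sequence) to the IND-CPA security of exponential ElGamal, which follows from the Decisional Diffie-Hellman assumption in $G$. The only difference between two adjacent hybrids is how a single honest data party $D$'s blind for a single bin $t$ is formed: in the earlier hybrid it is a genuine encryption $E_y(r; \beta)$ of a uniformly random plaintext $\beta$ with the plaintext counter set to $-\beta$, and in the later hybrid it is a fresh encryption of $0$ with a random $b$ recorded only in the Ghost Execution and plaintext counter set to $-b$. Crucially, in both hybrids $\cpmath{h}$ ``fixes the simulation'': its Rerandomization-Decryption message is reconstructed from the Ghost Execution plaintexts rather than by honest decryption, so the final decrypted tally is \emph{identically} distributed in the two hybrids (it depends only on which bins are nonzero after aggregation, and recording $b$ in the ghost execution with counter $-b$ produces the same aggregate parity behavior as the real blind/counter pair). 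This is the point of threading everything through the ghost execution: it removes the dependence of the observable output on the actual plaintext hidden inside $D$'s blind.

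Concretely, I would build a distinguisher-to-adversary reduction: given an environment $Z$ distinguishing $B_{i(j-1)}$ from $B_{ij}$, construct an IND-CPA adversary $\mathcal{B}$ that receives a public key $y^*$ from the challenger, embeds it as the joint session key (possible because $S$-style simulation already controls key generation and can force the honest $\cpmath{h}$'s contribution so that the product equals $y^*$; the malicious \cps' contributions $x_j$ are extracted from \f{SKGD}), submits the two messages $\beta$ (random) and $0$ to the challenger, and plants the returned challenge ciphertext as $D$'s blind in bin $t$. All other honest-party behavior is simulated exactly as in the hybrid, including extraction of malicious inputs from \f{ZKP-DL}, \f{ZKP-S}, \f{ZKP-RRD}, and the fixing of $\cpmath{h}$'s final message via the ghost execution — note $\mathcal{B}$ never needs the secret key corresponding to $y^*$ precisely because the final-phase decryption is replaced by ghost-execution reconstruction. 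Then $\mathcal{B}$ runs $Z$ on this simulated transcript and outputs $Z$'s guess; $\mathcal{B}$'s advantage equals $Z$'s distinguishing advantage, which is therefore negligible.

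There is one subtlety to handle carefully, and I expect it to be the main obstacle: making sure the challenge ciphertext is never re-used or decrypted in a way that would require the secret key, and that the re-encryption/rerandomization operations applied to it downstream (in Noise Generation is irrelevant here, but Shuffling and Rerandomization-Decryption act on the aggregate vector containing $D$'s contribution) can all be performed by $\mathcal{B}$ using only public operations. Re-encryption and aggregation need only the public key; rerandomization with a known scalar likewise; and the partial decryptions performed by the \emph{malicious} \cps during Rerandomization-Decryption are done by $Z$ itself, while the honest $\cpmath{h}$'s partial decryption step is exactly what gets replaced by the ghost-execution value. So no honest party ever has to actually decrypt the challenge ciphertext. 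The remaining bookkeeping — that the ghost-execution tally matches in both hybrids, and that blame messages are identical since they depend only on which parties sent well-formed messages and not on the hidden plaintext — is routine, and a standard hybrid-over-$i,j$ summation then yields that $B_{10}$ and $B_{\numcounters\numinput}$ are computationally indistinguishable.
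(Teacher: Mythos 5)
Your proposal is correct and matches the paper's proof in essence: both reduce distinguishing an adjacent pair to the IND-CPA game for ElGamal by embedding the challenger's public key as the honest \cp's contribution, submitting the message pair (random blind, $0$), planting the challenge ciphertext as the single differing blind, and relying on the ghost-execution ``fixing'' of $\cpmath{h}$'s Rerandomization-Decryption output so that the unknown secret key is never needed. The only cosmetic difference is that the paper sets $y_h = y_c$ and layer-encrypts the challenge ciphertext to the remaining \cps' keys rather than forcing the joint product to equal $y^*$, which changes nothing substantive.
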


\begin{proof}

Suppose we have two adjacent hybrid executions $B_l$, $B_r$. We note that by construction, in each adjacent pair of hybrid executions the behavior of honest parties differs by exactly one ciphertext submitted by one \dparty. Assume there exists a PPT environment $Z$ that can distinguish between these executions. We define an algorithm $A$ that runs $Z$ and claim that the advantage of $Z$ distinguishing $B_l$ and $B_r$ is the same as the advantage of $A$ in the IND-CPA game for ElGamal. $A$ plays the IND-CPA game for ElGamal:
\begin{enumerate}
	\item A accepts a public key $y_c$ from the challenger in the IND-CPA game. A runs an execution of $B_l$ and replaces $y_h$ with $y_c$ in the protocol. Each time a functionality (more precisely, ZKPs in the Rerandomization-Decryption phase and \f{SKGD}) requires $\cpmath{h}$ prove knowledge of the secret key, the ZKP functionality accepts the ``proof'' provided by $\cpmath{h}$.
	\item For the blind in bin $t$ submitted by \dparty $D$, $A$ calculates a random blind $b$ value as an honest \dparty would and sends the pair of plaintexts ($b$, $0$) to the challenger in the IND-CPA game and receives a ciphertext $C$.
	\item A sends $C$ as the ciphertext blind for bin $t$ on behalf of \dparty $D$, further encrypting $C$ to the public keys of all other \cps.
	\item A sets $-b$ as the plaintext counter for bin $t$ and simulates the \dparty exactly as in $B_r$.
	\item Instead of decrypting, $A$ replaces each output of $\cpmath{h}$ during Rerandomization-Decryption as described in \ref{sec:proof-fix} and gives to the IND-CPA challenger the output of $Z$.
\end{enumerate}
If the challenger encrypts $b$, the blind for $t$ is random and the plaintext is its opposite and this is an execution of $B_l$.
If the challenger encrypts $0$, the ``fixed'' value output by $\cpmath{h}$ is the same as before, but this is an execution of $B_r$.

Therefore $A$ has the same nonzero advantage in the IND-CPA game for ElGamal that $Z$ does distinguishing between $B_l$, $B_r$ which is non-negligible, violating the DDH assumption.

\end{proof}
\subsubsection{Data Collection}
We set $D_{10} = B_{\numcounters\numinput}$. We define a sequence of $\numinput\numcounters$ hybrid executions:
\[D_{\textsc{seq}} = D_{10},D_{11},D_{12},\dots,D_{1\numinput},\dots,D_{2\numinput},\dots,D_{\numcounters\numinput}\]
where each $D_{ij}$ with $i \in 1..\numcounters$, $j \in 1..\numinput$. Define $D_{ij}$ by:
\begin{itemize}
\item Run $D_{10}$ honestly, except that when a currently-honest \dparty $D$ is instructed to record an observation for counter $t$, if $t < i$ or $t = i$ and $D \leq j$, ignore it. Add a random value $r_t$ to the sum for counter $t$ in the ghost execution.
\item $\cpmath{h}$ fixes the simulation as described in \ref{sec:proof-fix}.

\end{itemize}
\begin{lemma}
	Every adjacent pair of hybrid executions in $D_{\textsc{seq}}$ is identically distributed.
\end{lemma}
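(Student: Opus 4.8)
The plan is to show that the lone modification separating two consecutive members of $D_{\textsc{seq}}$ is \emph{perfectly} invisible to $Z$, so no distinguishing is possible even information-theoretically. Consecutive members differ only in whether the observations directed at one \dparty $D_j$ for one counter $i$ (while $D_j$ is still honest) are processed or ignored, so it suffices to compare the ``before'' hybrid $D_{i,j-1}$ --- in which such an \textsc{Observation} causes the real execution's $D_j$ to overwrite its plaintext counter $\vec{c}_i^{D_j}$ with a fresh uniform $r\in\mathbb{Z}_q$ and causes the ghost execution's counter-$i$ aggregate to acquire the term $b+r$, where $b$ is the uniform ghost blind installed for $(D_j,i)$ by the preceding $B_{\textsc{seq}}$ hybrids (whose real counterpart is an encryption of $0$) --- with the ``after'' hybrid $D_{ij}$, in which the \textsc{Observation} is ignored, so $\vec{c}_i^{D_j}$ retains the uniform value $-b$ while the counter-$i$ aggregate instead acquires a fresh uniform $r_i$ directly. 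If the environment issues no qualifying observation, or if $D_j$ is corrupted before it is issued, the two executions run verbatim identically; and $D_{i0}$, when it appears, coincides with $D_{i-1,\numinput}$. So we may assume at least one qualifying observation occurs.

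Next I would isolate the only two items in $Z$'s view that this change can affect. The first is $\vec{c}_i^{D_j}$ itself, which $D_j$ broadcasts during Input Submission via \f{BC} and which is also exposed if $D_j$ is adaptively corrupted (the erasure step in Data Collection having already destroyed everything else $D_j$ held about counter $i$); in both hybrids this is a single freshly sampled uniform element of $\mathbb{Z}_q$ ($r$ versus $-b$), hence identically distributed. The second is the counter-$i$ aggregate in the ghost execution, which reaches $Z$ only through the plaintext vector $\cpmath{h}$ broadcasts after ``fixing'' the Rerandomization-Decryption phase, and there only through the zero/nonzero pattern of the decrypted entries (each nonzero entry independently re-randomized, all entries shuffled). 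The contribution of $D_j$'s counter-$i$ observation(s) to that aggregate is $b+r$ in $D_{i,j-1}$ and $r_i$ (or a sum of such fresh values, if several observations were issued) in $D_{ij}$; in either case this is a fresh uniform element \emph{independent of everything else in $Z$'s view}: in $D_{i,j-1}$ because once $D_j$'s counter is overwritten by $r$ the blind $b$ never reappears in the real execution, so although $Z$ learns $r$ it learns nothing about $b$; in $D_{ij}$ because $r_i$ is fresh and never revealed even though $Z$ now learns $-b$. I would also check that the adversary cannot make its malicious \dps' blinds or submitted counters depend on $r$ or $b$ --- honest blinds are perfectly hiding encryptions of $0$, and \f{BC} prevents a rushing adversary from reading honest \dps' Input-Submission broadcasts before committing its own. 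Hence the rest of the counter-$i$ aggregate (malicious contributions, other honest \dps' blind-plus-counter sums, and injected values for other \dps' observations of counter $i$) is, conditioned on an identically distributed history, the same random variable in both hybrids and is independent of the fresh uniform term, so the counter-$i$ aggregate, and with it $\cpmath{h}$'s fixed output, has exactly the same distribution in the two hybrids.

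The main obstacle is the bookkeeping of the correlation structure of $r$, $b$, and $r_i$. In $D_{i,j-1}$ a single value $r$ plays two roles --- it is the counter $Z$ observes, and it sits inside the ghost aggregate as $b+r$ --- whereas in $D_{ij}$ the observed counter $-b$ and the aggregate injection $r_i$ are drawn independently; what must be verified carefully is that, conditioned on the rest of $Z$'s view, the pair (observed counter, aggregate contribution) is a pair of independent uniforms in \emph{both} hybrids, which holds precisely because $b$ is a secret that in $D_{i,j-1}$ leaves no trace once the observation overwrites the counter. Assembling these observations by the standard coupling/hybrid induction --- each message $Z$ receives is distributed identically given an identically distributed prefix --- yields that $D_{i,j-1}$ and $D_{ij}$ induce the same distribution on $Z$'s output, which proves the lemma.
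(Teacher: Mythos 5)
Your proposal is correct and follows essentially the same route as the paper's proof: isolate the only two quantities the change can affect (the plaintext counter eventually revealed to $Z$ and the counter's ghost-execution aggregate), and show each is a uniform value masked by a summand $Z$ never learns ($b$ in the ``before'' hybrid, the fresh $r_t$ in the ``after'' hybrid), so the pair is identically distributed in both executions. Your additional care about adversarial dependence of malicious contributions and the no-observation/early-corruption cases is consistent with, and slightly more explicit than, the paper's argument.
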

\begin{proof}
Suppose we have two adjacent hybrid executions $D_l$, $D_r$. We note that by construction each adjacent pair of hybrid executions differs by at most one counter $t$ submitted by one honest $\dpmath{j}$. If no instruction of the form (\textsc{Observation}, $t$) is sent to $\dpmath{j}$ while $\dpmath{j}$ is honest, the two executions do not differ at all. So assume such a message is received. Before this occurs, $\dpmath{j}$ stores a value $-b$ for counter $t$, and the ghost execution has recorded $b$ as the corresponding blind. Then, in $D_l$, when $\dpmath{j}$ is sent an \textsc{Observation} message, $\dpmath{j}$ replaces $-b$ with a fresh random value $b'$. In $D_r$, $\dpmath{j}$ sets $-b$, but the blind for $\dpmath{j}$ for counter $t$ is replaced with a fresh random value $r_t$ in ghost execution. We observe now that $\dpmath{j}$ may be adaptively corrupted at this point, or it may not. Either way, crucially the plaintext counter stored for $\dpmath{j}$ for counter $t$ is revealed to $Z$ \textit{after} the observation has taken place, and the changes described above have been made. We describe the differences between these executions, which consist of the value of the plaintext counter for $t$ held by $\dpmath{j}$ and the sum of the inputs for counter $t$ by $\dpmath{j}$ in the ghost execution after the Input Submission phase.

\begin{itemize}
	\item In $D_l$, $\dpmath{j}$'s plaintext counter is $b'$, a uniformly random value unrelated to any other value. The sum in the ghost execution for the submissions on behalf of $\dpmath{j}$ is $b$ and a value selected by $Z$ with no knowledge of $b$.
	\item In $D_r$, $\dpmath{j}$'s plaintext counter is $-b$, where $b$ is selected uniformly at random. The sum in the ghost execution for the submissions on behalf of $\dpmath{j}$ is $b + r_t$ and an element selected by $Z$ with knowledge of $-b$.
\end{itemize}
The plaintext counters $b'$ and $-b$ are both uniformly random values ($b'$ is chosen this way, $-b$ is random since $b$ is).
The sums in the ghost execution are uniformly random as well, since they both contain a summand which is uniformly random ($b$ in $D_l$ and $r_t$ in $D_r$), and about which $Z$ never learns any information.
\end{proof}
\subsubsection{Noise Generation}
We set $D_{\numcounters\numinput}  = N_0$ and define a sequence of hybrid executions beginning with the final execution in the previous step:
\[N_{\textsc{SEQ}} = \langle N_0, N_1, \dots, N_\numnoise\rangle\]
\begin{itemize}

	\item $N_j = N_{j-1}$ except that during the Noise Generation phase, for rounds $i \leq j$, when $\cpmath{h}$ calculates $\vec{N}^{ih}$, it instead replaces this vector with a vector of two fresh distinct encryptions of zero and \f{ZKP-S} indicates to all other \cps that these ciphertexts are a shuffle of the previous ciphertexts.
	\item $\cpmath{h}$ fixes the simulation as described in \ref{sec:proof-fix}.
\end{itemize}
\begin{lemma}
	Every adjacent pair of hybrid executions $N_j$, $N_{j+1}$ is computationally indistinguishable.
\end{lemma}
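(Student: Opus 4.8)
The plan is to reduce any purported distinguisher between $N_j$ and $N_{j+1}$ to an IND-CPA adversary against exponential ElGamal, and hence to a DDH distinguisher for $G$, following the same template as the Blind Submission lemma.

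First I would localize the difference between the two executions. By construction $N_j$ and $N_{j+1}$ are run identically except in round $j+1$ of Noise Generation: in $N_{j+1}$, when $\cpmath{h}$ would broadcast its shuffle $\vec{N}^{(j+1)h}$ it instead broadcasts two fresh independent encryptions of $0$ and \f{ZKP-S} is simulated to attest their validity, whereas in $N_j$ it broadcasts the genuine re-encryption-and-transposition of the incoming pair $\vec{N}^{(j+1)(h-1)}$. In both hybrids $\cpmath{h}$ continues to sample, and record in the ghost execution, the permutation and randomness it would have used, so the plaintext vector reconstructed for the ``fix'' of Section~\ref{sec:proof-fix} has the same distribution in both; in particular $\cpmath{h}$ never needs $x_h$, and any abort (a blame triggered at, before, or after round $j+1$) occurs identically, since the modification neither causes nor suppresses a blame and the ideal \f{ZKP-S} never reveals a witness.

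Next I would argue that the incoming pair to $\cpmath{h}$ in round $j+1$ always encrypts the multiset $\{0,1\}$: each upstream \cp was forced by \f{ZKP-S} either to apply a re-encryption permutation, which preserves plaintexts, or to be blamed, in which case both hybrids terminate identically. Hence $\cpmath{h}$'s honest output in $N_j$ is distributed exactly as $(E_y(r_0;a),E_y(r_1;1-a))$ with fresh independent $r_0,r_1$ and a uniform bit $a$ --- \ie, a pair containing one fresh encryption of $1$ in a uniformly random position --- whereas in $N_{j+1}$ it is $(E_y(r_0;0),E_y(r_1;0))$. The reduction is then: given a distinguisher $Z$, define $A$ which receives a challenge key $y_c$, runs $N_j$ with $y_h$ replaced by $y_c$ (so the joint key is $y = y_c\prod_{i\ne h}y_i$ and $\cpmath{h}$'s proofs of knowledge of $x_h$ to \f{SKGD} and \f{ZKP-RRD} are simulated to be accepted, with $\cpmath{h}$ performing the ``fix'' instead of decrypting); in round $j+1$ of Noise Generation $A$ picks a uniform position $\rho\in\{1,2\}$, sends the message pair $(0,1)$ to the IND-CPA challenger, re-keys the returned ciphertext $C^{\ast}$ to the joint key $y$ as in the Blind Submission lemma (raising its first component to $\sum_{i\ne h}x_i$ and multiplying into the second), places it at position $\rho$ and a fresh $E_y(0)$ at the other position, and broadcasts the pair; rounds $i\le j$ are run with the zeros replacement and rounds $i>j+1$ with genuine shuffles. $A$ outputs $Z$'s guess. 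If $C^{\ast}$ encrypts $1$ the view of $Z$ is exactly that of $N_j$, and if $C^{\ast}$ encrypts $0$ it is exactly that of $N_{j+1}$, so $A$'s advantage equals $Z$'s distinguishing advantage, which is therefore negligible under DDH.

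The part needing the most care is verifying that nothing observable beyond $\cpmath{h}$'s single round-$(j+1)$ broadcast changes between the two hybrids: the downstream malicious shuffles and the Noise-Generation blame checks are the same (deterministic) functions of that broadcast in both worlds, and the permutations the malicious \cps choose afterward --- and hence the noise bits the ghost execution reconstructs for the ``fix'' --- cannot depend on whether the pair encrypts $\{0,1\}$ or $\{0,0\}$, because those ciphertexts are encrypted under $y$, whose factor $y_h$ is unknown to the adversary. Once these invariants are checked, the reduction itself is routine, exactly as in the Blind Submission lemma.
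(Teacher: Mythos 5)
Your proposal is correct and follows essentially the same route as the paper: a reduction to IND-CPA for exponential ElGamal in which the challenge key is substituted for $y_h$, the message pair $(0,1)$ is sent to the challenger, the returned ciphertext is re-keyed to the joint key and placed in $\cpmath{h}$'s round-$(j+1)$ noise broadcast, and the Rerandomization-Decryption output is ``fixed'' from the ghost execution so the reduction never needs $x_h$. The extra invariants you check (the incoming pair encrypts $\{0,1\}$, blame behavior is unchanged, the ghost execution's noise bit is consistent in both challenge branches) are implicit in the paper's argument and do not change the approach.
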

\begin{proof}
Suppose an environment $Z$ can distinguish between the two adjacent executions. We construct an IND-CPA adversary $A$ that has non-negligible advantage in the IND-CPA game for ElGamal.
We play the IND-CPA game for ElGamal:
\begin{enumerate}
	\item $A$ accepts a public key $y_c$ from the challenger in the IND-CPA game. $A$ runs $N_j$ with $y_c$ as the public key for $\cpmath{h}$ until the Noise Generation phase. Proof functionalities that require knowledge of the secret key for $y_c$ accept $\cpmath{h}$'s proofs as correct.
	\item $A$ sends the plaintexts ($0$, $1$) to the challenger and receives a ciphertext $C$.
	\item During Noise Generation, $\cpmath{h}$ performs the shuffle as before but replaces the ciphertext that is an encryption of $1$ with $C$, encrypted with the keys of the remaining \cps.
	\item $A$ fixes these ciphertexts during Rerandomization-Decryption as described in Section \ref{sec:proof-fix}.
	\item $A$ submits the output of $Z$ to the challenger in the IND-CPA game.
\end{enumerate}
Then we observe that if the challenger encrypts $1$, we have a fresh encryption of $1$ and our new re-encryption of $0$ in a random order, which is a valid shuffle of the previous ciphertexts and we are in execution $N_j$. If the challenger encrypts $0$, we submit two fresh encryptions of $0$ on behalf of $\cpmath{h}$ and this is $N_{j+1}$. Then our advantage in the IND-CPA game is the same as the advantage of $Z$ distinguishing between $N_j$, $N_{j+1}$ which is negligible.
\end{proof}
\subsubsection{Shuffling}
We define a sequence of hybrid executions beginning with the previous transition:
\[N_{\numnoise}  = S_0, S_1, \dots S_{\numnoise+\numcounters}\]
\begin{itemize}
	\item $S_j = S_{j-1}$ except that during the Shuffling phase, when $\cpmath{h}$ calculates $\vec{d}^h$, for components $i \leq j$, $\cpmath{h}$ broadcasts an encryption of zero and \f{ZKP-S} indicates to all other \cps that these ciphertexts are a shuffle of the previous ciphertexts.
	\item $\cpmath{h}$ fixes the simulation as described in \ref{sec:proof-fix}.
\end{itemize}
\begin{lemma}
	Every adjacent pair of hybrid executions $S_j$, $S_{j+1}$ is computationally indistinguishable.
\end{lemma}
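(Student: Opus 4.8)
This lemma is the structural twin of the Noise Generation lemma, so the plan is to follow that reduction almost verbatim. The key observation is that $S_j$ and $S_{j+1}$ differ in exactly one ElGamal ciphertext: the component at position $j+1$ of the vector $\vec{s}^h$ that $\cpmath{h}$ broadcasts during the Shuffling phase. In $S_j$ that component is the honest re-encryption (under the joint key $y$) of whichever component of $\vec{s}^{h-1}$ the permutation $\pi^h$ pairs with position $j+1$, while in $S_{j+1}$ it is a fresh encryption of $0$. In both hybrids \f{ZKP-S} is simulated so that it attests $\cpmath{h}$'s message is a valid shuffle, and in both hybrids $\cpmath{h}$ ``fixes the execution'' during Rerandomization-Decryption as in Section~\ref{sec:proof-fix}; hence this ciphertext is never decrypted and never feeds anything $Z$ sees through an honest party's output, so its plaintext is pure IND-CPA material.

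First I would assume a PPT environment $Z$ that distinguishes $S_j$ from $S_{j+1}$ with non-negligible probability and build an IND-CPA adversary $A$ against ElGamal. $A$ takes the challenge public key $y_c$, runs $S_j$ internally with $y_c$ installed as $\cpmath{h}$'s public key, and --- exactly as in the earlier reductions --- has the simulated proof functionalities (\f{SKGD} at Key Generation, \f{ZKP-RRD} during Rerandomization-Decryption) accept $\cpmath{h}$'s messages even though it lacks $\cpmath{h}$'s secret key. Because $\cpmath{h}$ is the last honest \cp, $A$ knows every other \cp secret key --- it runs the honest \cps itself and extracts the malicious \cps' keys from \f{SKGD} --- so it can convert any ciphertext under $y_c$ into one under the joint key $y$ with the same fresh randomness by folding in the remaining keys, just as in the Noise Generation reduction.

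Next I would describe the embedding. Running $S_j$ together with its ghost execution, $A$ knows the intended plaintext $M$ of the $\vec{s}^{h-1}$ entry that $\pi^h$ pairs with position $j+1$. When $\cpmath{h}$ assembles its Shuffling broadcast, $A$ queries the IND-CPA challenger on the pair $(M,0)$, receives $C$, lifts $C$ to the joint key $y$ as above, and uses the result as the component at position $j+1$; the components at positions $1,\dots,j$ are encryptions of $0$ and those at positions $j+2,\dots,b+n$ are computed honestly, matching both hybrids. The execution then continues, $\cpmath{h}$ fixes Rerandomization-Decryption from the ghost plaintexts, and $A$ forwards $Z$'s output as its IND-CPA guess. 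If the challenger encrypted $M$, the component at position $j+1$ is a fresh encryption of $M$, which is distributed exactly like an honest re-encryption of the corresponding $\vec{s}^{h-1}$ entry, so $A$ has perfectly simulated $S_j$; if the challenger encrypted $0$, $A$ has simulated $S_{j+1}$. Thus $A$'s IND-CPA advantage equals $Z$'s distinguishing advantage, which contradicts the DDH assumption, so the lemma follows.

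The one thing I would be careful to verify --- essentially the only respect in which this is not a literal copy of the Noise Generation argument --- is that the reduction remains consistent with \emph{all} of the prior hybrid surgery: the ghost execution must still pin down $M$ even though $\cpmath{h}$'s noise shuffles and many Data Party blinds have themselves already been replaced by encryptions of $0$, and forcing \f{ZKP-S} to accept $\cpmath{h}$'s non-shuffle must not put $\cpmath{h}$'s output out of sync with the \emph{extracted} shuffles of the (necessarily malicious) later \cps $\cpmath{h+1},\dots,\cpmath{m}$. Both hold because, by the construction in Section~\ref{sec:proof-fix}, the ghost execution carries the true plaintexts forward independently of the ciphertexts that actually travel on the wire; but I would spell this out rather than leave the reader to re-derive it.
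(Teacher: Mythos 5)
Your proposal is correct and follows essentially the same argument as the paper: an IND-CPA reduction that installs the challenge key as $\cpmath{h}$'s key, embeds the challenge ciphertext (for the pair consisting of the true plaintext of component $j+1$ and $0$) into $\cpmath{h}$'s Shuffling broadcast after further encrypting under the remaining \cps' keys, and relies on the ghost-execution ``fixing'' of Rerandomization-Decryption so that the embedded ciphertext never needs to be decrypted. The consistency check you flag at the end is a reasonable point to make explicit, but it does not change the substance of the reduction, which matches the paper's.
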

\begin{proof}
Suppose an environment $Z$ can distinguish between the two adjacent executions. We construct an IND-CPA adversary $A$ that has non-negligible advantage in the IND-CPA game for ElGamal.
We play the IND-CPA game for ElGamal:
\begin{enumerate}
	\item $A$ accepts a public key $y_c$ from the challenger in the IND-CPA game. $A$ runs $S_j$ with $Z$ with $y_c$ as the public key for $\cpmath{h}$ until the Shuffling phase. ZKP functionalities that require the secret key for $\cpmath{h}$ instead simply accept all proofs from $\cpmath{h}$ without it.
	\item $A$ performs the shuffle and recovers the plaintext $t$ for component $j+1$. $A$ sends the plaintexts ($0$, $t$) to the challenger and receives a ciphertext $C$.
	\item $A$ during Shuffling, on behalf of $\cpmath{h}$, performs the shuffle as before but replaces the ciphertext for component $j+1$ with $C$, further encrypted with the keys of the remaining \cps.
	\item $A$ fixes the output during Rerandomization-Decryption as described in Section \ref{sec:proof-fix}.
	\item $A$ submits the output of $Z$ to the challenger in the IND-CPA game.
\end{enumerate}
Then we observe that if the challenger encrypts $t$ this is exactly $S_j$, while if the challenger encrypts $0$ this is $S_{j+1}$. Then our advantage in the IND-CPA game is the same as the advantage of $Z$ distinguishing between $S_j$, $S_{j+1}$ which is negligible.
\end{proof}

\subsubsection{Introduction of the Ideal Functionality}

In $S_{\numnoise + \numcounters}$, during the Rerandomization-Decryption phase  $\cpmath{h}$ outputs a vector of encryptions of the values  recorded in the Ghost Execution. This vector is completely characterized by three values: a number of nonzero elements $o \in \mathbb{N}$, a vector of plaintexts $\vec{o}$ of length $o$, and an permutation $\pi$ on $\numcounters + \numnoise$ elements that selects an ordering of the $o$ nonzero plaintexts and the remaining zeros. Any specific execution of any hybrid selects a precise value for each of these variables. We define the final hybrid $F$:

\begin{itemize}
	\item Execute identically to $S_{\numnoise + \numcounters}$ except instead of ``fixing'' the final vector of outputs during Rerandomization-Decryption on behalf of $\cpmath{h}$, extract only the number of nonzero counters from the Ghost Execution and save this value $o$.
	\item Construct a vector as follows: take $o$ nonzero elements uniformly at random, append zero $n + b - o$ times, apply a uniformly random shuffle to the vector then encrypt each component to the (possibly empty) joint public key $\prod_{j=h+1}^\numcompute y_j$. Send this as the output for $\cpmath{h}$.
\end{itemize}

\begin{lemma} $F$ and $S_{\numnoise + \numcounters}$ are identically distributed. \end{lemma}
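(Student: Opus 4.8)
The plan is to argue that the sole point of difference between the two hybrids --- the way $\cpmath{h}$ builds its broadcast message in the Rerandomization-Decryption phase --- yields a message with the same distribution, and that everything downstream (the malicious \cps indexed $h{+}1,\dots,\numcompute$, the final counting step, and the simulated \f{ZKP-RRD} invocations, all unchanged between $F$ and $S_{\numnoise+\numcounters}$) is a fixed function of that message and of coins drawn identically in the two runs; hence $Z$'s view is the same. I would start by recalling from Section~\ref{sec:proof-fix} the shape of $\cpmath{h}$'s ``fixed'' output in $S_{\numnoise+\numcounters}$: componentwise it is a fresh encryption, under the residual joint key $\prod_{j=h+1}^{\numcompute} y_j$ (the identity map, \ie a bare $g^{k}$, when $h=\numcompute$), of the Ghost-Execution plaintext for that slot multiplied by an independent, uniformly random, nonzero rerandomization scalar. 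This already matches the syntactic form of $F$'s output, so it remains to compare distributions.

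Next I would peel the output vector into the three quantities that determine it: the number $o$ of nonzero slots, the list of nonzero plaintext values, and the placement of those values among the $\numcounters+\numnoise$ slots. The count $o$ is read off the same Ghost Execution in both hybrids, hence is literally identical. For the values: in a prime-order group a fixed non-identity element times a uniformly random nonzero scalar is uniform over the non-identity elements, while a zero plaintext is annihilated by the multiplicative rerandomization; since $\cpmath{h}$'s scalars are mutually independent and fresh, the nonzero entries of $S_{\numnoise+\numcounters}$'s vector are i.i.d.\ uniform over the non-identity elements --- exactly the ``$o$ nonzero elements uniformly at random'' sampled in $F$ --- and this holds even for slots whose underlying plaintext was influenced by malicious parties, since the scalar washes out any such value. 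For the placement: in $S_{\numnoise+\numcounters}$ the position of each bin's plaintext is the composition of the Shuffling-phase permutations of all \cps (extracted via \f{ZKP-S} for the malicious ones, recorded directly for the honest ones, with $\cpmath{h}$'s own uniformly random permutation $\pi^{h}$ among them even though its broadcast ciphertexts in that phase have been replaced by encryptions of zero); because $\cpmath{h}$ is the last honest \cp and that output carries no information about $\pi^{h}$, the permutations the adversary later assigns are independent of $\pi^{h}$, so the composed permutation is uniform --- matching the explicit uniform shuffle that $F$ applies. With the encryption randomness fresh and uniform in both cases, the triple, and therefore $\cpmath{h}$'s message, is equidistributed, and the lemma follows.

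I expect the placement argument to be the delicate step: the adversary's later moves, including the permutations it assigns to the \cps indexed $h{+}1,\dots,\numcompute$, may depend adaptively on the transcript, so uniformity of the final placement rests squarely on $\cpmath{h}$ re-encrypting --- equivalently, on the fact that in the preceding hybrids its Shuffling-phase output has already become fresh encryptions of zero that leak nothing about $\pi^{h}$ --- a point already exploited in Section~\ref{sec:proof-fix} and worth restating here. Two smaller items to pin down are that the per-bin sums the Ghost Execution actually stores are correctly reassembled, using the recorded honest and extracted malicious permutations together with the extracted rerandomization factors, into a well-defined plaintext per slot (so that ``the plaintext for that slot'' above is meaningful), and that the $h=\numcompute$ corner case is handled uniformly in both hybrids, where ``encrypt under the empty joint key'' must be read as the identity on $g^{k}$.
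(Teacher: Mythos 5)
Your proof is correct and follows essentially the same route as the paper's: both reduce the lemma to showing that $\cpmath{h}$'s Rerandomization-Decryption message is equidistributed, decompose that message into the triple (count $o$, nonzero values, placement permutation), and argue uniformity of the values via the honest nonzero rerandomization scalar in a prime-order group and uniformity of the placement via the honest shuffle permutation being information-theoretically hidden until that point. Your treatment is in fact slightly more explicit than the paper's on why the adversary's later permutations are independent of $\pi^{h}$, but the underlying argument is the same.
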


\begin{proof}
The executions are identical to the point where $\cpmath{h}$ sends its output in the Rerandomization-Decryption phase by construction. Therefore, it suffices to show that the distribution of the output, characterized as $o$, $\vec{o}$, $\pi$ of $\cpmath{h}$ in the two executions $F$, $S_{\numnoise+\numcounters}$ is identical.

We observe that $o$ is determined  identically in both executions since it is selected before the processes formally diverge. In the execution $F$, $\vec{o}$ and $\pi$ are drawn uniformly at random among nonzero elements of $\mathbb{Z}_p$ (we refer to elements by their discrete log with respect to the generator $g$), and uniformly among permutations, respectively. We show this is the case in $S_{\numnoise + \numcounters}$ as well.

Consider a given element of $\vec{o}$. In $S_{\numnoise + \numcounters}$ it is constructed by application of a sequence of $h$ multiplications of nonzero elements $\mu_1...\mu_{h}$ to the original input, whether that input is generated at random by an honest party, constructed by noise, or submitted by malicious parties. Call the input $\alpha$. Then the output of $\cpmath{h}$ for this component is $\beta = \alpha\mu_1\dots\mu_{h}$ and we observe that $\mu_h$ is selected by the honest $\cpmath{h}$ uniformly at random when the value is produced, and that each $\mu$ is nonzero since \cps cannot submit zero rerandomization factors, and $\alpha$ is nonzero by the definition of $\vec{o}$. Therefore $\alpha\mu_1\dots\mu_{h-1}$ is nonzero hence invertible in a group of prime order, so a uniformly random selection of $\mu_h$ gives a uniformly random distribution among nonzero elements for $\beta$.

Second, consider $\pi$: $\pi$ is determined by application of the $\numcompute$ permutations $\pi = \pi_{h}\circ\dots\circ\pi_1$ where each $\pi_i$ is the permutation selected by $\cpmath{i}$ during the Shuffling phase and we observe that since every permutation is invertible, selection of $\pi_h$ uniquely determines $\pi$. $\pi_h$ is chosen during shuffling but no information about the choice is revealed until $\cpmath{h}$ submits its values during Rerandomization-Decryption, so $\cpmath{h}$ may without altering the execution at all make this choice when its final output is constructed, selecting $\pi_h$ at random and ensuring $\pi$ is as well.

\end{proof}

\begin{proposition} $F$ and the ideal model execution \ideal are statistically indistinguishable.
\label{prop:proof-id-I}\end{proposition}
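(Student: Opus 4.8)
The plan is to exhibit $F$ and the ideal execution as the same experiment seen from two sides, up to a negligible statistical gap, so the argument has a small ``view change'' core and one genuinely probabilistic comparison. The organizing observation is that in $F$ the honest parties no longer consult \emph{any} of $Z$'s direct inputs to honest parties except through a single integer: honest \dps broadcast fresh encryptions of zero as their blinds; $\cpmath{h}$ replaces every Noise-Generation shuffle and its single Shuffling output by fresh encryptions of zero; and $\cpmath{h}$'s Rerandomization-Decryption message is a freshly sampled vector of $o$ uniformly random non-identity group elements and $\numnoise+\numcounters-o$ encryptions of zero, uniformly permuted and re-encrypted to $\prod_{j=h+1}^{\numcompute} y_j$, where $o$ is the number of nonzero entries in the ghost vector. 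Every one of these messages, \emph{except} that last output, is produced identically by the simulator $S$ against \f{\sys}: $S$ likewise sends encryptions of zero as honest-\dparty blinds, likewise replaces $\cpmath{h}$'s shuffles by encryptions of zero, simulates the zero-knowledge functionalities to attest the corresponding (false) statements, recovers the \cp keys from \f{SKGD} and the malicious \dps' submitted counters to decide which \textsc{Observation} messages to forward to \f{\sys}, and forwards a blame set $B$ to \f{\sys} exactly when an accountable subfunctionality raises $B$ or a malicious \cp omits a mandated message --- which is precisely when $F$ outputs $(\textsc{blame},B)$. So the first step is a purely syntactic relocation of $F$'s honest computation into $S$, and what remains is to compare $\cpmath{h}$'s final output in the two worlds.

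Since that output has the same shape in both worlds --- a fresh vector of some number $o$ (resp.\ $z+\numnoise/2$) of uniformly random non-identity elements padded with zeros, uniformly shuffled and re-encrypted --- it suffices to show that $o$ as computed in $F$ and $z+\numnoise/2$ as computed by \f{\sys} are identically distributed, up to a negligible statistical error. Both quantities split into independent noise and data parts. For the noise part: the ghost value of noise counter $i$ in $F$ is $0$ or $1$ according to the parity of the permutation bits applied to $(0,1)$ in Noise Generation, and since $\cpmath{h}$ is honest and chooses its bit uniformly --- a choice information-theoretically hidden from $Z$, its physical output having been overwritten --- while the $\numnoise$ counters use independent randomness, the number of nonzero noise entries is exactly $\mathsf{Bin}(\numnoise,1/2)$, which is exactly how \f{\sys} samples $N$. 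For the data part I would show that the number of bins with a nonzero ghost aggregate equals $\sum_{i} w_i$ with $w_i=\max_j v^j_i$, except with probability at most $\numcounters/q$. One inclusion is exact: if $w_i=0$ then every honest \dparty that received no \textsc{Observation} for bin $i$ contributes a genuine $0$ (blind plus negated blind) and every malicious \dparty whose recovered contribution $r_{ij}+c_{ij}$ is zero contributes $0$, so the aggregate is identically $0$. For the converse, whenever $w_i=1$ some \dparty ``observes'' bin $i$, and in the ghost this injects either a fresh uniform value $r_{ij}$ (an honest \dparty observing while still honest) or a recovered nonzero value $r_{ij}+c_{ij}$ fixed by $Z$ without knowledge of the honest contributions; at least one such summand is uniform and information-theoretically hidden from $Z$, so it cannot be cancelled and the aggregate is nonzero except with probability $1/q$ (e.g.\ two honest \dps observing the same bin with opposite random contributions). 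A union bound over the $\numcounters$ bins gives the stated slack, and summing with the negligible simulator-failure events inherited from the accountable $\Sigma$-protocol compiler bounds the total statistical distance between $F$ and $\ideal$ by $\numcounters/q+\mathrm{negl}(k)$.

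The remaining checks are bookkeeping: the physical ciphertexts honest parties send in $F$ are by construction exactly those $S$ produces; an adaptive corruption of a \dparty reveals, in both worlds, a plaintext-counter vector each entry of which is a single uniformly random field element independent of everything else and of whether an \textsc{Observation} was delivered (this is the content of the last Data-Collection hybrid); and on every blame path the partial transcript is a prefix of the full transcript with the same blame set in both worlds, so nothing is lost by $S$ forwarding the blame to \f{\sys}.

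I expect the main obstacle to be the data-counter comparison: one must run through the adaptive-corruption cases for each \dparty --- honest throughout; corrupted before submitting its blind; corrupted after its blind but before input submission; corrupted afterward --- and verify in each case that the ghost contribution recorded by $F$ and the contribution that $S$ recovers (hence whether $S$ reports an \textsc{Observation} to \f{\sys}) induce the same law for ``is bin $i$ nonzero,'' and then pin down exactly the cancellation event so that the $\max$-over-\dps semantics of \f{\sys} matches the additive ElGamal aggregate precisely up to the $\numcounters/q$ term. The noise-count equality and the view-change steps are routine once this case analysis is in hand.
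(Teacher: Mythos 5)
Your proposal is correct and follows essentially the same route as the paper: after observing that the honest parties' messages in $F$ coincide with those produced by $S$, the whole comparison reduces to the count $o$ of nonzero entries, which you split into a noise part (exactly $\mathsf{Bin}(\numnoise,1/2)$ in both worlds because $\cpmath{h}$'s hidden permutation bits can be deferred) and a data part (matching $\sum_i w_i$ up to accidental cancellation of a uniformly random, information-theoretically hidden summand). Your two-sided inclusion with a $\numcounters/q$ union bound is just a repackaging of the paper's three-way case analysis on the pair $(\sigma_t^I,\sigma_t^F)$, identifying the same negligible failure events.
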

\begin{proof}
After the sequence of transitions above, the honest parties in $F$ behave identically to how they are simulated by $S$ in Figure \ref{fig:psc-sim} except for the calculation of the value $o$, which is recovered directly from \f{\sys} in \ideal, and extracted from the aggregate in the ghost execution during Rerandomization-Decryption in $F$. We need to show that the value $o$ in both executions is the same. We split the value $o$ (from either execution) into two nonnegative summands since components of the final vector consist of two disjoint sets: data counters, determined by aggregating inputs, and noise counters, determined by shuffling ciphertexts in the Noise Generation phase.

\textbf{Nonzero Counters from Data}.

Fix a data counter $t$. We show that $t$ contributes to the count $o$ (i.e., is nonzero) in $F$ if and only if it contributes in \ideal. Throughout this section, we consider ``plaintexts'' to be the discrete log of these encrypted values with respect to $g$, and call a plaintext $0$ if it is $g^0$ i.e. the group identity element.

For data counter $t$, we write $\sigma_t^F$ as the sum of $3j$ values in the execution of $F$: each $\dpmath{j}$ submits a blind during the beginning of the Data Collection phase. Following notation in the simulator definition, call the plaintext of this blind $r_{tj}$. Each $\dpmath{j}$ also submits a plaintext counter $c_{tj}$. Finally, the Ghost Execution has recorded some number of random increments for each honest-at-the-time $\dpmath{j}$ that receives \textsc{Observation} from $Z$, call this sum $s_{tj}$. Then the plaintext that is used by $\cpmath{h}$ in $F$ for counter $t$ is zero if and only if $\sigma_t^F = \sum_j (r_{tj} + c_{tj} + s_{tj})$.
Further, since the processes do not formally diverge until after $\sigma_t$ is determined, the equivalent value that is used in \ideal by $S$ to determine whether to send an observation message to \f{\sys} is:
$\sigma_t^I = \sum_j (r_{tj} + c_{tj})$
since $S$ has no Ghost Execution.
We first observe that if a $\dpmath{j}$ is honest through the input submission phase, $r_{tj} + c_{tj} = 0$ since honest \dps submit $0$. We note that if either of these sums is nonzero, it guarantees that counter $t$ will be nonzero in the final tally in that execution (in \ideal, the observation is sent directly to \f{\sys} by $S$. In $F$, the value is inserted in the output of $\cpmath{h}$ during Rerandomization-Decryption directly).\vspace{-.5mm}

\begin{enumerate}
	\item $\sigma_t^I = \sigma_t^F = 0$. This means no $c_{jt}$ has been incremented, meaning that no honest \dparty has been sent an \textsc{Observation} message, and the value of $t$ is the same in both executions, or some nonzero number of uniformly randomly selected values sum to zero, which is an event with negligible probability $\frac{1}{|G|}$.
	\item $\sigma_t^I \neq 0$, $\sigma_t^F \neq 0$. In this case, $t$ is forced to count in the tallies of both executions, as the observation is directly sent to \f{\sys} in \ideal, and remains nonzero in $F$ since it is multiplied by a sequence of nonzero values.
	\item $\sigma_t^I \neq \sigma_t^F$, with one of these equal to zero. This means $\sum_j s_{tj} \neq 0$.
If $\sigma_t^F = 0$, this means randomly selected nonzero $\sum_j s_{tj}$ is exactly $-\sigma_t^I$, a sum calculated with no knowledge of $\sum_j s_{tj}$, which occurs with negligible probability $\frac{1}{|G|}$. If $\sigma_t^I = 0$, then a then-honest \dparty has received some \textsc{Observation} message, so in $F$ $t$ is included in the tally since $\sigma_t^F \neq 0$, and in \ideal \xspace $t$ is included in the tally because (\textsc{Observation}, $t$) has been sent to some honest \dparty who has forwarded it directly to \f{\sys}.
\end{enumerate}

\textbf{Nonzero Counters from Noise}.
The second portion of $o$ comes from the noise counters. In \ideal, this is a number drawn by an ideal functionality directly from $\texttt{Bin}(\numnoise;\frac{1}{2})$. In $F$, it is determined by the composition of permutations in each round $i$ of the Noise Generation phase, but the ciphertexts being shuffled in $F$ after $\cpmath{h}$ applies its shuffle for each bin of Noise Generation are both encryptions of zero, so all plaintexts for these ciphertexts are encryptions of $0$ and carry no information about the honest permutation $\pi^{ih}$. This means $\pi^{ih}$ for each noise generation round $i$ in $F$ is selected external to the execution during Noise Generation, and no messages in the execution depend on $\pi^{ih}$ until Rerandomization-Decryption, so the $\pi^{ih}$ may be chosen then. For each round one of the two permutations on two elements provides a result of $1$ and the other of $0$.  We draw all $\numnoise$ of them from these randomly, producing exactly the same distribution $\texttt{Bin}(\numnoise;\frac{1}{2})$.

Then we conclude the number $o$ is selected from the same distribution in both executions except with negligible probability, as the sum of a fixed number of nonzero data counters (a number which may differ in both executions but only with negligible probability) and a number which is drawn identically from the same distribution by honest parties in each case.
\end{proof}

\begin{theorem} The protocol $\pi_{\textsc{\sys}}$ $\pi_{\text{\sys}}$ UC-realizes \f{\sys} in the \hybrid model if there is at least one honest \cp and all \cp corruptions are static and the Decisional Diffie Hellman assumption holds for the group $G$.  \end{theorem}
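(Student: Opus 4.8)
The plan is to prove the theorem by stitching together the sequence of hybrid executions constructed above into a single hybrid argument. First I would record that the hybrids chain up without gaps: $\textsc{REAL}_{\pi_{\textsc{\sys}},A,Z}(k) = B_{10}$; the sequence $B_{\textsc{seq}}$ ends at $B_{\numcounters\numinput} = D_{10}$; the sequence $D_{\textsc{seq}}$ ends at $D_{\numcounters\numinput} = N_0$; $N_{\textsc{SEQ}}$ ends at $N_{\numnoise} = S_0$; the shuffling sequence ends at $S_{\numnoise+\numcounters}$, from which $F$ is derived; and Proposition~\ref{prop:proof-id-I} bridges $F$ to the ideal execution \ideal. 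Each adjacent pair in this chain has already been shown to be either identically distributed (the observation-hiding transitions of $D_{\textsc{seq}}$, the step from $S_{\numnoise+\numcounters}$ to $F$, and --- statistically --- the closing step from $F$ to \ideal) or computationally indistinguishable under DDH (blind submission, noise generation, shuffling), the latter cases each reduced to IND-CPA security of exponential ElGamal by embedding the challenge key as $y_h$, the public key of the last honest \cp $\cpmath{h}$, and letting the ideal proof functionalities \f{ZKP-DL}, \f{ZKP-S}, and \f{ZKP-RRD} accept $\cpmath{h}$'s ``proofs'' without a witness.

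Second, I would bound the length of the chain. The four sequences have lengths $\numcounters\numinput$, $\numcounters\numinput$, $\numnoise+1$, and $\numnoise+\numcounters+1$, plus the two closing transitions, so the chain is polynomially long in the security parameter (taking $\numcounters$, $\numinput$, $\numnoise$ polynomially bounded), and a routine union bound over the per-step distinguishing advantages gives that no PPT environment $Z$ separates \real from \ideal with non-negligible advantage. I would also make explicit why the per-step DDH reductions are sound in the hybrid model: since \f{ZKP-DL}, \f{ZKP-S}, and \f{ZKP-RRD} are ideal functionalities rather than protocols, neither $S$ nor any reduction ever needs $x_h$ or any other prover witness, so substituting an externally supplied challenge key for $y_h$ is transparent to $Z$, and $\cpmath{h}$'s role can be discharged entirely from values reconstructed elsewhere.

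The crux --- and the step I expect to be the main obstacle --- is justifying the ``fixing the execution'' device that every hybrid above relies on: during Rerandomization-Decryption, $\cpmath{h}$ must emit an output with exactly the real-execution distribution \emph{without} knowing $x_h$, using only plaintexts reconstructed in the Ghost Execution. Verifying this requires two facts already assembled in the preceding material. First, the Ghost Execution can in fact reconstruct every contributing plaintext: each is either chosen by an honest party (so known to the hybrid directly) or, for a malicious party, extracted from an ideal functionality --- blinds from \f{ZKP-DL}, shuffle permutations and randomness from \f{ZKP-S}, and rerandomization/re-encryption factors from \f{ZKP-RRD}. Second, the honest re-encryption built into Rerandomization-Decryption makes the honestly-decrypted output identically distributed to a fresh encryption, to the residual joint key $\prod_{j>h} y_j$, of the reconstructed plaintext scaled by a uniformly random nonzero factor --- precisely the fact used in the lemma relating $F$ and $S_{\numnoise+\numcounters}$ and in the Ghost Execution discussion. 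I would additionally check the timing and accountability bookkeeping: the simulator reads the counters of malicious \dps and issues \textsc{Observation} messages to \f{\sys} during Input Submission, strictly before \f{\sys} samples its noise and outputs $z$ in its Round~1, so the count $o$ reconstructed in $F$ matches the $z+\numnoise/2$ nonzero elements determined by \f{\sys}'s output; and whenever \f{BC}, \f{COIN}, or any ZKP functionality emits (\textsc{Blame}, $B$) with $B$ a set of corrupted \cps, the truncated execution and the set $B$ coincide across all hybrids, so $S$'s forwarded (\textsc{blame}, $B$) reproduces the real-world abort behavior exactly. Combining these, every adjacent pair in the chain satisfies the claimed relation, and transitivity yields that \real and \ideal are computationally indistinguishable, which is the theorem.
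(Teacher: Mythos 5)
Your proposal is correct and follows essentially the same route as the paper: the paper's own proof of this theorem is simply the observation that the finitely many hybrid transitions in $B_{\textsc{SEQ}}$, $D_{\textsc{SEQ}}$, $N_{\textsc{SEQ}}$, and $S_{\textsc{SEQ}}$, together with the closing step to \ideal, each incur at most negligible distinguishing advantage, so transitivity gives the result. Your additional bookkeeping (explicit chain endpoints, the polynomial bound on the number of hybrids, and the recap of the Ghost Execution and blame arguments) elaborates on details the paper delegates to the preceding lemmas rather than introducing a different argument.
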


\begin{proof}
We have the sequence of computationally indistinguishable hybrid executions, beginning with \real and followed by the sequences $B_{\textsc{SEQ}}$, $D_{\textsc{SEQ}}$, $N_{\textsc{SEQ}}$, $S_{\textsc{SEQ}}$, with the last element of this last sequence indistinguihable from \ideal. Since the number of transitions is finite, and the distinguishing advantage for any PPT environment between any adjacent two is bounded by a negligible function, we arrive at the statement of the theorem.

\end{proof}

\vspace{-6mm}

\section{Implementation and Evaluation}
\label{sec:eval}
We constructed an implementation of \Sys in Go to verify our protocol's correctness and to measure the system's computation and communication overheads. We run experiments over large synthetic datasets and measure our implementation's performance. We describe the implementation details and design choices in Section~\ref{subsec:impl} and then present our performance evaluation in Section~\ref{subsec:evaluation}.

\subsection{Implementation}
\label{subsec:impl}
We built an implementation of \Sys in Go using the Kyber~\cite{kyber} advanced cryptographic library. ElGamal encryption is implemented in the
Edwards 25519 elliptic curve group~\cite{edwards} and the \cps use Neff's verifiable shuffle~\cite{neff-shuffle} to shuffle the ElGamal ciphertexts.

The \cps perform secure broadcast using the optimized ``heartbeat'' version of the Dolev-Strong protocol described in Section~\ref{sec:consensus}. We do not implement
time-outs that explicitly signal the end of a Dolev-Strong round. Therefore, termination is not guaranteed in our implementation \ie honest \cps may wait forever on malicious
\cps that do not send their response. However, this does not affect our results as we analyze the performance of our protocol in an honest setting \ie, all \cps and \dps are considered honest (refer Section~\ref{subsec:evaluation}).

The \cps use the Schnorr signature algorithm~\cite{schnorr-joc1991} over the Edwards 25519 elliptic curve for signing and SHA-256 for computing digests. We use TLS 1.2 for secure point-to-point communication.

We use ``Biffle'' in the Kyber library for shuffling the noise vectors during the noise generation phase. Biffle is a fast binary shuffle
for two ciphertexts based on generic zero-knowledge proofs.

For zero-knowledge proofs, we use Schnorr-type proofs~\cite{schnorr-joc1991, camenisch-delegatable-2017} for the combined re-encryption, re-randomization and
decryption of ElGamal ciphertexts and knowledge of discrete log of the Elgamal public key and blinding factors. Non-interactive versions of
all these proofs are produced using the Fiat-Shamir heuristic~\cite{fiat-shamir-crypto1986}.

A single \dparty program emulates all the \dps in our implementation \ie, each operation (\eg, blind submission, data collection, \etc) for every \dparty is performed
sequentially by a single \dparty program. However, in a real deployment, the \dps would be distributed.

To encourage the use of \Sys by privacy researchers and practitioners, we are releasing \Sys as free, open-source software, available for download at \url{https://github.com/GUSecLab/psc}.

\Paragraph{Optimization} While computing the aggregate, as an optimization the \cps first perform a secure broadcast of the aggregate received from the \dps. If a consensus can be reached among the \cps, then the agreed-upon aggregate is used.  Otherwise,
in the case where the \cps cannot reach a consensus (on the aggregate), they opt for the more communication-intense process of performing a secure broadcast
(using the heartbeat Dolev-Strong protocol) for each individual \dparty blind and plaintext responses.

\subsection{Evaluation}
\label{subsec:evaluation}
Experiments are carried out on 10 core Intel Xeon machines with 32GB to 64GB of RAM running Linux kernel
4.4.0. Our implementation of \sys is currently single-threaded. Although the
computational cost of \sys's noise generation is significant and may be done by the \cps before
the inputs are received, we parallelize it so that it can be done on multi-core machines after
the inputs are received. We use ``parallel for'' from the Golang par package~\cite{paar} for this
parallel noise shuffling.

We instantiate all \cps and \dps on our 10 core servers. Google Protocol Buffers~\cite{varda2008protocol} is
used for serializing messages, which are communicated over TLS connections between \sys's parties.
We use Go's default crypto/tls package to implement TLS.

\begin{table}[t!]
\begin{minipage}[t]{.43\linewidth}
  \centering
  \footnotesize
  \caption{Default values and descriptions for system \mbox{parameters}.}
  \label{tbl:default}
  \begin{tabular}{ccc}
    {\bf Param.} & {\bf Description} & {\bf Default} \\ \toprule
    $\numcounters$ & \# of counters & 300,000 \\
    $\numcompute$ & \# of \computationparties & 5 \\
    $\numinput$ & number of \dataparties & 30 \\
    $\epsilon$ & privacy parameter & 0.3 \\
    $\delta$ & privacy parameter & $10^{-12}$ \\
  \end{tabular}
\end{minipage}
\hfill
\begin{minipage}[t]{.54\linewidth}
  \centering
  \footnotesize
  \caption{Actual, noisy aggregates, and standard deviation for various values of  $\epsilon$.}
  \label{tbl:epsilon}
  \begin{tabular}{cp{.75in}p{.75in}p{.75in}}
    {\bf $\epsilon$} & {\bf Actual \mbox{Agg.}} & {\bf Noisy \mbox{Agg.}} & {\bf Standard \mbox{Dev.}}\\ \toprule
    0.15 & 8927 & 8752 & 141.92 \\
    0.30 & 8927 & 8811 & 70.96 \\
    0.45 & 8927 & 8887 & 47.31 \\
    0.60 & 8927 & 8873 & 35.48 \\
    0.75 & 8927 & 8917 & 28.39 \\
  \end{tabular}
\end{minipage}
\vspace{-3mm}
\end{table}

\paragraph{Query and dataset.} We evaluate \Sys by considering the query: what is the number of unique IP connections as observed by the nodes in an anonymity network?
Although our implementation tolerates malicious \cps and dishonest majority of \dps, we analyze the performance of our protocol in an honest setting (\ie, all \cps and \dps are considered honest) for experimental purposes.

Rather than store $2^{32}$ (or, for IPv6, $2^{128}$) counters, we assume $b$ counters (where $b \ll 2^{32} $) and map IP addresses to
a $(\lg b)$-bit digest by considering the first $\lg b$ bits of a hash function; this results in some loss of accuracy due to collisions.  For each experiment, we
chose an integer uniformly at random from the range $[0, 30000]$. Then for each \dparty, we choose from its counters a random subset of that size to set
to {\tt 1}; the remaining counters are set to {\tt 0}.

We note that the performance of \Sys is independent of the number of unique IPs. Therefore, in our performance evaluation, we are interested in how the number
of counters $\numcounters$ affects the operation of \Sys, rather than the number of unique IPs.

\paragraph{Experimental setup.} The default values for the number of bins $\numcounters$, the number of \cps $\numcompute$, the number of \dps $\numinput$, $\epsilon$, and $\delta$ are listed in Table~\ref{tbl:default}.

We determine these default values by considering which values would be
appropriate for an anonymity network such as Tor. The Tor Project
reports approximately $2.8$
million user connections (using simple statistical techniques to roughly estimate the number of Tor users) and over $3,000$ guard
nodes~\cite{tormetrics}. Assuming that $1\%$ of Tor guards deploy \Sys, we have $\numinput = 30$. Also, given this level  of \Sys deployment,
we would expect a Tor guard to see approximately $30,000$ unique IPs (\ie, $1\%$ of ${\sim}3$ million user connections).

To measure the aggregate with high accuracy, we limit hash-table collisions to at most a fraction
$f$ of inputs in expectation by using a hash table of size $1/f$ times the number of inputs.
Therefore, for $f=10\%$, we set (unless otherwise specified) $\numcounters = 300,000$ in all our experiments. We set $\epsilon = 0.3$ as this is currently recommended for safe measurements in anonymity networks such as Tor~\cite{privcount}. To limit to $10^{-6}$ the chance of a privacy
``failure'' affecting any of $10^6$ users, we set $\delta$ to $10^{-12}$~\cite{dwork2014algorithmic}. We set these default values as system-wide parameters, unless
otherwise indicated.

\paragraph{Accuracy.} The trade-off between accuracy and privacy is governed by the choice of
$\epsilon$ and $\delta$. We vary $\epsilon$ from $0.15$ to $0.75$, keeping the number of bins at
$\numcounters = 300,000$. We found that values below $0.15$ produced too much noise and offered
low utility. Values of $\epsilon$ greater than $0.75$ would not provide a reasonable level of privacy.

The actual and the noisy aggregate values along with the standard deviation for the noisy aggregates
(the noise follows a binomial distribution) for different values of $\epsilon$ is shown in
Table~\ref{tbl:epsilon}. We observe that the standard deviation of the noisy value is at most
$141.92$ even for smaller values of $\epsilon$, such as $0.15$. Therefore, as expected the noisy
aggregates are very close to the actual aggregates. In summary, \Sys gives highly accurate results for
all tested privacy levels.



\Paragraph{Communication cost}
To be practical, a statistics gathering system  should impose a low communication overhead for the
\dps, which can have limited bandwidth.  However, we envision the \cps to be well-resourced
dedicated servers that can sustain at least moderate  communication costs.

\Sys incurs communication overhead by transmitting ElGamal encrypted blinds, a zero-knowledge proof
(for the knowledge of discrete log of the blinds), and {\em masked} plaintext counters between the \dps
and \cps and the ElGamal encrypted counters among the \cps.

We explore \Sys's communication costs by varying the number of bins $\numcounters$, the number of \cps $\numcompute$,
the number of \dps $\numinput$, $\epsilon$, and $\delta$. The average communication cost for the \cps and \dps are plotted in Figure~\ref{fig:comm}
and Figure~\ref{fig:eps-del-psc}. We omit the error bars as \Sys has a deterministic communication cost---there is no variance in
the communication cost incurred among the \cps (and similarly, among the \dps).



We first consider how the number of bins influences the communication cost. We run \Sys, varying $\numcounters$ from $100,000$
to $500,000$, and plot the results in Figure~\ref{fig:comm} {\it (left)}. The values of the bins (\ie, {\tt 0} or {\tt 1}) do not affect the outbound
communication cost of the \dps, as the \dps invariably transmit an encrypted blind, a zero-knowledge proof and a plaintext value for
either {\tt 0} or {\tt 1}. For up to $300,000$ bins, the outbound communication cost for each \dparty is fairly modest. For example, if
\Sys is run once an hour, then the outbound communication cost is approximately $252$~MB/hr ($70$~KBps).
We also find that the inbound communication cost for each \dparty is constant (${\sim}1.8$~KB), as the \dps only receive the signed
ElGamal joint public key from every \cp, irrespective of the number of bins.

The communication costs are more significant for the \cps, which we envision are dedicated machines for \Sys. With $300,000$ bins,
each \cp requires a bandwidth of $962.6$ MB for sending and $2.5$ GB for receiving (less than $700$ KBps if executed once per
hour). We observe that the difference between the \cp outbound and inbound communication cost is large and roughly equal to the
sum of the \dparty outbound communication cost per \cp (\ie, ${252/5} \approx 50.4$ MB) across all $30$ \dps. This is because each
\cp sends the signed ElGamal joint public key (\ie, a single group
element) to every \dparty, whereas each \cp receives $b$ ElGamal encrypted
blinds, zero-knowledge proofs, and plaintext counters from every \dparty.

We next consider how $\numcompute$, the number of \cps, affects the
communication cost. We vary $\numcompute$ from $3$ to~$7$ and plot the
results in Figure~\ref{fig:comm} {\it (center)}. The outbound communication cost for the \dps increases at a much slower pace
when varying $\numcompute$ (as opposed to $\numcounters$), as the \dps invariably send $b$ (= $300,000$) ElGamal encrypted
blinds, zero-knowledge proofs and plaintext counters to each \cp. Therefore, even up to seven \cps, the outbound communication
cost for each \dparty is fairly modest --- approximately $1.4$~GB (or $98$~KBps, if run every hour). We note that the inbound
communication cost per \dparty increases by a constant factor
(${\sim}0.93$~KB) as the number of \cps increases. This increase is
due to
each \dparty receiving a copy of the signed ElGamal joint public key from every \cp.

\begin{figure}[!tbp]
	\centering
	\begin{minipage}[b]{0.8\textwidth}
		\centering
		\includegraphics[width=\linewidth, height=0.4\textwidth]{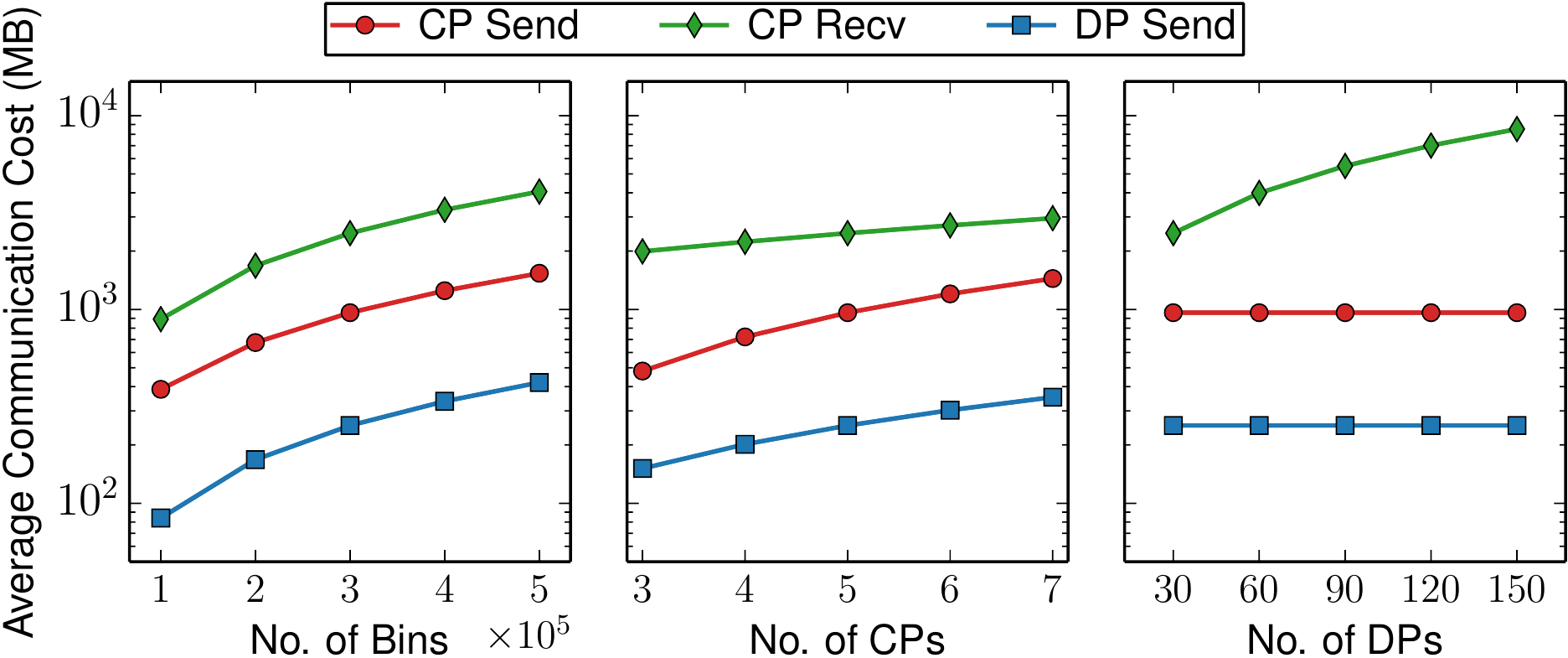}
	\end{minipage}
	\vspace{-7pt}
	\caption{The communication cost incurred by the \cps and \dps varying the number of bins ({\em left}), the number of \cps ({\em center})
		and the number of \dps ({\em right}).}
	\label{fig:comm}
	\vspace{-0.1cm}
\end{figure}

\begin{figure}[!tbp]
	\centering
	\begin{minipage}[b]{0.45\textwidth}
		\centering
		\includegraphics[width=0.95\linewidth]{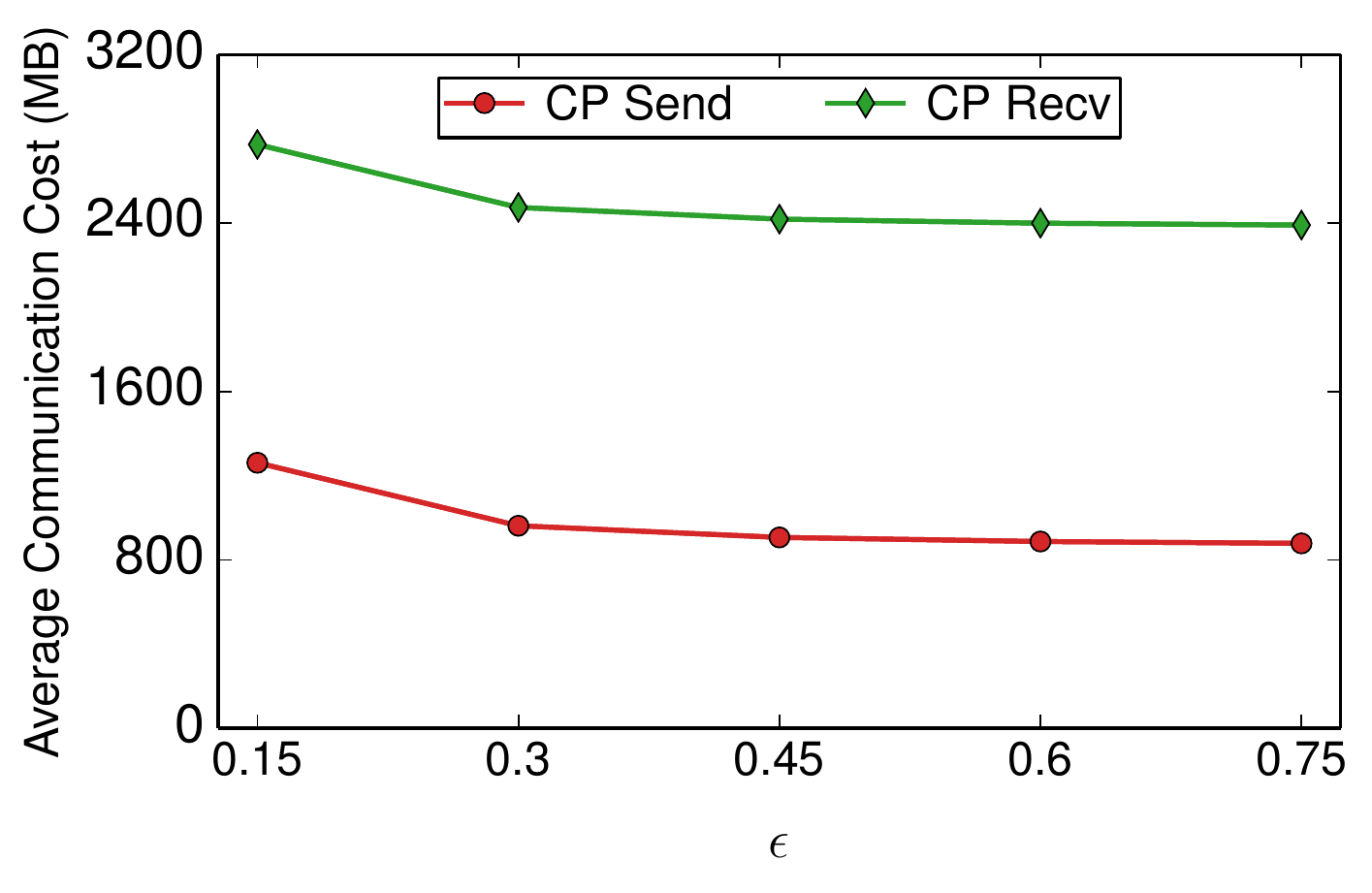}
	\end{minipage}
	\hfill
	\begin{minipage}[b]{0.45\textwidth}
		\centering
		\includegraphics[width=0.95\linewidth]{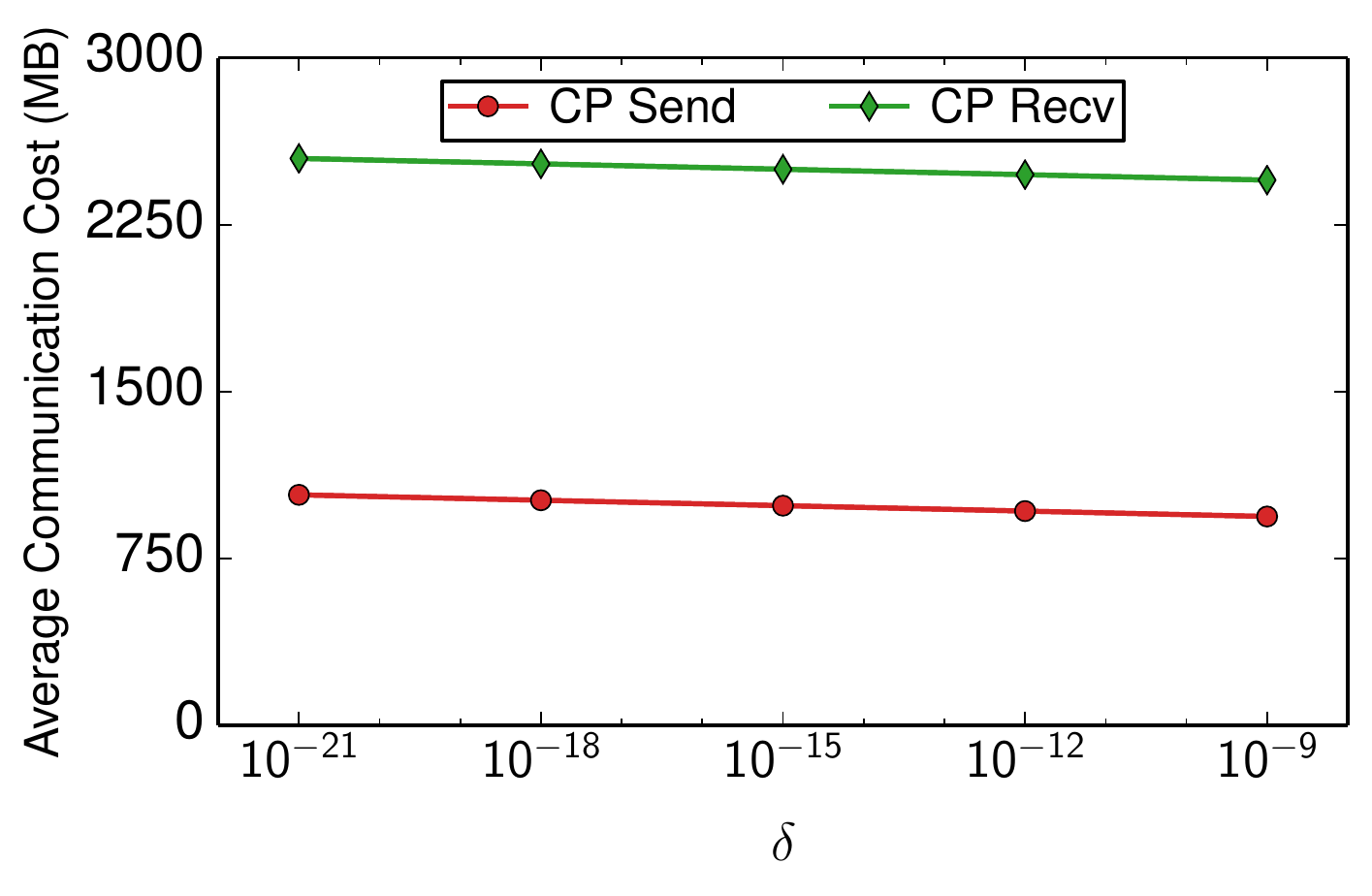}
	\end{minipage}
	\vspace{-7pt}
	\caption{The communication cost incurred by the \cps varying $\epsilon$ ({\em left}) and $\delta$ ({\em right}).}
	\label{fig:eps-del-psc}
\end{figure}
The inbound and outbound communication costs for each \cp increase {\em almost} linearly with the number of \cps. Recall from Section~\ref{sec:consensus} that each \cp
participates in $O(\numcompute)$ broadcasts, $O(\numcompute^2)$ {\em echoes}, and $O(\numcompute^3)$ {\em heartbeats} in the optimized heartbeat
version of the Dolev-Strong protocol. Although the total communication cost of the \cps is $O(\numcompute^3)$ (\ie cubic in $\numcompute$),
the communication cost for the echoes ($32$~bytes) and heartbeats (${\sim}100$~bytes) is significantly less than that for the broadcasts (which involve
transmitting zero-knowledge proofs and ElGamal ciphertexts for all counters and noise). Therefore, the communication cost per \cp roughly increases by a constant
factor (${\sim}240$~MB) as we increase $\numcompute$.

%

We rerun \Sys with different numbers of \dps. Figure~\ref{fig:comm}
({\it right}) shows that varying the number of \dps has no effect on the
average communication cost for the \dps and the average outbound communication cost for the \cps. This is because the \dps invariably
send $b$ ElGamal encrypted blinds, zero-knowledge proofs and plaintext counters to each \cp and receive a copy of the signed joint public key
from each \cp. Likewise, the number of ElGamal ciphertexts and proofs transmitted by the \cps also remains the same across these
experiments. However, there is a large linear increase in the inbound communication cost of the \cps as each \cp receives $b$ blinds,
zero-knowledge proofs and plaintext counters from each \dparty. The variation in the inbound communication cost per
\cp is approximately $1.5$~GB.

Next, we run \Sys with different values of $\epsilon$ to determine how the choice of privacy parameter affects the communication costs.
Figure~\ref{fig:eps-del-psc} {\it (left)} shows that the average communication costs for the \cps decrease when $\epsilon$ is increased. The communication
costs for the \cps are reasonable even for a low value of $\epsilon = 0.15$. On average, each \cp requires a bandwidth of at most
$1.3$~GB for sending and $2.8$~GB for receiving (less than $800$~KBps, if performed once an hour).

Lastly, we consider how the choice of $\delta$ affects the
communication cost. Figure~\ref{fig:eps-del-psc} {\it (right)} shows that the average communication
costs for the \cps decrease as $\delta$ increases. The communication costs for the \cps are reasonable even for a low value of $\delta = 10^{-21}$.
On average, each \cp requires at most $1$~GB for sending and $2.5$~GB for receiving (less than $700$~KBps, if performed once an hour).

In summary, we find that \Sys incurs reasonable communication overhead: the costs to \dps are moderate, and, while slightly higher for \cps, they remain practical.

\Paragraph{Overall runtime}
We explore the overall running time (including the time required for network communication) of \Sys by varying the number of bins
$\numcounters$ and the number of \cps $\numcompute$. The average overall running as a function of $\numcounters$ and
$\numcompute$ is plotted in Figure~\ref{fig:time-psc}.

\begin{figure}[!tbp]
	\centering
	\begin{minipage}[b]{0.45\textwidth}
		\centering
		\includegraphics[width=0.95\linewidth]{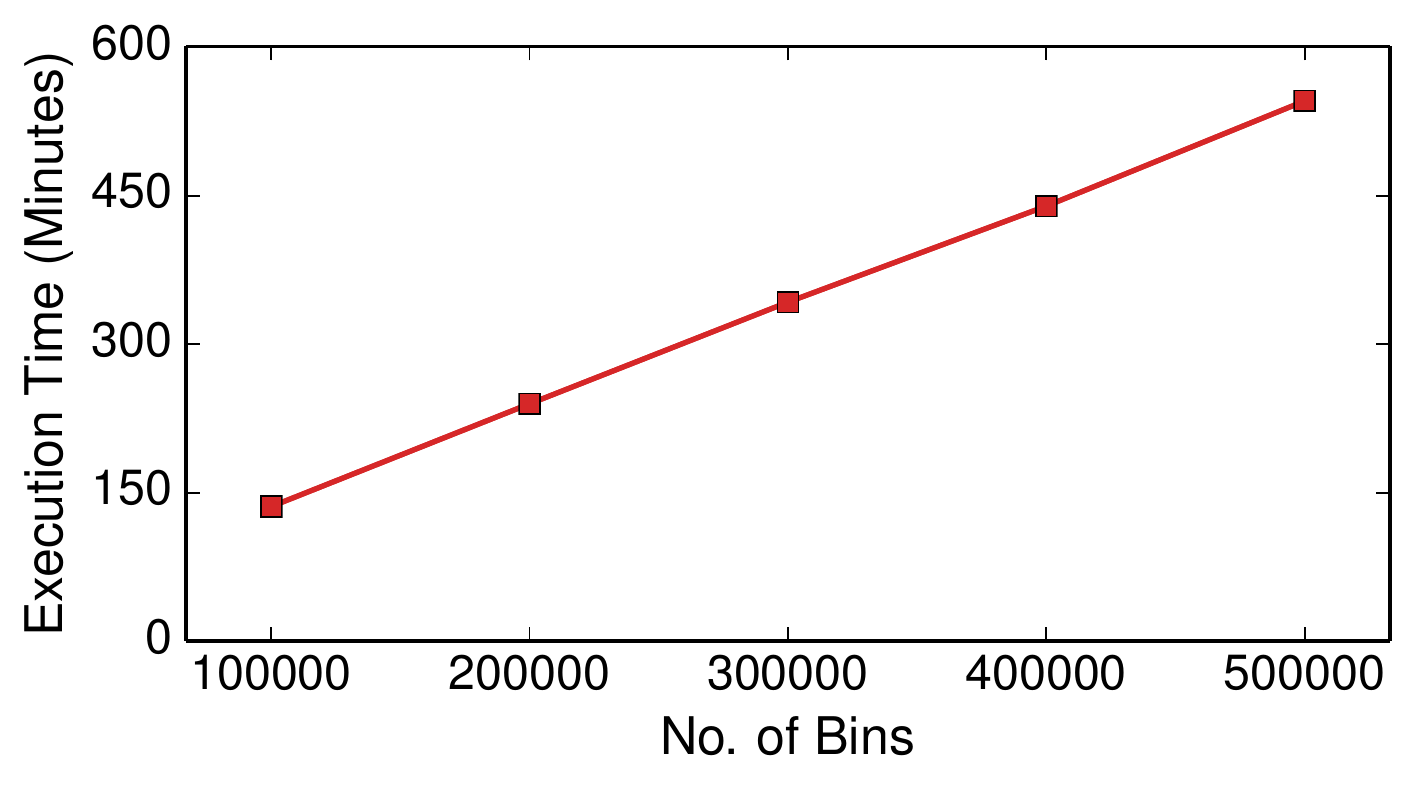}
	\end{minipage}
	\hfill
	\begin{minipage}[b]{0.45\textwidth}
		\centering
		\includegraphics[width=0.95\linewidth]{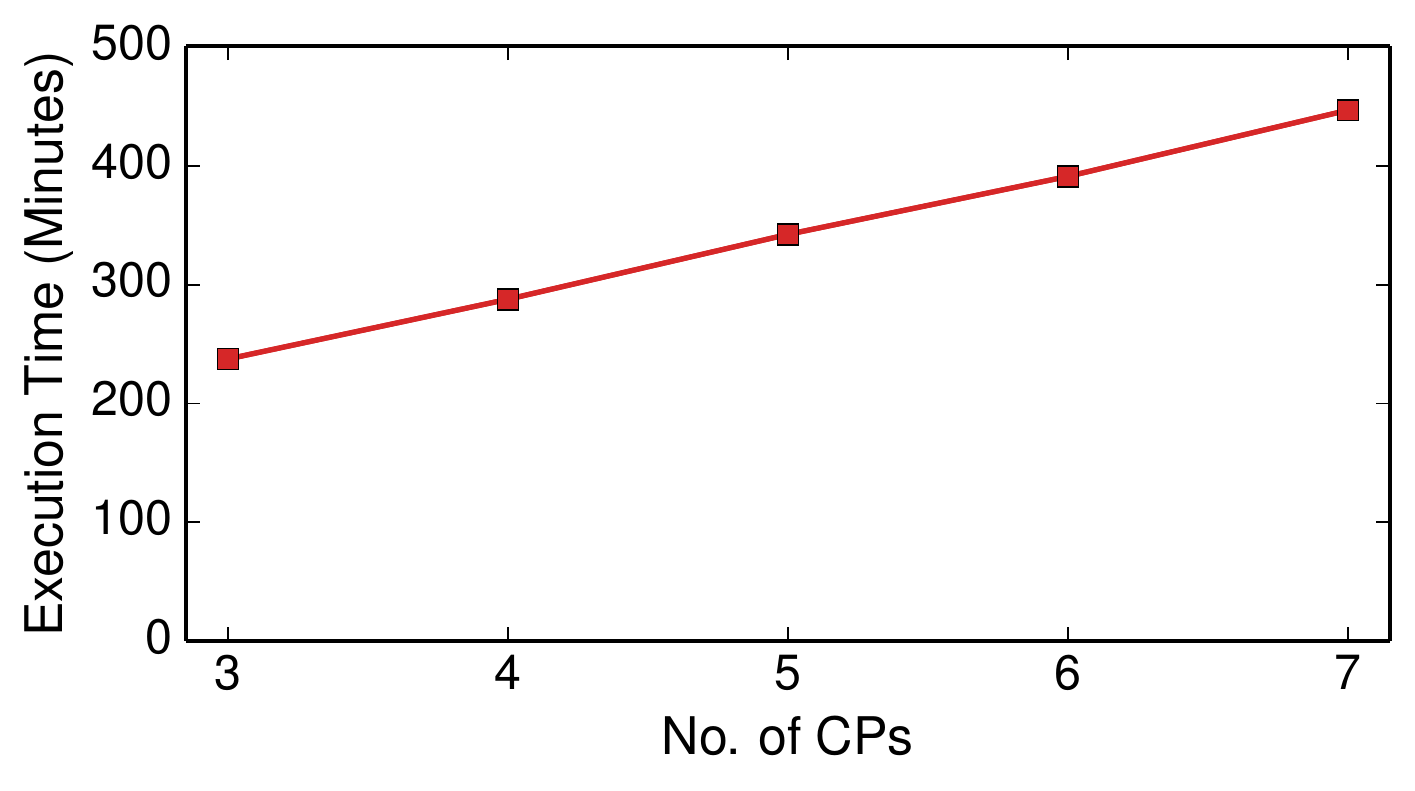}
	\end{minipage}
	\vspace{-7pt}
	\caption{The overall execution time as a function of the number of bins ({\em left}) and the number of \cps ({\em right}).}
	\label{fig:time-psc}
\end{figure}

\begin{figure*}[t!]
	\centering
	\begin{minipage}[t]{0.9\linewidth}
		\centering
		\includegraphics[width=0.65\linewidth, height=0.25\textwidth]{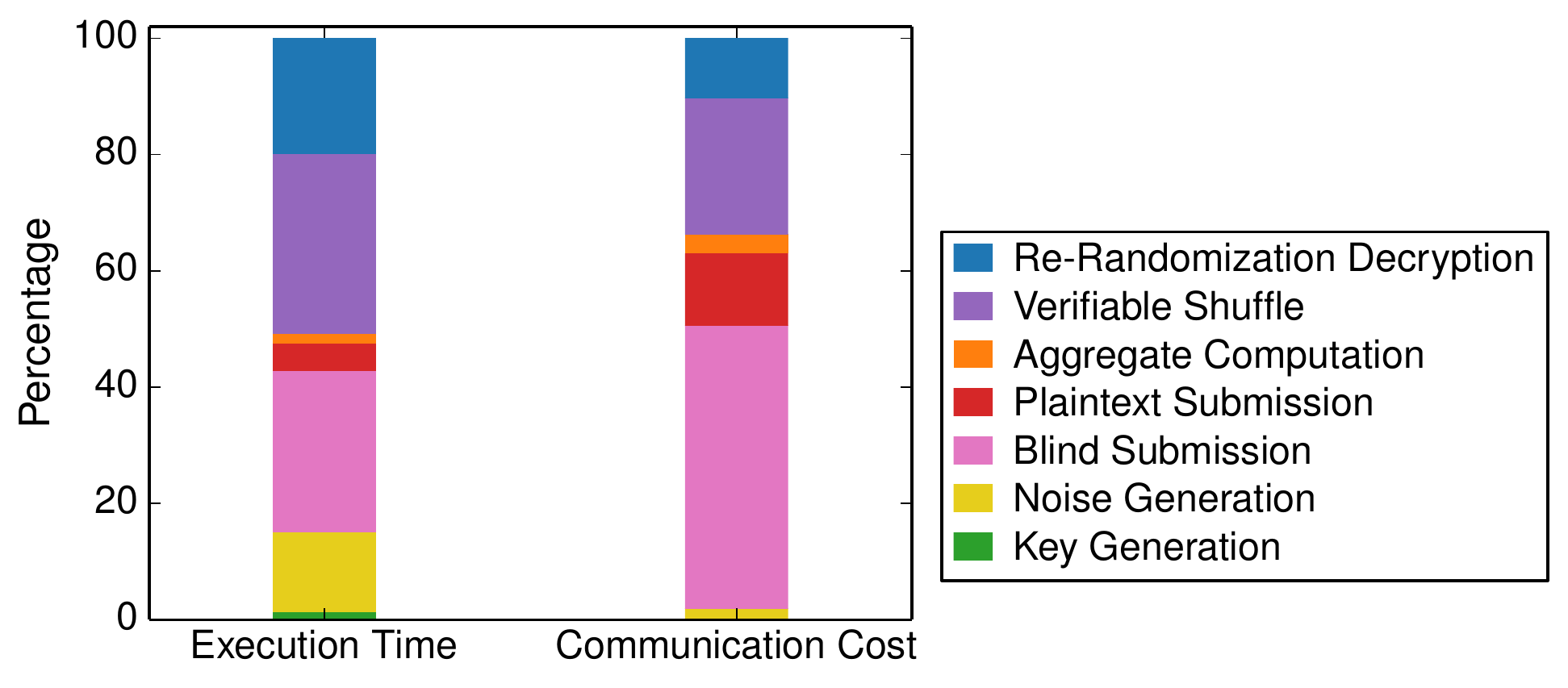}
	\end{minipage}
	\vspace{-7pt}
	\caption{The overall execution time (\em right) and the
          communication cost (\em left) incurred by the \cps for
          different operations, varying the number of
          bins.}
	\label{fig:time-benchmark}
\end{figure*}

We first consider how the number of bins $\numcounters$ affects the computation cost (note that more data must be communicated
using the optimized heartbeat DS protocol as $\numcounters$ increases). The overall runtime is moderate. It takes approximately
$9$~hr $6$~min even for an experiment with $500,000$ bins.

We next consider how the number of \cps $\numcompute$ affects the computation cost. The computation cost increases with the
\cps at a slower pace than with the bins. Even up to seven \cps, the average computation cost for each \cp is fairly modest --- approximately
$7$~hr $27$~min.

\Paragraph{Microbenchmarks} To better understand the computational and communication overhead of \Sys, we measure the execution time
(including the network latency) and communication cost for different operations (using the default values for all system parameters). The results
are plotted in Figure~\ref{fig:time-benchmark}. We observe that the time taken for re-randomization decryption, blind submission, and
verifiable shuffles account for $19.8\%$, $27.8\%$, and $31\%$ of the
total runtime of the protocol respectively.  We note that the time taken for blind submission
is the total time taken across all $\numinput$ ($= 30$) \dps (recall that our implementation is currently single-threaded with a single \dparty program
emulating all the \dps). Therefore, as expected the verifiable shuffle followed by the re-randomization decryption are the most time-consuming operations.

We find that the communication cost for re-randomization decryption, plaintext submission, verifiable shuffles, and blind submission account for
$10.3\%$, $12.4\%$, $23.4\%$, and $48.7\%$ of the overall communication cost of the protocol. Here again, the communication cost for the blind and
plaintext submission is the total cost across all $\numinput$ ($= 30$)
\dps. Therefore, as expected, the verifiable shuffle and the re-randomization decryption are the
most communication-intense operations.

\section{Private Set-Intersection Cardinality} \label{sec:intersection}

We outline how our protocol can be modified to calculate a private
set-intersection rather than a private set-union. To turn the homomorphic
operation on ciphertexts from performing ``or'' to performing ''and'', we invert
the logical bit representation by encoding a logical $0$ with any nonzero value
and encoding a logical $1$ with a zero. However, under this change of
representation it would not be possible to record an observation obliviously
(i.e. flip a logical $0$ to a $1$) without further protocol changes, since to
record an observation each \dparty would need to know the original blind it
submitted in order to provide that blind's negation. A \dparty cannot store in
plaintext both the blind and the current counter value, or it would reveal its
counter value to the adversary upon adaptive corruption. We can solve this
problem by encrypting the counter value, but simply submitting the encrypted
counter would enable a related-ciphertext attack. Moreover, the same type of
proof arguments would not work as the simulator would not be able to extract the
adversary's counter values during input submission. We solve these problems by
adding a non-interactive Zero Knowledge Proof of Knowledge for each encrypted
counter, which enables extraction during the simulation proof. We note, however,
that while such ZKPs exist in the UC model, their constructions either require
stronger setup assumptions than a CRS~\cite{lindellsigmaprotocols} or are not
structured similarly to the many-verifier interactive $\Sigma$-protocols we use
for the protocol for set-union~\cite{UC-NIZKP}. We give some details of this
construction and then discuss resulting changes in the protocol's security and
efficiency.

\textbf{Construction}. To perform set-intersection cardinality, we modify the
protocol (Sec.~\ref{sec:protocol}) as follows:

\begin{enumerate}
\item Blinds are submitted as before: for a counter $t$, $\dpmath{j}$
submits $E(b_{tj})$ and the corresponding proof of knowledge of the plaintext.
\dps save two values: (1) $-b_{tj}$ as a ``zero'' value; and (2) the pair
($E_y(r)$, $P(E_y(r))$) and as a ``submission'' value, where $E_y(r)$ is an
encryption of a random value $r$ and $P(E_y(r))$ is a non-interactive
zero-knowledge proof of knowledge of $r$. \item To record an observation for
counter $t$, $\dpmath{j}$ replaces its submission value with $(E(-b_{tj})$,
$P(E(-b_{tj})))$ and erases from memory information about this operation. \item
When the collection period is complete, each \dparty submits for every counter
its submission value for that counter. The protocol proceeds identically as
before, but in the final result we count the number of plaintext counters that
are equal to $0$.
\end{enumerate}

\textbf{Security}. Upon corrupting a $\dpmath{j}$, an adaptive adversary now
learns the original blind values for all counters. This
means adaptive adversaries are able not only to set the submission of corrupt
\dps to $1$, as before, but may also force this value to be $0$ as well
regardless of previous observations by that adaptively corrupted \dparty. For
set-union, adaptive adversaries could not revert previously recorded
observations. We note however, that in both cases when the adversary submits
malicious values on behalf of adaptively corrupted \dps, it knows nothing about
the previous value for each counter. Thus the intersection protocol preserves
the most important adaptive security property: that no private data is revealed
upon an adaptive corruption.

\textbf{Efficiency}. With respect to efficiency, we expect the additional computation and communication costs to be substantial but small enough that the protocol remains usable in practice. More precisely, the adjustments require each \dparty to perform a public-key encryption and non-interactive zero-knowledge proof to record an observation, whereas for set union, they simply randomize a plaintext counter. For communication cost, using the Fiat-Shamir heuristic and assuming group elements and plaintext elements of $\mathbb{Z}_q$ are roughly equal size, the total cost during input submission should increase from one plaintext counter (for each counter held by each \dparty) to $4$ (two each for the ciphertext and proof) with other phases of the protocol unchanged.

\section{Conclusion and Future Work}
We present the \Sys protocols for securely computing the cardinality of set union and set
intersection across multiple parties. \Sys is secure against an active adversary controlling all but
one honest party, provides adaptive security for the Data Parties, makes the adversary accountable
for disrupting the protocol, and produces differentially-private outputs. Our implementation of PSC is available for download at \url{https://github.com/GUSecLab/psc}. Future work includes
developing the sketch-based approach~\cite{diffPrivateSketches} by modifying generic MPC protocols to provide all our security properties, applying efficiency improvements
in zero-knowledge proofs~\cite{shuffle-bayer-eurocrypt2012,bunz2018bulletproofs}, and investigating batching broadcasts~\cite{mpc-identifiable-abort} to improve performance while mainting our termination guarantees.

\begin{acks}
This work has been supported by the National Science Foundation under grant \mbox{CNS-1718498}, the
Canada Research Chairs program, the Office of Naval Research, and the
Callahan Family Professor endowment. The  views expressed in this
work are those of the authors and do not necessarily reflect those of
the funding agencies or organizations.
\end{acks}

\bibliographystyle{ACM-Reference-Format}
\bibliography{bibs/micah-long,bibs/others}

\ifappendix
\appendix
\appendix
\section{Commitments and Coins}
\label{sec:appendix}
For reference, in the section we repeat the one-to-many commitment functionality given in~\cite{canetti2002universally} (Figure~\ref{fig:psc-mcom}), present an accountable coin-flipping functionality (Figure~\ref{fig:psc-coin-func}), and give a protocol that UC-realizes the coin flipping functionality and the corresponding proof (Figure~\ref{fig:psc-coin-protocol}).

\begin{figure}[h]
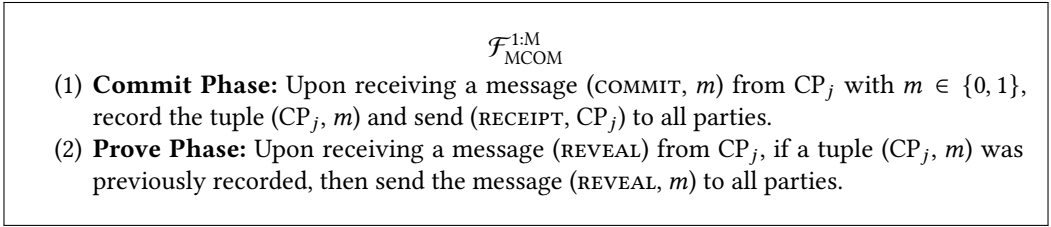

\begin{framed}

	\center{\f[1:M]{MCOM}}\\
\begin{enumerate}
	\item \textbf{Commit Phase:} Upon receiving a message (\textsc{commit}, $m$) from $\cpmath{j}$ with $m \in \{0,1\}$, record the tuple ($\cpmath{j}$, $m$) and send (\textsc{receipt, $\cpmath{j}$}) to all parties.
	\item \textbf{Prove Phase:} Upon receiving a message (\textsc{reveal}) from $\cpmath{j}$, if a tuple ($\cpmath{j}$, $m$) was previously recorded, then send the message (\textsc{reveal}, $m$) to all parties.
\end{enumerate}
\end{framed}
\caption{\f[1:M]{MCOM}, the ideal functionality for one-to-many commitments.}
\label{fig:psc-mcom}
\end{figure}

\begin{figure}[h]
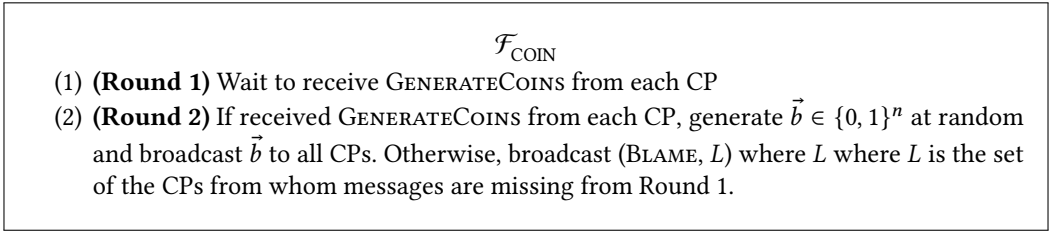

\begin{framed}

\center{\f{COIN}}\\
\begin{enumerate}
	\item \textbf{(Round 1)} Wait to receive \textsc{GenerateCoins} from each \cp
	\item \textbf{(Round 2)} If received \textsc{GenerateCoins} from each \cp, generate $\vec{b} \in \{0,1\}^n$ at random and broadcast $\vec{b}$ to all CPs. Otherwise, broadcast (\textsc{Blame}, $L$) where $L$ where $L$ is the set of the \cps from whom messages are missing from Round 1.
\end{enumerate}
\end{framed}
\caption{\f{COIN}, the ideal coin flipping functionality.}
\label{fig:psc-coin-func}
\end{figure}

\begin{figure}[h]
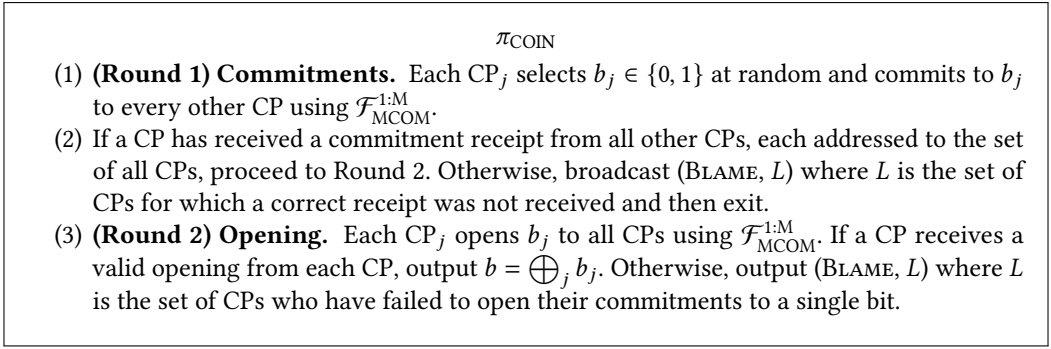

\begin{framed}

	\center{$\pi_{\textsc{COIN}}$}\\
\begin{enumerate}
	\item \textbf{(Round 1) Commitments. }  Each $\cpmath{j}$ selects $b_j \in \{0,1\}$ at random and commits to $b_j$ to every other \cp using \f[1:M]{MCOM}.
	\item If a \cp has received a commitment receipt from all other \cps, each addressed to the set of all \cps, proceed to Round 2. Otherwise, broadcast (\textsc{Blame}, $L$) where $L$ is the set of \cps for which a correct receipt was not received and then exit.
	\item \textbf{(Round 2) Opening. } Each $\cpmath{j}$ opens $b_j$ to all \cps using \f[1:M]{MCOM}. If a \cp receives a valid opening from each \cp, output $b = \bigoplus_j b_j$. Otherwise, output (\textsc{Blame}, $L$) where $L$ is the set of \cps who have failed to open their commitments to a single bit.
\end{enumerate}
\end{framed}
\caption{$\pi_{\textsc{COIN}}$, an accountable coin-flipping protocol}
\label{fig:psc-coin-protocol}
\end{figure}
\begin{theorem} The protocol $\pi_{\textsc{COIN}}$ UC-realizes the \f{COIN} functionality in the static-corruption, synchronous, (\f{BC}, \f[1:M]{MCOM})-hybrid model if there is at least one honest \cp. \end{theorem}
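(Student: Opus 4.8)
The plan is to build an ideal-model simulator $S$ such that no environment $Z$ can distinguish the ideal execution with \f{COIN} and $S$ from the real execution of $\pi_{\textsc{COIN}}$; in fact I expect the two views to be \emph{identically} distributed, since every primitive in the hybrid model is ideal. $S$ runs an internal copy of the dummy adversary $A$ together with internal copies of \f[1:M]{MCOM} and \f{BC}. Because $S$ plays \f[1:M]{MCOM}, it \emph{extracts} the bit $b_j$ of each corrupted \cp the moment that \cp sends (\textsc{commit}, $b_j$), and it may \emph{equivocate} an honest \cp's commitment, opening it later to whatever it wishes. The honest \cps carry no private input, so the only things $S$ must reproduce are: (i) the simulated honest commit/open messages (which leak nothing in the real protocol either); (ii) that on a successful run the output equals the coin $\vec b$ drawn by \f{COIN}; and (iii) that on an abort the blamed set matches the set \f{COIN} blames.

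Concretely $S$ runs the protocol round by round for $Z$'s benefit. In the commitment round it emits receipts for the honest \cps' commitments and records each corrupted \cp's committed bit, letting $L$ be the set of corrupted \cps whose commitment is missing or malformed; crucially $S$ does \emph{not} yet send \textsc{GenerateCoins} to \f{COIN} for any corrupted \cp. Exploiting that the synchronous model here delivers a round's messages only at its end — so a corrupted \cp's opening message can depend only on the commitment round, not on the honest openings — $S$ next reads off $A$'s opening-round messages and, by binding of \f[1:M]{MCOM}, learns which corrupted \cps open to their committed bit; it adds the non-openers to $L$. Now $S$ decides: if $L=\emptyset$ it sends \textsc{GenerateCoins} to \f{COIN} on behalf of every corrupted \cp, receives $\vec b$, and opens the commitment of some honest \cp, say $\cpmath{h}$ (which exists by hypothesis), to the value making the XOR of all committed bits equal $\vec b$, choosing the other honest bits uniformly; if $L\neq\emptyset$ it sends \textsc{GenerateCoins} only for the corrupted \cps outside $L$, so \f{COIN} broadcasts (\textsc{Blame}, $L$), and it opens the honest commitments to independent uniform bits. (If $Z$ never activates some honest \cp, that \cp is blamed in both worlds, so no special case is needed.) In all cases $S$ relays the resulting transcript and any (\textsc{Blame}, $\cdot$) broadcasts back to $A$/$Z$.

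For indistinguishability I would compare the two executions directly. Through the end of the commitment round the transcripts are perfectly identical: honest commitments produce only receipts, and the corrupted commitments are chosen by $Z$ itself. In the opening round, the corrupted \cps' messages are a function of the commitment-round transcript alone in both worlds, hence independent of the honest bits and produced identically; on an abort the honest openings are uniform and independent in both worlds and the output is (\textsc{Blame}, $L$) for the same $L$; on a successful run the real output $\bigoplus_j b_j$ is uniform and independent of the corrupted bits, since $\cpmath{h}$'s bit is uniform and information-theoretically hidden until it is opened, whereas the ideal output is the uniform $\vec b$ and $S$'s opened honest bits (one pinned so the XOR hits $\vec b$, the rest uniform) have exactly the joint law of uniform honest bits conditioned on their XOR being a fresh uniform value — identical to the real world. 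Hence the ideal and real executions induce the same distribution on $Z$'s view, so $\pi_{\textsc{COIN}}$ UC-realizes \f{COIN}.

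The only genuinely delicate point — the rest is bookkeeping about blame messages — is the scheduling: \f{COIN} commits irrevocably to $\vec b$ as soon as it has collected \textsc{GenerateCoins} from every \cp, so $S$ must know whether the real run will abort \emph{before} it queries \f{COIN}, yet it also needs $\vec b$ \emph{before} it can open the honest commitments consistently. I reconcile these by using the non-rushing round structure: $S$ extracts $A$'s opening-round decisions first, then chooses which \textsc{GenerateCoins} messages to withhold, then (if none) equivocates the honest openings to the coin \f{COIN} returns, all before the opening round's messages are delivered. Checking that this internal reordering leaves $Z$'s view unchanged is the step that needs the most care.
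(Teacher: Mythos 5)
Your proposal is correct and follows the same simulator-based route as the paper: run $A$ internally, extract each corrupted \cp's bit from the simulated \f[1:M]{MCOM}, forward \textsc{GenerateCoins} to \f{COIN} only for corrupted \cps that commit and validly open, and match the blame sets. You are in fact more careful than the paper's own terse argument, which only asserts that \f{COIN}'s output is \emph{identically distributed} to $\oplus_j b_j$ in the simulation; your equivocation of one honest opening so that the simulated XOR \emph{equals} the coin returned by \f{COIN}, together with the explicit ordering (extract the adversary's opening decisions, then query \f{COIN}, then open the honest commitments), supplies exactly the consistency detail the paper glosses over.
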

\begin{proof}
	We define a simulator $S$ as follows: $S$ runs a copy of the adversary $A$, and simulates the honest parties and ideal functionalities honestly. If a malicious party commits during the commit phase and opens it to an element of $\{0,1\}$ in the opening phase, S sends \textsc{GenerateCoin} to \f{COIN} on behalf of that malicious party. We observe that every dishonest party $P$ either completes a valid opening in the second phase of the protocol, in which case \f{COIN} has received \textsc{GenerateCoin} from $P$ (through $S$), or $P$ has failed to produce a valid opening, in which it is blamed by all honest (simulated) \cps. Finally, in the case that all parties open commitments and \f{COIN} outputs a uniformly random bit, this bit is identically distributed to $\oplus_j b_j$ in the simulation, since $b_j$ is random for honest $j$, and dishonest \cps must commit to their bits without any information about the honest parties' selections.
\end{proof}
We will need a vector of $n$ coins, so we write \f{N-COIN} as $n$ independent copies of \f{COIN}, which can be joined into a single protocol using the JUC theorem from~\cite{UC-joint}. Similarly, we will require commitments to strings of bits, so we write \f[1:M]{S-MCOM} to indicate the protocol where \f[1:M]{MCOM} is invoked multiple times to produce a commitment to a bit string rather than a single bit.

\else
\fi

\end{document}